\documentclass[11pt]{article}
\PassOptionsToPackage{unicode}{hyperref}
\PassOptionsToPackage{hyperfootnotes=false}{hyperref}
\PassOptionsToPackage{hyphens}{url}
\PassOptionsToPackage{dvipsnames,svgnames,x11names}{xcolor}

\usepackage{amsmath,amssymb,bm}
\usepackage{dsfont}
\usepackage{iftex}
\ifPDFTeX
  \usepackage[T1]{fontenc}
  \usepackage[utf8]{inputenc}
  \usepackage{textcomp}
\else
  \usepackage{unicode-math}
  \defaultfontfeatures{Scale=MatchLowercase}
  \defaultfontfeatures[\rmfamily]{Ligatures=TeX,Scale=1}
\fi
\usepackage{lmodern}
\IfFileExists{microtype.sty}{%
  \usepackage[]{microtype}
  \UseMicrotypeSet[protrusion]{basicmath}
}{}
\IfFileExists{upquote.sty}{\usepackage{upquote}}{}
\IfFileExists{xurl.sty}{\usepackage{xurl}}{}
\usepackage[a4paper,left=27mm,right=27mm,top=25mm,bottom=25mm]{geometry}

\usepackage{enumitem}
\usepackage{indentfirst}
\usepackage{longtable,booktabs,array}
\usepackage{calc}
\usepackage{graphicx}
\usepackage{xcolor}
\usepackage{amsthm}
\usepackage[numbers,sort&compress]{natbib}
\usepackage{bookmark}

\makeatletter
\def\maxwidth{\ifdim\Gin@nat@width>\linewidth\linewidth\else\Gin@nat@width\fi}
\def\maxheight{\ifdim\Gin@nat@height>\textheight\textheight\else\Gin@nat@height\fi}
\def\fps@figure{htbp}
\@ifpackageloaded{caption}{}{\usepackage{caption}}
\@ifpackageloaded{subcaption}{}{\usepackage{subcaption}}
\makeatother
\setkeys{Gin}{width=\maxwidth,height=\maxheight,keepaspectratio}
\hypersetup{
  colorlinks=true,
  linkcolor=Blue,
  filecolor=Maroon,
  citecolor=Blue,
  urlcolor=Blue
}

\DeclareMathOperator{\DC}{DC}
\DeclareMathOperator{\cov}{Cov}
\DeclareMathOperator{\Var}{Var}

\DeclareMathOperator{\vecop}{vec}

\newcommand{\tr}{\mathrm{tr}}

\newcommand{\bmM}{\mathbf{M}}
\newcommand{\bmf}{\mathbf{f}}
\newcommand{\bmx}{\mathbf{x}}
\newcommand{\bmy}{\mathbf{y}}
\newcommand{\bmX}{\mathbf{X}}
\newcommand{\bmY}{\mathbf{Y}}
\newcommand{\bmU}{\mathbf{U}}
\newcommand{\bmu}{\mathbf{u}}

\newcommand{\bmv}{\mathbf{v}}
\newcommand{\bmz}{\mathbf{z}}
\newcommand{\bmZ}{\mathbf{Z}}

\newcommand{\bmR}{\mathbf{R}}

\newcommand{\bmk}{\mathbf{k}}
\newcommand{\bmK}{\mathbf{K}}
\newcommand{\bmSigma}{\bm{\Sigma}}

\newcommand{\bmLambda}{\bm{\Lambda}}
\newcommand{\bmtheta}{\bm{\theta}}
\newcommand{\bmnu}{\bm{\nu}}
\newcommand{\bmvartheta}{\bm{\vartheta}}
\newcommand{\bmzero}{\mathbf{0}}

\newcommand{\bma}{\mathbf{a}}
\newcommand{\bmA}{\mathbf{A}}
\newcommand{\bmI}{\mathbf{I}}

\newcommand{\bmSx}{\mathbf{S}_x}
\newcommand{\bmSy}{\mathbf{S}_y}

\newcommand{\bmF}{\mathbf{F}}
\newcommand{\bmeta}{\bm{\eta}}

\newcommand{\bmphi}{\bm{\phi}}

\newcommand{\bmPi}{\bm{\Pi}}
\newcommand{\bmL}{\mathbf{L}}
\newcommand{\obsset}{\mathcal{I}_{\mathrm{obs}}}
\newcommand{\mgp}{\mathcal{MGP}}
\newcommand{\mn}{\mathcal{MN}}

\newtheorem{definition}{Definition}[section]

\newtheorem{theorem}{Theorem}[section]
\newtheorem{proposition}{Proposition}[section]
\newtheorem{lemma}{Lemma}[section]
\newtheorem{corollary}{Corollary}[section]
\newtheorem{remark}{Remark}[section]
\numberwithin{equation}{section}
\numberwithin{figure}{section}
\numberwithin{table}{section}

\usepackage{setspace}
\hypersetup{
  pdftitle={When is multivariate kriging worthwhile? A design-geometry analysis of heterotopic multi-output Gaussian processes},
  pdfauthor={Zexun Chen, Jun Fan, Kuo Wang},
  pdfkeywords={Multivariate statistics, kriging, multi-output Gaussian process, Design of experiments, heterotopic sampling}
}

\title{When is multivariate kriging worthwhile? A design-geometry analysis of heterotopic multi-output Gaussian processes}

\author{
Zexun Chen\thanks{Corresponding author. Email: \texttt{Zexun.Chen@ed.ac.uk}}\\
University of Edinburgh Business School, University of Edinburgh, United Kingdom
\and
Jun Fan\\
Faculty of Science and Engineering, University of Nottingham Ningbo China, China
\and
Kuo Wang\\
College of Data Science, Jiaxing University, China
}

\date{}
\begin{document}

\maketitle

\begin{abstract}
Simulation experiments, multi-fidelity computer models and monitoring networks often produce several related outputs observed at different input locations, a sampling pattern known as heterotopic. 
Whether a joint multivariate kriging metamodel then predicts better than separate univariate metamodels has remained unresolved: careful simulation comparisons on common designs report little or no benefit from multivariate kriging, yet the multi-fidelity and geostatistical literatures are built on the premise that auxiliary outputs help.
We show that, for separable multi-output Gaussian processes, the answer is governed by the geometry of the output-specific designs. 
We introduce model-free diagnostics that can be computed before fitting, namely directed coverage, directed proximity and borrowing potential indices. 
We derive an exact identity for the oracle prediction gain of joint modelling and bound this gain using local geometry under radial functions. 
We further prove that the estimability of cross-output dependence is controlled by a kernel-weighted cross-design interaction mass, and extend this result component by component to the linear model of coregionalisation.
One consequence is that interleaved and separated designs are not statistically equivalent, even when both have zero overlap.
We combine these results into a first-order net benefit criterion for deciding when joint modelling is worthwhile. Controlled synthetic experiments, an M/M/1 queueing illustration and a case study of a multi-pollutant monitoring network turn this criterion into practical guidance.

\end{abstract}

\noindent\textbf{Keywords:} Multivariate statistics; Kriging; Multi-output Gaussian process; Design of experiments; Heterotopic sampling

\bigskip
\section{Introduction}
\label{sec:intro}

Many systems studied in operational research produce several related outputs at once. 
A discrete-event simulation reports throughput, waiting time and utilisation. 
A multi-fidelity experiment couples a cheap, approximate solver with an expensive, accurate one. 
An environmental or infrastructure monitoring programme operates separate sensor networks for related quantities. 
Kriging, or Gaussian process (GP) regression, is now a standard metamodelling tool for such systems \citep{kleijnen2009kriging,ankenman2010stochastic,kleijnen2017regression,erickson2018comparison}, and a natural question follows immediately: should the outputs be metamodelled jointly, with a multivariate kriging model that borrows strength across outputs, or separately, with independent univariate models?

The empirical record on this question is unsettled. In a careful Monte Carlo study satisfying all kriging assumptions, \citet{kleijnen2014multivariate} found that simple univariate kriging typically attains a smaller mean squared error than multivariate kriging, even when the true cross-output dependence is known. 
\citet{fricker2013multivariate} report similarly mixed experience with multivariate emulators. Yet multivariate spatial statistics and machine learning offer an extensive literature in which joint models do deliver substantial gains \citep{goulard1992linear,bonilla2007multi,alvarez2012kernels,wackernagel2013multivariate,genton2015cross}, and the multi-fidelity literature is built on the premise that a cheap auxiliary output improves prediction of an expensive one \citep{kennedy2000predicting,forrester2007multi,legratiet2014recursive}. 
This paper is to show that, within the separable multivariate kriging setting, these apparently conflicting findings are partly reconciled by the geometry of the output-specific designs, a variable that comparisons rarely control \footnote{Nonseparable and autoregressive multi-fidelity models introduce an additional cross-structure mechanism, but our focus is the design-geometry mechanism that is already present in the separable case.}. 

The key distinction is between \emph{isotopic} sampling, in which all outputs are observed on a common design, and \emph{heterotopic} sampling, in which each output has its own input locations \citep{wackernagel2013multivariate}. Most simulation comparisons, including that of \citet{kleijnen2014multivariate}, are isotopic: every replication evaluates all responses at the same design points. Under the separable (proportional cross-covariance) models commonly used in practice, isotopic noise-free designs are exactly the regime in which joint prediction provably collapses to univariate prediction, a classical property known as autokrigeability \citep{wackernagel2013multivariate,chiles2012geostatistics}. 
Whatever oracle gain survives is then confined to a noise-filtering channel, while joint fitting still pays the full cost of estimating the cross-output dependence. 
Heterotopic sampling, by contrast, is ubiquitous whenever outputs are costed differently or collected by different instruments. Multi-fidelity experiments deliberately run the cheap code on a dense design and the expensive code on a sparse one \citep{kennedy2000predicting,qian2009nested}. Legacy simulation archives accumulate evaluations of different responses at different inputs \citep{kaminski2015method}. 
Monitoring networks for related pollutants rarely share sites. It is in this heterotopic regime that joint metamodelling can pay, and the central question becomes \emph{when}.

Heterotopic designs are, however, not all alike, and the standard summary, the share of exactly co-located points, cannot tell them apart. 
Two designs with zero overlap may interleave, so that most locations of each output have nearby auxiliary support, or they may occupy disjoint regions of the input space. 
The overlap index assigns zero to both configurations, yet, as we show, they induce opposite inferential regimes for both prediction and dependence estimation. 
What is missing is a theory that links heterotopic design geometry simultaneously to the predictive value of borrowing and to the estimability of the cross-output dependence that borrowing requires. This paper provides such a theory, together with diagnostics that can be evaluated before any model is fitted.

The contributions are as follows.
\begin{enumerate}[label=(\roman*),nosep]
\item We introduce model-free diagnostics, namely directed coverage, directed proximity and borrowing potential indices, that summarise heterotopic design geometry beyond exact overlap and can be computed before any model is fitted.
\item We derive an exact identity for the oracle prediction gain of joint modelling under a separable multi-output GP, showing that borrowing operates through a residual auxiliary channel, and we bound this gain by local cross-design geometry.
\item We prove that the estimability of cross-output dependence is governed by a kernel-weighted cross-design interaction mass, with an exact efficient information interpretation at independence and a component-wise extension to the linear model of coregionalisation.
\item We combine the oracle gain with a first-order estimation penalty into a local net benefit criterion and organise the diagnostics and criterion into an operational screening procedure.
\item We illustrate the theory through controlled synthetic experiments, an M/M/1 queueing simulation and a multi-pollutant monitoring network, showing where borrowing is available, where dependence is estimable and where joint metamodelling pays after estimation cost.
\end{enumerate}

The remainder of the paper is organised as follows. Section~\ref{sec:related} reviews related work in operational research and multivariate statistics. Section~\ref{sec:geometry} introduces the heterotopic setting and the geometric diagnostics. Sections~\ref{sec:oracle} and~\ref{sec:estimation} develop the theory, first for oracle prediction gain and then for the estimability of cross-output dependence. Section~\ref{sec:net_benefit} combines the two into the net benefit criterion, and Section~\ref{sec:screening} turns it into an operational screening procedure. Section~\ref{sec:experiments} reports the numerical experiments and the case study, and Section~\ref{sec:conclusions} concludes.
\section{Related work}
\label{sec:related}

\subsection{Kriging metamodelling in operational research}
\label{subsec:related_or}

Kriging entered the simulation literature as a flexible interpolator of expensive input--output behaviour and is now established as a core metamodelling tool. \citet{kleijnen2009kriging} and \citet{kleijnen2017regression} review the methodology and its experimental designs, \citet{ankenman2010stochastic} extend it to stochastic simulation with heterogeneous intrinsic noise, and \citet{erickson2018comparison} compare implementations. Within this stream, the design of the experiment is treated as a central consideration. \citet{crombecq2011efficient} develop space-filling and non-collapsing sequential designs, \citet{vanbeers2008customized} customise designs for random simulation, and \citet{chen2017sequential} adapt sequential designs to stochastic kriging. Kriging-based optimisation under noise is studied by \citet{jalali2017comparison}, and \citet{kaminski2015method} considers updating fitted metamodels as new evaluations accumulate. 

The design criteria in this stream are constructive. They produce a single design, or a nested hierarchy, with good space-filling, non-collapsing or projection properties. The diagnostics introduced below are instead evaluative and directional. They take the output-specific designs as given, possibly as legacy artefacts of separate data-collection processes, and quantify the relative geometry between them. Directionality is essential because borrowing is inherently asymmetric. A sparsely observed output can gain from a densely observed neighbour without the converse holding.

This literature is almost entirely univariate in its treatment of multiple responses. Each output is metamodelled separately, a choice supported empirically by \citet{kleijnen2014multivariate}. The present paper gives that choice a theoretical basis in the isotopic case and identifies the heterotopic design regimes in which it should be reversed.

\subsection{Multi-output and multi-fidelity GP models}
\label{subsec:related_mogp}

Joint GP models for several outputs have a long history under the names co-kriging and the linear model of coregionalisation in geostatistics \citep{goulard1992linear,wackernagel2013multivariate,genton2015cross} and multi-task or multi-output GP regression in machine learning \citep{bonilla2007multi,alvarez2011computationally,alvarez2012kernels,bonilla2019generic,liu2018remarks,chen2020multivariate,chen2023multivariate}. Nonseparable constructions allow output-specific smoothness \citep{gelfand2004nonstationary,majumdar2007convolved,gneiting2010matern,apanasovich2012matern,fricker2013multivariate,peruzzi2026insideout}. The multi-fidelity literature provides the closest operational research analogue to our setting: an autoregressive or co-kriging structure links a cheap low-fidelity code, evaluated on a dense design, to an expensive high-fidelity code evaluated on a sparse one \citep{kennedy2000predicting,forrester2007multi,legratiet2014recursive}, and nested designs are constructed expressly to control the cross-fidelity geometry \citep{qian2009nested}. 
What this literature lacks is a quantitative account of how the relative geometry of the output-specific designs governs both the achievable prediction gain and the estimability of the cross-output dependence parameters. We supply such an account for general, not necessarily nested, heterotopic configurations.

\subsection{When is joint modelling worthwhile?}
\label{subsec:related_when}

In multivariate geostatistics, the limits of co-kriging under proportional cross-covariances are classical: autokrigeability implies that, with isotopic noise-free data, co-kriging reduces to kriging output by output \citep{wackernagel2013multivariate,chiles2012geostatistics}, and practical guidance has long emphasised that secondary variables help most where the primary variable is undersampled \citep{myers1982matrix,verhoef1993multivariable,cressie2015statistics}. 
These insights, however, are qualitative, tied to specific two-variable configurations, and silent on the estimation side, namely whether the cross-output dependence can be learned from the available alignment at all.
On that side, the relevant benchmark is the Fisher information for covariance parameters \citep{coxreid1987,vandervaart1998}. 
Spatial sampling design for estimating covariance parameters has been studied for single-output processes \citep{zhu2005spatial,zhu2006spatial,muller2007collecting}, where the design is chosen to make variogram or kernel parameters well identified. We are not aware of analogous results for the cross-output dependence parameters of a multi-output process under heterotopic sampling, which is the case treated here.

The present paper treats the prediction side and the estimation side within one geometric framework, shows that they are controlled by shared geometric quantities, and converts the comparison into a first-order net benefit criterion in the spirit of asymptotic model comparison \citep{vandervaart1998}.
\section{Geometry of Heterotopic Designs}
\label{sec:model}
\label{sec:geometry}

We consider a multi-output problem over a common input domain \(\mathcal{X}\subseteq\mathbb{R}^{T}\). In the simulation context, \(\mathcal{X}\) is the space of design factors and each output is a response of interest. In the monitoring context, \(\mathcal{X}\) is geographical space and each output is a monitored quantity. For each output \(j\in\{1,\ldots,D\}\), let \(\mathcal{X}_j=\{\bmx^{(j)}_1,\ldots,\bmx^{(j)}_{N_j}\}\subseteq\mathcal{X}\) denote its \(N_j\) observed input locations, and let \(N_{\mathrm{tot}}:=\sum_{j=1}^{D}N_j\). The design is \emph{isotopic} if \(\mathcal{X}_1=\cdots=\mathcal{X}_D\), and \emph{heterotopic} otherwise \citep{wackernagel2013multivariate}.
In such problems, the value of joint modelling depends on both cross-output dependence and the geometric arrangement of the output-specific designs. A natural benchmark for that arrangement is the share of exact co-locations.

\begin{definition}[Overlap index]
\label{def:overlap_index}
For outputs \(p\) and \(q\), define the pairwise overlap index by
\begin{equation}
\omega_{pq}
:=
\frac{\lvert \mathcal{X}_p\cap\mathcal{X}_q\rvert}
{\min(N_p,N_q)}.
\label{eq:omega}
\end{equation}
A global average overlap summary is
\[
\omega
:=
\binom{D}{2}^{-1}\sum_{p<q}\omega_{pq}.
\]
\end{definition}

The overlap index is a useful benchmark, but it is intrinsically combinatorial because it records exact coincidences while ignoring nearby, non-coincident observations. 
Two design pairs can therefore share the same overlap summary yet imply different inferential regimes. We instead use diagnostics that retain distance information.

We represent heterotopy through distance-based summaries. In scattered-data approximation and kriging theory, local distance and spacing control interpolation quality \citep{stein1999interpolation,wendland2004scattered}, and the same logic applies here. 
The objects below describe cross-design coverage, directional asymmetry and local borrowing opportunity without reference to any stochastic model.

\subsection{Cross-design distances and local complementarity}
\label{subsec:cross_design_distances}

For outputs \(p\) and \(q\), and for a point \(\bmx\in\mathcal{X}_p\), define the \emph{cross-design nearest-neighbour distance}
\begin{equation}
\delta_{p\to q}(\bmx)
:=
\min_{\bmx'\in\mathcal{X}_q}\|\bmx-\bmx'\|.
\label{eq:cross_design_distance}
\end{equation}
The collection \(\{\delta_{p\to q}(\bmx_a^{(p)}):a=1,\ldots,N_p\}\) records how the design \(\mathcal{X}_q\) covers \(\mathcal{X}_p\). Small values indicate interleaving and large values indicate separation, and the construction is directional because \(\mathcal{X}_q\) need not cover \(\mathcal{X}_p\) in the same way that \(\mathcal{X}_p\) covers \(\mathcal{X}_q\).

For later scale comparisons, it is useful to define the within-design nearest-neighbour distance and its average:
\[
\begin{aligned}
s_p(\bmx)
&:= \min_{\bmx'\in\mathcal{X}_p\setminus\{\bmx\}}\|\bmx-\bmx'\|, &
\bar{s}_p
&:= \frac{1}{N_p}\sum_{a=1}^{N_p} s_p\!\bigl(\bmx_a^{(p)}\bigr).
\end{aligned}
\]
This quantity summarises the intrinsic resolution of \(\mathcal{X}_p\) and lets us interpret cross-design gaps relative to within-design spacing.

\subsection{Global measures of design coverage}
\label{subsec:global_design_coverage}

Local cross-design distances can be aggregated to describe design complementarity at the level of full observation sets. A first such object is the directed empirical coverage function.

\begin{definition}[Directed coverage function]
\label{def:coverage_function}
For outputs \(p\) and \(q\), define
\begin{equation}
\DC_{p\to q}(r)
:=
\frac{1}{N_p}\sum_{a=1}^{N_p}
\mathds{1}\!\left\{
\delta_{p\to q}\bigl(\bmx_a^{(p)}\bigr)\le r
\right\},
\qquad r\ge 0.
\label{eq:coverage_function}
\end{equation}
\end{definition}
\(\DC_{p\to q}(r)\) is the proportion of locations in \(\mathcal{X}_p\) that lie within distance \(r\) of \(\mathcal{X}_q\). It describes how quickly output \(q\) covers output \(p\). Equivalently, it is the empirical distribution function of the cross-design nearest-neighbour distances from \(\mathcal{X}_p\) to \(\mathcal{X}_q\), a design analogue of cross nearest-neighbour summaries used in spatial point pattern analysis.

\begin{definition}[Proximity metrics]
\label{def:directed_proximity}
\label{def:normalized_proximity}
\label{def:proximity_matrix}
For outputs \(p\) and \(q\), define
\begin{enumerate}[label=(\alph*)]
\item The \emph{directed proximity} and its symmetric version:
\[
\begin{aligned}
\pi_{p\to q}
&:= \frac{1}{N_p}\sum_{a=1}^{N_p}\delta_{p\to q}\bigl(\bmx_a^{(p)}\bigr), &
\pi_{pq}
&:= \frac{1}{2}\bigl(\pi_{p\to q}+\pi_{q\to p}\bigr).
\end{aligned}
\]

\item The \emph{normalised directed proximity}, which adjusts for sampling density:
\[
\begin{aligned}
\tilde{\pi}_{p\to q}
&:= \frac{\pi_{p\to q}}{\bar{s}_q}, &
\tilde{\pi}_{pq}
&:= \frac{1}{2}\bigl(\tilde{\pi}_{p\to q}+\tilde{\pi}_{q\to p}\bigr).
\end{aligned}
\]

\item The \emph{design proximity matrix}, which collects pairwise relations for \(D\) outputs:
\begin{equation}
\bmPi
:=
[\tilde{\pi}_{p\to q}]_{1\le p,q\le D},
\qquad
\tilde{\pi}_{p\to p}:=0.
\label{eq:proximity_matrix}
\end{equation}
\end{enumerate}
\end{definition}

The directed proximity \(\pi_{p\to q}\) is the average nearest-neighbour distance from \(\mathcal{X}_p\) to \(\mathcal{X}_q\), an averaged directed Hausdorff-type distance rather than a worst-case separation, and the pair \((\pi_{p\to q},\pi_{q\to p})\) captures directional asymmetry. The normalised version \(\tilde{\pi}_{p\to q}\) compares the typical cross-design gap with the internal spacing of \(\mathcal{X}_q\), giving a dimensionless measure of relative coverage in the role of a mesh-ratio normalisation from scattered-data approximation. The matrix \(\bmPi\) is nonnegative and generally asymmetric, and, unlike overlap, retains information about nearby, non-coincident observations.

\subsection{Output-specific borrowing geometry}
\label{subsec:output_specific_borrowing}
The preceding quantities describe the relative geometry of the observed designs at the level of full input sets. For prediction, the relevant geometry is instead local to a target output and a prediction point. Let \(j\in\{1,\ldots,D\}\) be the target output and \(\bmx_\star\in\mathcal{X}\) a prediction location, which need not belong to any observed design. In direct analogy with Eq.~\eqref{eq:cross_design_distance}, define
\[
\begin{aligned}
\delta_j(\bmx_\star)
&:= \min_{\bmx\in\mathcal{X}_j}\|\bmx_\star-\bmx\|, &
\delta_{-j}(\bmx_\star)
&:= \min_{q\neq j}\min_{\bmx\in\mathcal{X}_q}\|\bmx_\star-\bmx\|.
\end{aligned}
\]
Thus \(\delta_j(\bmx_\star)\) measures local support from the target design and \(\delta_{-j}(\bmx_\star)\) the nearest auxiliary support. When \(\delta_{-j}(\bmx_\star)<\delta_j(\bmx_\star)\), at least one auxiliary design is locally closer than the target design itself.

\begin{definition}[Borrowing potential index]
\label{def:bpi}
The \emph{local borrowing potential index} (local BPI) for output \(j\) at \(\bmx_\star\) is
\begin{equation}
b_j(\bmx_\star)
:=
\begin{cases}
\displaystyle
1-\frac{\delta_{-j}(\bmx_\star)}{\delta_j(\bmx_\star)},
& \text{if } \delta_j(\bmx_\star)>0
\text{ and } \delta_{-j}(\bmx_\star)<\delta_j(\bmx_\star),\\[8pt]
0, & \text{otherwise}.
\end{cases}
\label{eq:local_bpi}
\end{equation}
\end{definition}

By construction, \(0\le b_j(\bmx_\star)\le 1\). It is zero when the target output is locally at least as close as every auxiliary output and increases as auxiliary support becomes relatively closer.

To summarise this local opportunity over a region of interest, let \(\mathcal{X}_\star\subseteq\mathcal{X}\) be a reference set, such as an evaluation grid or a designated prediction domain.

\begin{definition}[Global borrowing potential index]
\label{def:global_bpi}
The \emph{global borrowing potential index} (global BPI) for output \(j\) over \(\mathcal{X}_\star\) is
\begin{equation}
B_j
:=
\frac{1}{|\mathcal{X}_\star|}
\sum_{\bmx_\star\in\mathcal{X}_\star}
b_j(\bmx_\star).
\label{eq:global_bpi}
\end{equation}
\end{definition}

\(B_j\) is a model-free summary of how often the remaining outputs provide stronger local support than output \(j\) over the prediction region. Together, \(\bmPi\), the coverage curves and the borrowing potential indices give a design-based description of heterotopic geometry.
\section{Prediction Gain Under Multi-Output Gaussian Processes}
\label{sec:oracle}
This section studies the oracle prediction gain from joint modelling under a separable multi-output GP structure. Treating the covariance kernel, cross-output dependence and noise variances as known isolates the gain attributable to cross-output borrowing, which we then relate to the geometric summaries of Section~\ref{sec:geometry}. 

\subsection{Multi-Output Gaussian Process Model}
\label{subsec:mvgp_model}

Let
$
\mathbf{f}(\bmx)
=
\bigl(f_1(\bmx),\ldots,f_D(\bmx)\bigr)^\top,\bmx\in\mathcal{X},
$
be a \(D\)-variate latent process defined on the common input domain \(\mathcal{X}\subseteq\mathbb{R}^T\). Throughout this section, we assume that
\begin{equation}
\mathbf{f}\sim \mgp\!\left(
\mathbf{0},
\ k(\bmx,\bmx'), \bmLambda
\right),
\label{eq:separable_mvgp}
\end{equation}
where \(k:\mathcal{X}\times\mathcal{X}\to\mathbb{R}\) is a positive-definite kernel and
\(\bmLambda=(\Lambda_{pq})_{1\le p,q\le D}\) is a positive-definite cross-output dependence matrix \citep{chen2020multivariate,chen2023multivariate}. In geostatistical terminology, \(\bmLambda\) is the coregionalisation matrix of the intrinsic coregionalisation, or separable covariance, model. Equivalently,
\begin{equation}
\operatorname{Cov}\!\bigl(f_p(\bmx),f_q(\bmx')\bigr)
=
\Lambda_{pq}\,k(\bmx,\bmx'),
\qquad
1\le p,q\le D.
\label{eq:cross_cov_separable}
\end{equation}

For each output \(j\in\{1,\ldots,D\}\), observations are collected at the heterotopic design
\(\mathcal{X}_j=\{\bmx^{(j)}_1,\ldots,\bmx^{(j)}_{N_j}\}\) through
$
y_a^{(j)}
= f_j\!\bigl(\bmx_a^{(j)}\bigr)+\varepsilon_a^{(j)}, a=1,\ldots,N_j,
\varepsilon_a^{(j)}
\stackrel{\mathrm{ind}}{\sim}\mathcal{N}(0,\tau_j^2),
$
independently across outputs and locations.
Let
$
\bmy^{(j)}
=
\bigl(y_1^{(j)},\ldots,y_{N_j}^{(j)}\bigr)^\top
\in\mathbb{R}^{N_j},
\bmy
=
\bigl(
\bmy^{(1)\top},\ldots,\bmy^{(D)\top}
\bigr)^\top.
$
For outputs \(p\) and \(q\), define the latent covariance block
$
\bmK_{pq}
:=
\left[
\Lambda_{pq}\,
k\!\bigl(\bmx_a^{(p)},\bmx_b^{(q)}\bigr)
\right]_{1\le a\le N_p,\ 1\le b\le N_q},
$
and the corresponding observation covariance block
\begin{equation}
\bmSigma_{pq}
:=
\begin{cases}
\bmK_{pp}+\tau_p^2 \bmI_{N_p}, & p=q,\\[4pt]
\bmK_{pq}, & p\neq q.
\end{cases}
\label{eq:obs_block_cov}
\end{equation}
Thus the full observation covariance matrix is the block matrix
\begin{equation}
\bmSigma
=
\cov(\bmy,\bmy)
=
\bigl[\bmSigma_{pq}\bigr]_{1\le p,q\le D}.
\label{eq:full_obs_cov}
\end{equation}

The separable structure in Eq.~\eqref{eq:separable_mvgp} is used here as a tractable working model for the theory. It is sufficiently rich to represent cross-output borrowing through \(\bmLambda\), while remaining simple enough to reveal how the geometric quantities enter the predictive gain.
Under this Gaussian model, the joint posterior mean coincides with the classical co-kriging predictor from multivariate geostatistics \citep{myers1982matrix,verhoef1993multivariable,cressie2015statistics,wackernagel2013multivariate}. The equivalence argument is summarised in Proposition~\ref{prop:si_cokriging_equivalence} (Appendix~\ref{sec:cokriging}). In the simulation literature, the same model is the multivariate kriging metamodel of \citet{kleijnen2014multivariate}, and the independent special case \(\bmLambda\) diagonal recovers output-by-output univariate kriging.

\begin{remark}[Autokrigeability and the isotopic benchmark]
\label{rem:autokrigeability}
Under the separable model in Eq.~\eqref{eq:separable_mvgp}, if the design is isotopic, \(\mathcal{X}_1=\cdots=\mathcal{X}_D\), and the observations are noise-free, \(\tau_j^2=0\) for all \(j\), then the joint posterior mean of \(f_j(\bmx_\star)\) given \(\bmy\) coincides with the univariate posterior mean given \(\bmy^{(j)}\) alone, for every \(j\) and \(\bmx_\star\). This is the classical autokrigeability property of proportional cross-covariance models \citep{wackernagel2013multivariate,chiles2012geostatistics}. Cross-output borrowing under separability is therefore generated precisely by heterotopy and by observation noise. This offers a structural explanation for the Monte Carlo finding of \citet{kleijnen2014multivariate} that multivariate kriging brings little or no benefit over univariate kriging on common designs: in the isotopic regime, the oracle gain is confined to a noise-filtering channel, while joint fitting still bears the cost of estimating \(\bmLambda\). The heterotopic geometry studied in the remainder of the paper is exactly the regime in which this trade-off can reverse.
\end{remark}

\subsection{Independent and joint predictors}
\label{subsec:oracle_predictors}

Fix an output \(j\in\{1,\ldots,D\}\) and a prediction location \(\bmx_\star\in\mathcal{X}\), and consider prediction of the latent quantity \(f_j(\bmx_\star)\). Two predictors are of primary interest. The first uses only the observations from output \(j\), thereby ignoring cross-output dependence. The second uses the full multivariate observation vector and therefore exploits cross-output borrowing under the multi-output GP model. Their predictive variances are
\begin{equation*}
V_{\mathrm{ind}}^{(j)}(\bmx_\star)
:= \operatorname{Var}\!\bigl(f_j(\bmx_\star)\mid \bmy^{(j)}\bigr),
\quad
V_{\mathrm{joint}}^{(j)}(\bmx_\star)
:= \operatorname{Var}\!\bigl(f_j(\bmx_\star)\mid \bmy\bigr).
\end{equation*}

\begin{definition}[Prediction gain]
\label{def:oracle_gain}
The prediction gain from joint modelling for output \(j\) at \(\bmx_\star\) is
$
\Delta_j(\bmx_\star)
:=
V_{\mathrm{ind}}^{(j)}(\bmx_\star)
-
V_{\mathrm{joint}}^{(j)}(\bmx_\star).
$
\end{definition}

The word oracle is used in the statistical sense, meaning that the covariance kernel, the cross-output dependence matrix and the noise variances are treated as known. It does not refer to the simulation oracle, namely the black-box evaluator queried in simulation optimisation. Under this oracle Gaussian model, \(\Delta_j(\bmx_\star)\) measures the reduction in predictive uncertainty attributable to cross-output borrowing when the dependence structure is known. Although fitted metamodels are usually compared through predictive means and squared-error loss, the posterior mean is the squared-error optimal predictor and its conditional risk is exactly the posterior variance. Hence \(\Delta_j(\bmx_\star)\) is the expected squared-error reduction of the joint posterior mean relative to the independent posterior mean, while remaining a design-level quantity that does not depend on the realised response values. It is therefore the model-based counterpart of the geometric borrowing question.

For subsequent derivations, let
$
\bmy^{(-j)}
=
\bigl(
\bmy^{(1)\top},\ldots,\bmy^{(j-1)\top},
\bmy^{(j+1)\top},\ldots,\bmy^{(D)\top}
\bigr)^\top
$
denote the stacked observation vector from all outputs other than \(j\), and partition the observation covariance matrix as
\begin{equation}
\bmSigma
=
\begin{pmatrix}
\bmSigma_{jj} & \bmSigma_{j,-j}\\
\bmSigma_{-j,j} & \bmSigma_{-j,-j}
\end{pmatrix},
\label{eq:Sigma_partition}
\end{equation}
where \(\bmSigma_{jj}=\operatorname{Cov}(\bmy^{(j)},\bmy^{(j)})\), \(\bmSigma_{j,-j}=\operatorname{Cov}(\bmy^{(j)},\bmy^{(-j)})\), and similarly for the remaining blocks.

Likewise, define
\[
\begin{aligned}
\mathbf{c}_{j,j}(\bmx_\star)
&:= \operatorname{Cov}\!\bigl(f_j(\bmx_\star),\bmy^{(j)}\bigr)
= \Lambda_{jj}\bigl(k(\bmx_\star,\bmx_1^{(j)}),\ldots,k(\bmx_\star,\bmx_{N_j}^{(j)})\bigr),\\
\mathbf{c}_{j,-j}(\bmx_\star)
&:= \operatorname{Cov}\!\bigl(f_j(\bmx_\star),\bmy^{(-j)}\bigr), \qquad
\mathbf{c}_{j}(\bmx_\star)
= \bigl(\mathbf{c}_{j,j}(\bmx_\star),\,\mathbf{c}_{j,-j}(\bmx_\star)\bigr).
\end{aligned}
\]

\subsection{The geometric borrowing channel}
\label{subsec:geometry_borrowing_channel}

The geometric quantities introduced in Section~\ref{sec:geometry} enter the prediction problem through the covariance kernel. Throughout the remainder of this section, assume a radial kernel, that is, an isotropic stationary one:
\begin{equation}
k(\bmx,\bmx')
=
\psi\!\left(\|\bmx-\bmx'\|\right),
\label{eq:radial_kernel}
\end{equation}
where \(\psi:[0,\infty)\to\mathbb{R}_+\) is non-increasing. Under Eq.~\eqref{eq:radial_kernel}, covariance strength is determined by distance alone and decreases monotonically with geometric separation.

For any target output \(j\) and prediction point \(\bmx_\star\in\mathcal{X}\), monotonicity of \(\psi\) gives
\begin{equation}
\max_{1\le a\le N_j}
k\!\bigl(\bmx_\star,\bmx_a^{(j)}\bigr)
\;=\;
\psi\!\bigl(\delta_j(\bmx_\star)\bigr),
\qquad
\max_{q\neq j}\max_{1\le a\le N_q}
k\!\bigl(\bmx_\star,\bmx_a^{(q)}\bigr)
\;=\;
\psi\!\bigl(\delta_{-j}(\bmx_\star)\bigr).
\end{equation}
Thus \(\delta_j(\bmx_\star)\) and \(\delta_{-j}(\bmx_\star)\) are the arguments governing the maximal kernel support available from the target and auxiliary designs. In particular, the auxiliary gap determines the strongest local borrowing channel available at \(\bmx_\star\), and the borrowing potential index compares this auxiliary channel with the local support of the target design.
When \(b_j(\bmx_\star)>0\), Definition~\ref{def:bpi} gives
\(\delta_{-j}(\bmx_\star)=\bigl(1-b_j(\bmx_\star)\bigr)\delta_j(\bmx_\star)\),
so if \(\psi\) is strictly decreasing, larger \(b_j(\bmx_\star)\) corresponds to stronger auxiliary kernel support relative to support from the target design.

The global index \(B_j\) averages this local comparison over the prediction region. Under a strictly decreasing radial kernel, larger values of \(B_j\) indicate that auxiliary outputs more often provide stronger local kernel support than output \(j\). This makes \(B_j\) a useful screening diagnostic for favourable local support (see Section~\ref{subsec:exp_residual_screening}), but it is not a necessary condition for positive oracle gain, which additionally depends on the residual auxiliary channel developed below.

\subsection{Prediction gain and local geometric bounds}
\label{subsec:exact_gain_bounds}

The key prediction object is the auxiliary channel that remains after conditioning on the data for the target output. Define the residual auxiliary covariance matrix
\(\bmSigma_{-j,-j\mid j}:=\bmSigma_{-j,-j}-\bmSigma_{-j,j}\bmSigma_{jj}^{-1}\bmSigma_{j,-j}\)
and the corresponding residual cross-covariance vector
\(\mathbf{c}_{j,-j\mid j}(\bmx_\star):=\mathbf{c}_{j,-j}(\bmx_\star)-\mathbf{c}_{j,j}(\bmx_\star)\bmSigma_{jj}^{-1}\bmSigma_{j,-j}\).

\begin{theorem}[Prediction gain identity]
\label{thm:exact_oracle_gain}
For every output \(j\in\{1,\ldots,D\}\) and prediction point \(\bmx_\star\in\mathcal{X}\),
$
\Delta_j(\bmx_\star)
=
\mathbf{c}_{j,-j\mid j}(\bmx_\star)\,
\bmSigma_{-j,-j\mid j}^{-1}\,
\mathbf{c}_{j,-j\mid j}(\bmx_\star)^\top.
$
\end{theorem}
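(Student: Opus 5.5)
The plan is to use the Gaussian conditioning formula together with the block-inverse (Schur complement) decomposition of \(\bmSigma\) under the partition \eqref{eq:Sigma_partition}. First, I need every inverse that appears to exist. \(\bmSigma\) is positive definite: if some \(\tau_j^2>0\) this follows from positivity of the noise blocks, and otherwise from positive definiteness of \(\bmLambda\otimes\)kernel on distinct points. Granting this, \(\bmSigma_{jj}\) is positive definite, and so is the Schur complement \(\bmSigma_{-j,-j\mid j}\). I will state this as a standing assumption, implicit in the statement since the inverse is written there.

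Next, write the two variances explicitly. By Gaussian conditioning,
\[
V_{\mathrm{ind}}^{(j)}(\bmx_\star)=\Lambda_{jj}\psi(0)-\mathbf{c}_{j,j}\bmSigma_{jj}^{-1}\mathbf{c}_{j,j}^\top,
\qquad
V_{\mathrm{joint}}^{(j)}(\bmx_\star)=\Lambda_{jj}\psi(0)-\mathbf{c}_j\bmSigma^{-1}\mathbf{c}_j^\top .
\]
(More generally the prior variance is \(\Lambda_{jj}k(\bmx_\star,\bmx_\star)\); this term cancels in any case.) Hence
\[
\Delta_j(\bmx_\star)=\mathbf{c}_j\bmSigma^{-1}\mathbf{c}_j^\top-\mathbf{c}_{j,j}\bmSigma_{jj}^{-1}\mathbf{c}_{j,j}^\top .
\]

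Then I apply the standard block inverse
\[
\bmSigma^{-1}=\begin{pmatrix}\bmSigma_{jj}^{-1}&0\\0&0\end{pmatrix}+\begin{pmatrix}-\bmSigma_{jj}^{-1}\bmSigma_{j,-j}\\ \bmI\end{pmatrix}\bmSigma_{-j,-j\mid j}^{-1}\begin{pmatrix}-\bmSigma_{-j,j}\bmSigma_{jj}^{-1} & \bmI\end{pmatrix}.
\]
Sandwiching this between \(\mathbf{c}_j=(\mathbf{c}_{j,j},\mathbf{c}_{j,-j})\) and its transpose, the first term reproduces \(\mathbf{c}_{j,j}\bmSigma_{jj}^{-1}\mathbf{c}_{j,j}^\top\), which cancels the independent term. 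In the second term, the row vector becomes \(\mathbf{c}_{j,-j}-\mathbf{c}_{j,j}\bmSigma_{jj}^{-1}\bmSigma_{j,-j}=\mathbf{c}_{j,-j\mid j}(\bmx_\star)\). This yields exactly the claimed quadratic form.

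As an alternative, more conceptual route that I would mention as a check, use sequential conditioning. Condition first on \(\bmy^{(j)}\), then on the residual \(\bmy^{(-j)}-\mathbb{E}[\bmy^{(-j)}\mid\bmy^{(j)}]\). This residual has covariance \(\bmSigma_{-j,-j\mid j}\) and covariance \(\mathbf{c}_{j,-j\mid j}\) with \(f_j(\bmx_\star)\) given \(\bmy^{(j)}\). The variance reduction from the second step is then the stated quadratic form.

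The only delicate point is the invertibility of \(\bmSigma_{jj}\) and the Schur complement. This can fail for noise-free data with duplicated locations, where a pseudo-inverse argument would be needed. Beyond that, the computation is routine.
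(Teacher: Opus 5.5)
Your proposal is correct and follows essentially the paper's route. The paper applies Gaussian conditioning to obtain $V_{\mathrm{ind}}^{(j)}(\bmx_\star)$ and then simply asserts that $V_{\mathrm{joint}}^{(j)}(\bmx_\star)=V_{\mathrm{ind}}^{(j)}(\bmx_\star)-\mathbf{c}_{j,-j\mid j}(\bmx_\star)\,\bmSigma_{-j,-j\mid j}^{-1}\,\mathbf{c}_{j,-j\mid j}(\bmx_\star)^\top$; your block-inverse computation, or your sequential-conditioning check, is exactly the verification of that step. One small slip: noise alone makes $\bmSigma$ positive definite only when $\tau_i^2>0$ for \emph{all} outputs, not just some. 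This does not affect the argument, because you take invertibility as a standing assumption, as the paper implicitly does.
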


Theorem~\ref{thm:exact_oracle_gain} expresses \(\Delta_j(\bmx_\star)\) as a quadratic form in \(\bmSigma_{-j,-j\mid j}^{-1}\), so \(\Delta_j(\bmx_\star)\ge 0\), and identifies the residual auxiliary channel \(\mathbf{c}_{j,-j\mid j}(\bmx_\star)\) as the mechanism underlying prediction gains. Auxiliary observations help only through the part of their signal not already explained by the target data. If \(\Lambda_{jq}=0\) for all \(q\neq j\), the residual channel vanishes and \(\Delta_j(\bmx_\star)=0\) for all \(\bmx_\star\).
The next result makes the dependence on the local geometric gaps explicit. For a symmetric matrix \(\mathbf{A}\), let \(\lambda_{\max}(\mathbf{A})\) denote its largest eigenvalue.

\begin{theorem}[Local geometric upper bound]
\label{thm:local_geometric_upper_bound}
Assume Eq.~\eqref{eq:radial_kernel}. Define
\[
\begin{aligned}
C_{j,\mathrm{aux}}
&:= \left(\sum_{q\neq j}\Lambda_{jq}^2 N_q\right)^{1/2}, &
C_{j,\mathrm{proj}}
&:= |\Lambda_{jj}|\sqrt{N_j}\,\left\|\bmSigma_{jj}^{-1}\bmSigma_{j,-j}\right\|_2,
\end{aligned}
\]
where \(\|\cdot\|_2\) denotes the Euclidean norm for vectors and the associated spectral norm for matrices. Then, for every \(\bmx_\star\in\mathcal{X}\),
\begin{equation}
\Delta_j(\bmx_\star)
\le
\lambda_{\max}\!\left(
\bmSigma_{-j,-j\mid j}^{-1}
\right)
\left\{
C_{j,\mathrm{aux}}\,\psi\!\bigl(\delta_{-j}(\bmx_\star)\bigr)
+
C_{j,\mathrm{proj}}\,\psi\!\bigl(\delta_j(\bmx_\star)\bigr)
\right\}^{2}.
\label{eq:local_geometric_upper_bound}
\end{equation}
\end{theorem}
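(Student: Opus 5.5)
We start from the prediction gain identity of Theorem~\ref{thm:exact_oracle_gain}, which writes \(\Delta_j(\bmx_\star)\) as the quadratic form \(\mathbf{c}\,\bmSigma_{-j,-j\mid j}^{-1}\,\mathbf{c}^\top\) with \(\mathbf{c}=\mathbf{c}_{j,-j\mid j}(\bmx_\star)\). Because \(\bmSigma_{-j,-j\mid j}\) is a Schur complement of the positive-definite matrix \(\bmSigma\), it is itself positive definite. The Rayleigh quotient bound then gives
\[
\Delta_j(\bmx_\star)\le \lambda_{\max}\!\bigl(\bmSigma_{-j,-j\mid j}^{-1}\bigr)\,\|\mathbf{c}\|_2^2 .
\]
The rest of the argument bounds \(\|\mathbf{c}\|_2\) by the bracketed term in Eq.~\eqref{eq:local_geometric_upper_bound}.

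We split the residual channel using its definition, \(\mathbf{c}=\mathbf{c}_{j,-j}(\bmx_\star)-\mathbf{c}_{j,j}(\bmx_\star)\bmSigma_{jj}^{-1}\bmSigma_{j,-j}\). The triangle inequality gives
\[
\|\mathbf{c}\|_2\le\|\mathbf{c}_{j,-j}\|_2+\|\mathbf{c}_{j,j}\bmSigma_{jj}^{-1}\bmSigma_{j,-j}\|_2 .
\]

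For the first (direct auxiliary) term, the entries of \(\mathbf{c}_{j,-j}(\bmx_\star)\) are \(\Lambda_{jq}k(\bmx_\star,\bmx_b^{(q)})\) for \(q\neq j\) and \(b=1,\ldots,N_q\). The noise terms do not contribute, since the noise is independent of \(\mathbf f\). Under the radial kernel Eq.~\eqref{eq:radial_kernel} with \(\psi\) non-increasing and nonnegative, each entry satisfies \(|\Lambda_{jq}|\,\psi(\|\bmx_\star-\bmx_b^{(q)}\|)\le|\Lambda_{jq}|\,\psi(\delta_{-j}(\bmx_\star))\), because every auxiliary point lies at distance at least \(\delta_{-j}(\bmx_\star)\) from \(\bmx_\star\). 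Summing squares over \(q\) and \(b\) gives
\[
\|\mathbf{c}_{j,-j}\|_2\le\Bigl(\sum_{q\neq j}\Lambda_{jq}^2N_q\Bigr)^{1/2}\psi\bigl(\delta_{-j}(\bmx_\star)\bigr)=C_{j,\mathrm{aux}}\,\psi\bigl(\delta_{-j}(\bmx_\star)\bigr).
\]

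For the second (projection) term, we treat the row vector as acting on the matrix \(\bmSigma_{jj}^{-1}\bmSigma_{j,-j}\) and use the operator-norm inequality \(\|\mathbf{v}\mathbf{M}\|_2\le\|\mathbf{v}\|_2\|\mathbf{M}\|_2\). The spectral norm of a matrix equals that of its transpose, so this holds for row vectors as well. The vector \(\mathbf{c}_{j,j}(\bmx_\star)\) has \(N_j\) entries \(\Lambda_{jj}\psi(\|\bmx_\star-\bmx_a^{(j)}\|)\), each bounded in absolute value by \(|\Lambda_{jj}|\psi(\delta_j(\bmx_\star))\), again by monotonicity. Hence \(\|\mathbf{c}_{j,j}\|_2\le|\Lambda_{jj}|\sqrt{N_j}\,\psi(\delta_j(\bmx_\star))\), and multiplying by \(\|\bmSigma_{jj}^{-1}\bmSigma_{j,-j}\|_2\) gives exactly \(C_{j,\mathrm{proj}}\,\psi(\delta_j(\bmx_\star))\). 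Squaring the sum of the two bounds and inserting it into the Rayleigh bound yields Eq.~\eqref{eq:local_geometric_upper_bound}.

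None of these steps is delicate; the proof is a chain of elementary norm inequalities. The point that needs care is the entrywise bounds, which rely on two facts. First, \(\psi\ge0\) and \(\psi\) is non-increasing, so the maximal kernel value over a design equals \(\psi\) evaluated at the nearest distance, as recorded before Definition~\ref{def:bpi}'s geometric interpretation. Second, \(\Lambda_{jq}\) may be negative, which is why the bound uses \(\Lambda_{jq}^2\) and \(|\Lambda_{jj}|\) rather than the signed values.

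We also need \(\bmSigma_{-j,-j\mid j}\) to be invertible so that \(\lambda_{\max}\) of its inverse is finite. This follows from the positive definiteness of \(\bmSigma\). If some \(\tau_q^2=0\), it requires distinct design points and a strictly positive-definite kernel, which is already implicit in Theorem~\ref{thm:exact_oracle_gain}.
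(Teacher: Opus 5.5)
Your proposal is correct and follows essentially the same route as the paper's proof. Both arguments apply the Rayleigh bound to the identity of Theorem~\ref{thm:exact_oracle_gain}, use the triangle inequality on the residual channel, bound each piece entrywise via monotonicity of \(\psi\), and handle the projection term with the spectral-norm inequality.
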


The bound separates two effects: \(\delta_{-j}(\bmx_\star)\) controls the raw auxiliary channel, while \(\delta_j(\bmx_\star)\) enters through the projection onto the data for the target output. 
The bound should be read as a qualitative decomposition of how geometry enters the prediction gain rather than as a sharp constant.

Theorems~\ref{thm:exact_oracle_gain} and \ref{thm:local_geometric_upper_bound} are the main \(D\)-output prediction results. 
In the two-output case, the residual auxiliary channel can be written as a single residual cross-kernel vector.
The prediction gain is proportional to its squared norm up to design-packing constants, and the same representation gives the single-neighbour lower bound used in the experiments (See Theorem~\ref{thm:si_two_output_residual_bounds} and
Corollary~\ref{cor:si_residual_packing_scaling} in Appendix~\ref{sec:si_prediction_results}).
The same calculation also yields the BPI certificate in Proposition~\ref{prop:si_bpi_certificate}. 
A positive local BPI is a one-sided certificate for positive gain when the nearest auxiliary kernel link is not removed by conditioning on the target design.

Thus the local borrowing message is simple. If the design for the target output is already close to \(\bmx_\star\), auxiliary data are mostly redundant except for denoising. If the design for the target output leaves a local gap that an auxiliary design fills, and that auxiliary signal survives residualisation on the target data, borrowing is favourable. If both target and auxiliary designs are remote, neither separate nor joint prediction has strong local support.

\subsection{Global Geometric Implications}
\label{subsec:design_level_implications}

The results above are pointwise. The next two propositions aggregate them to the design level through directed coverage and directed proximity.

\begin{proposition}[Directed coverage and kernel support]
\label{prop:coverage_kernel_support}
Assume Eq.~\eqref{eq:radial_kernel}. Fix outputs \(p\) and \(q\), and let \(r>0\). If
\(\delta_{p\to q}\!\bigl(\bmx_a^{(p)}\bigr)\le r\), then
\(\max_{1\le b\le N_q} k\!\bigl(\bmx_a^{(p)},\bmx_b^{(q)}\bigr)\ge \psi(r)\).
Consequently, on a fraction \(\DC_{p\to q}(r)\) of the design points of output \(p\), output \(q\) provides kernel support at least \(\psi(r)\).
\end{proposition}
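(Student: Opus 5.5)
The plan is to argue directly from the definitions. No earlier theorem is needed beyond the radial-kernel assumption in Eq.~\eqref{eq:radial_kernel} and the monotonicity of \(\psi\).

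\emph{Pointwise statement.} Fix \(a\) with \(\delta_{p\to q}(\bmx_a^{(p)})\le r\). Since \(\mathcal{X}_q\) is finite, the minimum in Eq.~\eqref{eq:cross_design_distance} is attained at some \(\bmx_{b^\ast}^{(q)}\) with
\[
\|\bmx_a^{(p)}-\bmx_{b^\ast}^{(q)}\|=\delta_{p\to q}(\bmx_a^{(p)})\le r .
\]
By Eq.~\eqref{eq:radial_kernel} and because \(\psi\) is non-increasing,
\[
k(\bmx_a^{(p)},\bmx_{b^\ast}^{(q)})=\psi\bigl(\delta_{p\to q}(\bmx_a^{(p)})\bigr)\ge\psi(r).
\]
The maximum over \(b\) is at least this value, which proves the first claim. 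Equivalently, the same monotonicity identity used for \(\delta_j\) and \(\delta_{-j}\) in Section~\ref{subsec:geometry_borrowing_channel} gives \(\max_b k(\bmx_a^{(p)},\bmx_b^{(q)})=\psi(\delta_{p\to q}(\bmx_a^{(p)}))\).

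\emph{Aggregate statement.} Let \(A_r=\{a:\delta_{p\to q}(\bmx_a^{(p)})\le r\}\). By Definition~\ref{def:coverage_function}, \(|A_r|/N_p=\DC_{p\to q}(r)\). The pointwise claim holds for every \(a\in A_r\), so on exactly this fraction of the design points of output \(p\) the maximal kernel support from \(\mathcal{X}_q\) is at least \(\psi(r)\). Summing indicators then yields the averaged form
\[
\frac{1}{N_p}\sum_{a=1}^{N_p}\mathds{1}\Bigl\{\max_b k(\bmx_a^{(p)},\bmx_b^{(q)})\ge\psi(r)\Bigr\}\ \ge\ \DC_{p\to q}(r).
\]

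\emph{Main point of care.} The only subtlety is the wording. Because \(\psi\) need not be strictly decreasing, support at least \(\psi(r)\) may also hold at points where \(\delta_{p\to q}>r\), for example on plateaus of \(\psi\). The "consequently" clause should therefore be read as "at least a fraction \(\DC_{p\to q}(r)\)", which is exactly what the inequality above states. If \(\psi\) is strictly decreasing, the inclusion becomes an equality of sets. No other obstacle arises, since the argument uses only the attainment of the minimum over a finite set and monotonicity.
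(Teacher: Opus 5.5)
Your proof is correct and follows the paper's own argument. The paper likewise takes a point of \(\mathcal{X}_q\) within distance \(r\) of \(\bmx_a^{(p)}\), uses that \(\psi\) is non-increasing to get a kernel value of at least \(\psi(r)\), and reads the aggregate claim directly off the definition of \(\DC_{p\to q}(r)\); your note that the consequence should read ``at least a fraction'' when \(\psi\) has plateaus is a correct and harmless sharpening of the wording.
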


The next result relates this coverage summary to the directed proximity \(\pi_{p\to q}\).

\begin{proposition}[Directed proximity controls directed coverage]
\label{prop:proximity_controls_coverage}
For every \(r>0\), \(\DC_{p\to q}(r)\ge 1-\pi_{p\to q}/r\).
\end{proposition}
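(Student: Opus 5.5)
This is a Markov-inequality argument applied to the empirical distribution of the cross-design nearest-neighbour distances. Let $\delta_a:=\delta_{p\to q}(\bmx_a^{(p)})\ge 0$ for $a=1,\ldots,N_p$, and let $\mu_p$ be the uniform empirical measure placing mass $1/N_p$ on each index $a$. By Definition~\ref{def:coverage_function}, $\DC_{p\to q}(r)$ is the $\mu_p$-probability of the event $\{\delta_a\le r\}$. By Definition~\ref{def:directed_proximity}, $\pi_{p\to q}$ is the $\mu_p$-mean of $\delta_a$. The claim is therefore equivalent to the tail bound
\[
1-\DC_{p\to q}(r)=\frac{1}{N_p}\sum_{a=1}^{N_p}\mathds{1}\{\delta_a>r\}\le \frac{\pi_{p\to q}}{r}.
\]

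The key step is the pointwise inequality $r\,\mathds{1}\{\delta_a>r\}\le \delta_a$ for each $a$. It holds by cases. If $\delta_a>r$, the left side equals $r<\delta_a$. Otherwise the left side is $0\le\delta_a$, since distances are nonnegative.

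Averaging this inequality over $a$ gives
\[
\frac{r}{N_p}\sum_{a=1}^{N_p}\mathds{1}\{\delta_a>r\}\le \pi_{p\to q}.
\]
Dividing by $r>0$ yields the tail bound above. Substituting $1-\DC_{p\to q}(r)$ for the left side and rearranging gives $\DC_{p\to q}(r)\ge 1-\pi_{p\to q}/r$.

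No step is a real obstacle. The only points to state explicitly are the following:
\begin{itemize}
\item The inequality inside the indicator is strict, because the complement of $\{\delta_a\le r\}$ is $\{\delta_a>r\}$. This makes the bound at least as strong as the usual Markov form with $\ge r$.
\item The bound is trivial, with right-hand side $\le 0$, whenever $r\le\pi_{p\to q}$, so only $r>\pi_{p\to q}$ carries content.
\end{itemize}

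Combined with Proposition~\ref{prop:coverage_kernel_support}, the result says that on at least a fraction $1-\pi_{p\to q}/r$ of the design points of output $p$, output $q$ supplies kernel support of at least $\psi(r)$.
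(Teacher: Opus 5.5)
Your proof is correct and is the paper's argument: Markov's inequality applied to the empirical distribution of the distances $\delta_{p\to q}(\mathbf{x}_a^{(p)})$, combined with the complement identity $\DC_{p\to q}(r)=1-\frac{1}{N_p}\sum_a \mathbf{1}\{\delta_a>r\}$. The only difference is that you write out the pointwise inequality $r\,\mathbf{1}\{\delta_a>r\}\le\delta_a$ explicitly, where the paper simply invokes Markov.
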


Propositions~\ref{prop:coverage_kernel_support} and \ref{prop:proximity_controls_coverage} together provide a bridge from the model-free geometry of Section~\ref{sec:geometry} to borrowing under the GP model. When \(\pi_{p\to q}\) is small, a large fraction of the design points of output \(p\) lie within distance \(r\) of the design of output \(q\). By Proposition~\ref{prop:coverage_kernel_support}, those points then enjoy kernel support at least \(\psi(r)\). 
The matrix \(\bmPi\) therefore summarises, at the design level, how frequently and how asymmetrically such borrowing channels arise across outputs.

\begin{remark}[Interleaved and separated designs]
\label{rem:interleaved_separated}
With this interpretation, we use \emph{interleaved} to describe output-specific designs that provide close cross-design neighbours at the scale relevant for prediction or covariance learning; in the diagnostics above, this corresponds to high directed coverage and small directed proximity. We use \emph{separated} to describe designs whose cross-design nearest-neighbour distances remain large relative to that scale, corresponding to low directed coverage and large directed proximity.
\end{remark}

Taken together, these results show that under a separable multi-output GP, oracle prediction gain is governed by a residual auxiliary channel whose strength is controlled by the geometric summaries. 
The next section asks when the same geometry supports learning the cross-output dependence that borrowing requires.
\section{Estimability of Cross-Output Dependence under Heterotopic Sampling}
\label{sec:estimation}

This section studies when heterotopic design geometry supports estimation of cross-output dependence. The inferential object is the off-diagonal part of \(\bmLambda\), because \(\Lambda_{pq}\), \(p\neq q\), is the covariance channel through which output \(q\) can inform output \(p\). By contrast, the diagonal entries \(\Lambda_{jj}\) are marginal signal variances for the individual outputs. They are important for scale and prediction, but their estimation is primarily a within-output covariance problem governed by \(\mathcal{X}_j\), not by the relative geometry of two output designs. We therefore treat the diagonal entries, kernel parameters and noise variances as nuisance quantities when asking whether cross-output dependence can be learned from a heterotopic design.

The first results below use a conditional information benchmark in which the kernel \(k\), the diagonal entries of \(\bmLambda\) and the noise variances \(\tau_1^2,\ldots,\tau_D^2\) are held fixed. 
This does not assert that these quantities are known in practice. Rather, it isolates the part of the likelihood that is directly controlled by cross-design geometry.
We then relate this benchmark to the fully fitted case, where uncertainty about the nuisance quantities can reduce the usable information for the off-diagonal dependence parameters. That adjustment is captured by the efficient information \citep{coxreid1987,vandervaart1998}. Write the parameter vector as
\begin{equation}
\bmtheta
=
\begin{pmatrix}
\bmeta\\
\bmnu
\end{pmatrix},
\label{eq:theta_partition}
\end{equation}
where \(\bmeta\) collects the off-diagonal entries of \(\bmLambda\) and \(\bmnu\) contains the remaining covariance parameters, including the diagonal entries of \(\bmLambda\), kernel hyperparameters and noise variances.
If the Fisher information matrix for \(\bmtheta\) is partitioned as
\[
\mathcal{I}(\bmtheta)
=
\begin{pmatrix}
\mathcal{I}_{\eta\eta} & \mathcal{I}_{\eta\nu}\\
\mathcal{I}_{\nu\eta} & \mathcal{I}_{\nu\nu}
\end{pmatrix},
\]
and the information relevant for joint estimation of \(\bmeta\) is the efficient information
\begin{equation}
\mathcal{I}_{\eta\eta\cdot \nu}
:=
\mathcal{I}_{\eta\eta}
-
\mathcal{I}_{\eta\nu}
\mathcal{I}_{\nu\nu}^{-1}
\mathcal{I}_{\nu\eta},
\label{eq:efficient_information}
\end{equation}
the Schur complement of \(\mathcal{I}_{\nu\nu}\). 

\subsection{Kernel-Weighted Cross-Design Information}
\label{subsec:fisher_information}

We first specialise the general notation introduced above to the conditional estimation problem for cross-output dependence, treating the nuisance vector \(\bmnu\) as fixed. Let
$
\mathcal{P}
:=
\{(p,q):1\le p<q\le D\},
m:=|\mathcal{P}|=\binom{D}{2},
$
and let
$
\bmeta
:=
\bigl(\Lambda_{pq}\bigr)_{(p,q)\in\mathcal{P}}
\in\mathbb{R}^{m}
$
collect the off-diagonal entries of \(\bmLambda\). Thus, within this subsection, the covariance matrix of the observation vector \(\bmy\) is viewed as a function of \(\bmeta\) alone, written \(\bmSigma(\bmeta)\).
Up to an additive constant, the resulting conditional log-likelihood is
\begin{equation}
\ell(\bmeta;\bmy)
=
-\frac{1}{2}\log\det\bmSigma(\bmeta)
-\frac{1}{2}\bmy^\top
\bmSigma(\bmeta)^{-1}
\bmy.
\label{eq:loglik_eta}
\end{equation}

For \(u=(p,q)\in\mathcal{P}\), define
\(\dot{\bmSigma}_{u}:=\partial \bmSigma(\bmeta)/\partial \Lambda_{pq}\).
Under the multi-output GP model of Section~\ref{subsec:mvgp_model}, \(\dot{\bmSigma}_{u}\) depends only on the kernel and the output-specific designs. Define the kernel block determined by the designs
\(\bmK^{(0)}_{pq}:=\bigl[k\!\bigl(\bmx^{(p)}_a,\bmx^{(q)}_b\bigr)\bigr]_{1\le a\le N_p,\ 1\le b\le N_q}\).
Then \(\dot{\bmSigma}_{u}\) is the block matrix whose \((p,q)\) block is \(\bmK^{(0)}_{pq}\), whose \((q,p)\) block is \(\bmK^{(0)\top}_{pq}\), and whose remaining blocks are zero.
For this centred Gaussian covariance model, the standard Fisher information formula gives
\[
\mathcal{I}_{uv}(\bmeta)
=
\frac{1}{2}
\tr\!\left(
\bmSigma(\bmeta)^{-1}
\dot{\bmSigma}_{u}
\bmSigma(\bmeta)^{-1}
\dot{\bmSigma}_{v}
\right),
\]
where \(\tr(\cdot)\) denotes the matrix trace. In particular, for \(u=(p,q)\),
\(
\mathcal{I}_{u,u}(\bmeta)
=
\frac{1}{2}
\left\|
\bmSigma(\bmeta)^{-1/2}
\dot{\bmSigma}_{u}
\bmSigma(\bmeta)^{-1/2}
\right\|_F^2,
\)
where \(\|\cdot\|_F\) denotes the Frobenius norm, defined by
\(
\|\mathbf{A}\|_F^2
:=
\tr(\mathbf{A}^\top \mathbf{A}).
\)
Thus the conditional estimability of \(\Lambda_{pq}\) is governed by the magnitude of the covariance perturbation induced by that parameter. 

Under the separable multi-output GP model, a perturbation in \(\Lambda_{pq}\) affects only the \((p,q)\) and \((q,p)\) cross-covariance blocks, and the corresponding derivative matrix is determined by the cross-design kernel block \(\bmK^{(0)}_{pq}\). It is therefore natural to measure the raw size of this perturbation through the squared Frobenius norm of \(\bmK^{(0)}_{pq}\).
For \(p<q\), define the kernel-weighted cross-design interaction mass
\begin{equation}
W_{pq}
:=
\left\|
\bmK^{(0)}_{pq}
\right\|_F^2
=
\sum_{a=1}^{N_p}\sum_{b=1}^{N_q}
k\!\bigl(\bmx_a^{(p)},\bmx_b^{(q)}\bigr)^2.
\label{eq:cross_design_information_mass}
\end{equation}
This quantity measures the total squared signal through which \(\Lambda_{pq}\) enters the covariance model. When many cross-design pairs have large kernel values, perturbations in \(\Lambda_{pq}\) produce a substantial change in the covariance structure. When most pairs have negligible kernel interaction, their imprint is weak. Thus \(W_{pq}\) is a kernel-weighted effective count of informative cross-design pairs. Exact co-location is not required. Disjoint but locally interleaved designs can still have nontrivial \(W_{pq}\), whereas geometric separation suppresses it.

\begin{theorem}[Information scaling with cross-design interaction mass]
\label{thm:information_scaling}
Let \(u=(p,q)\in\mathcal{P}\). Then
$
\lambda_{\min}\!\bigl(\bmSigma(\bmeta)^{-1}\bigr)^2
\,W_{pq}
\;\le\;
\mathcal{I}_{u,u}(\bmeta)
\;\le\;
\lambda_{\max}\!\bigl(\bmSigma(\bmeta)^{-1}\bigr)^2
\,W_{pq},
$
where \(\lambda_{\min}(\cdot)\) and \(\lambda_{\max}(\cdot)\) denote the smallest and largest eigenvalues, respectively.
\end{theorem}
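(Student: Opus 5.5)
The plan is to write the Fisher information entry as a quadratic form in \(\vecop(\dot{\bmSigma}_u)\) whose matrix is a Kronecker product. A Rayleigh-quotient argument then sandwiches it between extreme eigenvalues times \(\|\dot{\bmSigma}_u\|_F^2\). Finally, \(\|\dot{\bmSigma}_u\|_F^2\) is computed exactly in terms of \(W_{pq}\).

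First, set \(\bmA:=\dot{\bmSigma}_u\) and \(\bmSigma:=\bmSigma(\bmeta)\). Both are symmetric, and \(\bmSigma\) is positive definite, so \(\bmSigma^{-1}\) is symmetric positive definite. I would use the identity \(\tr(\bmA\bmB\bmC\bmD)=\vecop(\bmA^\top)^\top(\bmD^\top\otimes\bmB)\vecop(\bmC)\), or its equivalent form \(\tr(\bmSigma^{-1}\bmA\bmSigma^{-1}\bmA)=\vecop(\bmA)^\top\bigl(\bmSigma^{-1}\otimes\bmSigma^{-1}\bigr)\vecop(\bmA)\). This uses the symmetry of \(\bmA\) and of \(\bmSigma^{-1}\). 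Hence
\[
\mathcal{I}_{u,u}(\bmeta)=\tfrac12\,\vecop(\bmA)^\top\bigl(\bmSigma^{-1}\otimes\bmSigma^{-1}\bigr)\vecop(\bmA).
\]

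Second, the eigenvalues of \(\bmSigma^{-1}\otimes\bmSigma^{-1}\) are all the products \(\mu_i\mu_k\) of eigenvalues of \(\bmSigma^{-1}\). Since every \(\mu_i>0\), these products lie in \([\lambda_{\min}(\bmSigma^{-1})^2,\lambda_{\max}(\bmSigma^{-1})^2]\). The Rayleigh quotient bound then gives
\[
\tfrac12\lambda_{\min}(\bmSigma^{-1})^2\|\bmA\|_F^2\le\mathcal{I}_{u,u}\le\tfrac12\lambda_{\max}(\bmSigma^{-1})^2\|\bmA\|_F^2,
\]
because \(\|\vecop(\bmA)\|_2^2=\|\bmA\|_F^2\).

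An alternative route avoids Kronecker products. Write \(\mathcal{I}_{u,u}=\tfrac12\|\bmSigma^{-1/2}\bmA\bmSigma^{-1/2}\|_F^2\) and use \(\|\bmB\bmA\bmC\|_F\le\|\bmB\|_2\|\bmA\|_F\|\bmC\|_2\) for the upper bound. For the lower bound, apply the same inequality to \(\bmA=\bmSigma^{1/2}(\bmSigma^{-1/2}\bmA\bmSigma^{-1/2})\bmSigma^{1/2}\) and use \(\|\bmSigma^{1/2}\|_2^2=\lambda_{\max}(\bmSigma)=1/\lambda_{\min}(\bmSigma^{-1})\). I would present the Kronecker version, since both bounds come out in a single step.

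Third, I compute \(\|\bmA\|_F^2\). By the description of \(\dot{\bmSigma}_u\) preceding \eqref{eq:cross_design_information_mass}, its only nonzero blocks are \(\bmK^{(0)}_{pq}\) in position \((p,q)\) and \(\bmK^{(0)\top}_{pq}\) in position \((q,p)\). Since \(p\neq q\), these blocks do not overlap, so
\[
\|\bmA\|_F^2=2\|\bmK^{(0)}_{pq}\|_F^2=2W_{pq}.
\]
Substituting this cancels the factor \(\tfrac12\) and yields exactly the claimed inequalities.

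No step is genuinely hard. The only points needing care are the symmetry conditions behind the vec–trace identity, and the fact that the two nonzero blocks of \(\dot{\bmSigma}_u\) lie off the diagonal and are distinct, which gives the factor \(2\).
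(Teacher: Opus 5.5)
Your proof is correct and follows the same route as the paper. Both write \(\mathcal{I}_{u,u}=\tfrac12\|\bmSigma^{-1/2}\dot{\bmSigma}_u\bmSigma^{-1/2}\|_F^2\), sandwich it between \(\tfrac12\lambda_{\min}(\bmSigma^{-1})^2\|\dot{\bmSigma}_u\|_F^2\) and \(\tfrac12\lambda_{\max}(\bmSigma^{-1})^2\|\dot{\bmSigma}_u\|_F^2\), and use \(\|\dot{\bmSigma}_u\|_F^2=2W_{pq}\); the only difference is that your Kronecker/Rayleigh-quotient step actually proves this spectral sandwich inequality, which the paper simply asserts as a standard fact.
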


Theorem~\ref{thm:information_scaling} is the basic bridge from geometry to dependence estimation. Fisher information for \(\Lambda_{pq}\) scales with the squared kernel interaction between \(\mathcal{X}_p\) and \(\mathcal{X}_q\).

\begin{remark}[Extension to the linear model of coregionalisation]
\label{rem:lmc_extension}
The same calculation applies component-wise to the linear model of coregionalisation \citep{goulard1992linear}. For a component kernel \(k_\ell\), the information for the corresponding off-diagonal coefficient is bounded as in Theorem~\ref{thm:information_scaling} with \(W_{pq}\) replaced by \(W^{(\ell)}_{pq}:=\|[k_\ell(\bmx_a^{(p)},\bmx_b^{(q)})]_{a,b}\|_F^2\); see Appendix~\ref{proof:lmc_information_scaling}. The geometric bounds of Section~\ref{subsec:geometric_information_bounds} extend in the same way, with \(\psi\) replaced by \(\psi_\ell\). The geometry relevant to estimation therefore extends to finite sums of separable components \footnote{The local prediction bounds of Section~\ref{sec:oracle} remain statements for the separable model.}, each with its own lengthscale. 
\end{remark}

The preceding calculation treats nuisance parameters as fixed. In joint estimation, the relevant quantity is the efficient information \(\mathcal{I}_{\eta\eta\cdot\nu}\) from Eq.~\eqref{eq:efficient_information}. The next result shows that at the independence point this distinction disappears exactly.

\begin{theorem}[Orthogonality and exact efficient information at independence]
\label{thm:orthogonality}
Consider the model of Section~\ref{subsec:mvgp_model} with parameter vector
\(\bmtheta=(\bmeta,\bmnu)\) as in Eq.~\eqref{eq:theta_partition}, where
\(\bmnu\) collects the diagonal entries of \(\bmLambda\), the noise variances and any kernel hyperparameters. At any independence point \(\bmeta=\bmzero\),
\begin{enumerate}[label=(\roman*),nosep]
\item the cross-dependence parameters are orthogonal in the information metric to the nuisance parameters:
\[
\mathcal{I}_{\eta\nu}=\bmzero,
\qquad
\mathcal{I}_{\eta\eta\cdot\nu}=\mathcal{I}_{\eta\eta};
\]
\item the cross-dependence information matrix \(\mathcal{I}_{\eta\eta}\) is diagonal. For \(u=(p,q)\in\mathcal{P}\),
\begin{equation}
\mathcal{I}_{u,u}
=
\left\|
\bmSigma_{pp}^{-1/2}\,
\bmK^{(0)}_{pq}\,
\bmSigma_{qq}^{-1/2}
\right\|_F^2,
\label{eq:whitened_interaction_mass}
\end{equation}
where \(\bmSigma_{ii}=\Lambda_{ii}\bmK^{(0)}_{ii}+\tau_i^2\bmI_{N_i}\), so the diagonal information is a whitened interaction mass;
\item if \(\tau_i^2>0\) for all outputs, then the whitened interaction mass is bounded by the raw interaction mass:
\begin{equation}
\frac{W_{pq}}
{
\left(\Lambda_{pp}\lambda_{\max}(\bmK^{(0)}_{pp})+\tau_p^2\right)
\left(\Lambda_{qq}\lambda_{\max}(\bmK^{(0)}_{qq})+\tau_q^2\right)
}
\le
\mathcal{I}_{u,u}
\le
\frac{W_{pq}}{\tau_p^2\tau_q^2}.
\label{eq:orthogonality_explicit_bounds}
\end{equation}
\end{enumerate}
\end{theorem}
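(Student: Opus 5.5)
At the independence point \(\bmeta=\bmzero\) every cross block vanishes, so \(\bmSigma=\mathrm{blockdiag}(\bmSigma_{11},\ldots,\bmSigma_{DD})\) and \(\bmSigma^{-1}=\mathrm{blockdiag}(\bmSigma_{11}^{-1},\ldots,\bmSigma_{DD}^{-1})\). The whole proof runs through the Gaussian Fisher information formula \(\mathcal{I}_{uv}=\tfrac12\tr(\bmSigma^{-1}\dot{\bmSigma}_u\bmSigma^{-1}\dot{\bmSigma}_v)\), which holds for any pair of covariance parameters. It also uses the sparsity pattern of the derivative matrices.

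\textbf{Part (i).} Every nuisance parameter \(\nu\) enters \(\bmSigma\) only through diagonal blocks, because \(\bmSigma_{ii}=\Lambda_{ii}\bmK^{(0)}_{ii}+\tau_i^2\bmI\). The kernel hyperparameters also appear in the off-diagonal blocks, but only multiplied by \(\Lambda_{pq}\), and that product vanishes at \(\bmeta=\bmzero\). Hence \(\dot{\bmSigma}_\nu\) is block diagonal at the independence point. By contrast, \(\dot{\bmSigma}_u\) for \(u=(p,q)\) is supported only on the \((p,q)\) and \((q,p)\) off-diagonal blocks.

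Conjugating by the block-diagonal \(\bmSigma^{-1}\) preserves both patterns. So \(\bmSigma^{-1}\dot{\bmSigma}_u\bmSigma^{-1}\) remains purely off-diagonal, and its product with a block-diagonal matrix has zero diagonal blocks and therefore zero trace. This gives \(\mathcal{I}_{\eta\nu}=\bmzero\), and the Schur complement in Eq.~\eqref{eq:efficient_information} reduces to \(\mathcal{I}_{\eta\eta}\).

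\textbf{Part (ii).} For \(u=(p,q)\) and \(v=(r,s)\), the product \(\bmSigma^{-1}\dot{\bmSigma}_u\bmSigma^{-1}\dot{\bmSigma}_v\) has nonzero diagonal blocks only when \(\{p,q\}=\{r,s\}\). So \(\mathcal{I}_{\eta\eta}\) is diagonal.

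For \(u=v\), write \(\bmA:=\bmSigma_{pp}^{-1}\bmK^{(0)}_{pq}\bmSigma_{qq}^{-1}\bmK^{(0)\top}_{pq}\). The two nonzero diagonal blocks of the product are \(\bmA\) and its analogue with \(p,q\) swapped, and both have trace \(\tr\bmA\). Therefore
\[
\mathcal{I}_{u,u}=\tfrac12\cdot 2\,\tr\bmA=\|\bmSigma_{pp}^{-1/2}\bmK^{(0)}_{pq}\bmSigma_{qq}^{-1/2}\|_F^2,
\]
where the last equality follows by cyclicity of the trace. This is Eq.~\eqref{eq:whitened_interaction_mass}.

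\textbf{Part (iii).} I would use the standard inequality
\[
\lambda_{\min}(\mathbf{P})\lambda_{\min}(\mathbf{Q})\|\bmM\|_F^2\le\|\mathbf{P}^{1/2}\bmM\mathbf{Q}^{1/2}\|_F^2\le\lambda_{\max}(\mathbf{P})\lambda_{\max}(\mathbf{Q})\|\bmM\|_F^2
\]
for positive semidefinite \(\mathbf{P},\mathbf{Q}\). It follows by bounding \(\tr(\mathbf{P}\bmM\mathbf{Q}\bmM^\top)\) one side at a time, using \(\tr(\mathbf{P}\mathbf{B})\le\lambda_{\max}(\mathbf{P})\tr\mathbf{B}\) for \(\mathbf{B}\succeq 0\).

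Apply it with \(\mathbf{P}=\bmSigma_{pp}^{-1}\) and \(\mathbf{Q}=\bmSigma_{qq}^{-1}\). Since \(\bmK^{(0)}_{ii}\succeq0\) and \(\Lambda_{ii}>0\), the eigenvalues of \(\bmSigma_{ii}\) lie in \([\tau_i^2,\ \Lambda_{ii}\lambda_{\max}(\bmK^{(0)}_{ii})+\tau_i^2]\). Inverting this interval gives the two-sided bound Eq.~\eqref{eq:orthogonality_explicit_bounds}. The condition \(\tau_i^2>0\) is needed only for the upper bound.

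\textbf{Main obstacle.} The step needing care is part (i): one must check that the derivatives with respect to the kernel hyperparameters are block diagonal at \(\bmeta=\bmzero\). This holds because their off-diagonal blocks are \(\Lambda_{pq}\,\partial\bmK^{(0)}_{pq}\), which vanish at independence. The rest is block bookkeeping.
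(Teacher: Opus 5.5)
Your proposal is correct and follows the paper's proof essentially step for step. Block-diagonality of $\bmSigma^{-1}$ and of the nuisance derivatives at $\bmeta=\bmzero$ kills the relevant traces in (i) and for distinct cross pairs in (ii), the $(p,p)$ and $(q,q)$ blocks give the whitened mass, and eigenvalue (Loewner) bounds on $\bmSigma_{ii}$ give (iii); your explicit check that the kernel-hyperparameter derivatives have off-diagonal blocks $\Lambda_{pq}\,\partial\bmK^{(0)}_{pq}$, which vanish at independence, is a detail the paper only asserts.
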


Theorem~\ref{thm:orthogonality} gives an exact efficient information interpretation to \(W_{pq}\) at independence. For testing \(H_0:\Lambda_{pq}=0\) against local alternatives, the score test information for this output pair is the whitened interaction mass in Eq.~\eqref{eq:whitened_interaction_mass}. The raw interaction mass \(W_{pq}\) controls this quantity through Eq.~\eqref{eq:orthogonality_explicit_bounds}. Thus geometric separation makes cross-output dependence locally hard to detect before it is hard to estimate.
Away from independence, \(\mathcal{I}_{\eta\nu}\) need not vanish. Under the separable model, \(\bmLambda\) enters multiplicatively into the covariance blocks while the kernel parameters enter through \(\bmK^{(0)}_{pq}\), so the information coupling the two parameter blocks depends on the extent to which kernel perturbations can mimic changes in \(\bmLambda\).
The independence result is nevertheless the relevant local benchmark for deciding whether cross-output dependence is detectable from a given heterotopic design.

\subsection{Interaction Mass Bounds and Overlap Insufficiency}
\label{subsec:geometric_information_bounds}

The previous subsection identifies \(W_{pq}\) as the bridge from design geometry
to dependence estimation. We now bound \(W_{pq}\) directly using the geometric
summaries. The point is not that exact overlap
is irrelevant, but that it is too coarse. The same overlap value can correspond
to very different interaction masses.

\begin{proposition}[Directed coverage lower bound]
\label{prop:coverage_information_bound}
Assume Eq.~\eqref{eq:radial_kernel}. For outputs \(p\) and \(q\), and any \(r>0\),
$
W_{pq}
\ge
N_p\,\DC_{p\to q}(r)\,\psi(r)^2.
$
\end{proposition}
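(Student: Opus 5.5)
The argument is a direct lower bound on the double sum defining \(W_{pq}\) in Eq.~\eqref{eq:cross_design_information_mass}. The plan is to keep only one term per row and then count the rows that survive. Since \(\psi\) maps into \(\mathbb{R}_+\), every summand \(k(\bmx_a^{(p)},\bmx_b^{(q)})^2\) is nonnegative. We may therefore discard terms freely. For each \(a\in\{1,\ldots,N_p\}\), replace the inner sum over \(b\) by its largest term:
\[
W_{pq}\ge\sum_{a=1}^{N_p}\max_{1\le b\le N_q}k\!\bigl(\bmx_a^{(p)},\bmx_b^{(q)}\bigr)^2 .
\]

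Next we restrict the outer sum to the index set \(A_r:=\{a:\delta_{p\to q}(\bmx_a^{(p)})\le r\}\). By Definition~\ref{def:coverage_function}, \(|A_r|=N_p\,\DC_{p\to q}(r)\). For each \(a\in A_r\), Proposition~\ref{prop:coverage_kernel_support} gives \(\max_b k(\bmx_a^{(p)},\bmx_b^{(q)})\ge\psi(r)\). Directly, some \(\bmx_b^{(q)}\) attains the minimum in Eq.~\eqref{eq:cross_design_distance}, and since \(\psi\) is non-increasing, \(\psi(\delta_{p\to q}(\bmx_a^{(p)}))\ge\psi(r)\).

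Both sides of this inequality are nonnegative, so it survives squaring. Dropping the nonnegative terms with \(a\notin A_r\) then yields
\[
W_{pq}\ge |A_r|\,\psi(r)^2 = N_p\,\DC_{p\to q}(r)\,\psi(r)^2 .
\]

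No step is a real obstacle. The only points needing care are the nonnegativity of \(k\), which is needed both to drop terms and to square the inequality, and the fact that the minimum over the finite set \(\mathcal{X}_q\) is attained. Both follow from the standing assumptions in Eq.~\eqref{eq:radial_kernel}.

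By symmetry of \(W_{pq}\) in its two designs, the same argument also gives the bound with the roles of \(p\) and \(q\) exchanged. One could take the maximum of the two bounds, although the statement only requires the directed version.
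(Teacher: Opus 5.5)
Your proof is correct and follows essentially the same route as the paper: for each covered index \(a\), the paper takes a point of \(\mathcal{X}_q\) within distance \(r\) of \(\bmx_a^{(p)}\) and bounds that single squared kernel term below by \(\psi(r)^2\) using monotonicity of \(\psi\). It then sums over the covered indices, which is your row-maximum-and-restrict step. One small correction: dropping terms needs only that squares are nonnegative, and \(\psi\ge 0\) is used solely when you square the inequality \(\max_b k\ge\psi(r)\).
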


Proposition~\ref{prop:coverage_information_bound} retains the full coverage profile through \(\DC_{p\to q}(r)\). Combining it with Proposition~\ref{prop:proximity_controls_coverage} gives, for every \(r>0\), the scalar proximity certificate
\begin{equation}
W_{pq}
\;\ge\;
N_p\left(1-\frac{\pi_{p\to q}}{r}\right)_{\!+}\psi(r)^2
\;= \;
N_p\left(1-\frac{\tilde{\pi}_{p\to q}\,\bar{s}_q}{r}\right)_{\!+}\psi(r)^2,
\label{eq:proximity_certificate}
\end{equation}
where \((x)_+:=\max(x,0)\). Thus favourable cross-design proximity yields a nontrivial lower bound on \(W_{pq}\) and, through Theorem~\ref{thm:information_scaling}, on the Fisher information for \(\Lambda_{pq}\). In particular, if \(\pi_{p\to q}\le \alpha r\) for some \(\alpha\in[0,1)\), then \(\DC_{p\to q}(r)\ge 1-\alpha\), so
\(W_{pq}\ge N_p(1-\alpha)\,\psi(r)^2\)
and
\(\mathcal{I}_{u,u}(\bmeta)\ge \lambda_{\min}\!\bigl(\bmSigma(\bmeta)^{-1}\bigr)^2\,N_p(1-\alpha)\,\psi(r)^2\)
for \(u=(p,q)\). Moreover, by Proposition~\ref{prop:coverage_kernel_support}, at least a proportion \(1-\alpha\) of the design points of output \(p\) then receive kernel support at least \(\psi(r)\) from output \(q\). Directed proximity therefore controls both borrowing support and dependence information.

The complementary direction is an upper bound. If every cross-design pair is
geometrically separated, the interaction mass is limited to the tail of the
correlation function.

\begin{proposition}[Upper bound under geometric separation]
\label{prop:separation_upper_bound}
Assume Eq.~\eqref{eq:radial_kernel}. If there exists \(d_0>0\) such that
\(\delta_{p\to q}(\bmx)\ge d_0\) for all \(\bmx\in\mathcal{X}_p\), then, for \(u=(p,q)\),
\[
W_{pq}\le N_pN_q\,\psi(d_0)^2,
\qquad
\mathcal{I}_{u,u}(\bmeta)
\le
\lambda_{\max}\!\bigl(\bmSigma(\bmeta)^{-1}\bigr)^2
\,N_pN_q\,\psi(d_0)^2.
\]
\end{proposition}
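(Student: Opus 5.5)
The plan is to bound $W_{pq}$ term by term and then pass the bound through Theorem~\ref{thm:information_scaling}. Only the separation assumption, monotonicity of $\psi$, and the definition of $W_{pq}$ in Eq.~\eqref{eq:cross_design_information_mass} are needed.

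\textbf{Step 1 (pairwise distances).} Fix $a\in\{1,\ldots,N_p\}$ and $b\in\{1,\ldots,N_q\}$. By the definition of the cross-design nearest-neighbour distance in Eq.~\eqref{eq:cross_design_distance} and the hypothesis,
\[
\|\bmx_a^{(p)}-\bmx_b^{(q)}\|\ge \delta_{p\to q}(\bmx_a^{(p)})\ge d_0 .
\]

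\textbf{Step 2 (kernel values).} Under the radial assumption Eq.~\eqref{eq:radial_kernel}, $\psi$ is non-increasing and takes values in $\mathbb{R}_+$. Hence
\[
0\le k(\bmx_a^{(p)},\bmx_b^{(q)})=\psi(\|\bmx_a^{(p)}-\bmx_b^{(q)}\|)\le \psi(d_0).
\]
Nonnegativity is what allows us to square this inequality. The only point worth flagging is that the bound would otherwise require $|\psi|$; here nonnegativity is given by the codomain of $\psi$.

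\textbf{Step 3 (interaction mass).} Squaring gives $k(\bmx_a^{(p)},\bmx_b^{(q)})^2\le\psi(d_0)^2$ for every pair. Summing over the $N_pN_q$ pairs yields $W_{pq}\le N_pN_q\,\psi(d_0)^2$.

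\textbf{Step 4 (information).} Substituting this into the upper inequality of Theorem~\ref{thm:information_scaling},
\[
\mathcal{I}_{u,u}(\bmeta)\le \lambda_{\max}\bigl(\bmSigma(\bmeta)^{-1}\bigr)^2 W_{pq},
\]
gives the stated bound on $\mathcal{I}_{u,u}(\bmeta)$ at once.

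No step is a genuine obstacle; the argument is a direct chaining of definitions. The only care needed is in Step 2, noting $\psi\ge0$ before squaring, and in using the directed distance from $\mathcal{X}_p$. The hypothesis is stated only in that direction, but since it is a lower bound on the distance of every $\bmx\in\mathcal{X}_p$ to all of $\mathcal{X}_q$, it automatically covers every cross pair.
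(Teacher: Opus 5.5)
Your proposal is correct and follows the paper's proof: separation and monotonicity of $\psi$ bound every cross-design kernel entry by $\psi(d_0)$, summing the $N_pN_q$ squared entries bounds $W_{pq}$, and the upper inequality of Theorem~\ref{thm:information_scaling} then gives the bound on $\mathcal{I}_{u,u}(\bmeta)$. Your explicit remark that $\psi\ge 0$ is needed before squaring is a small point the paper leaves implicit.
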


Thus separation of a few lengthscales can dominate the nominal sample-size factor \(N_pN_q\). Additional observations do little for estimating cross-output dependence if all cross-design pairs lie in the tail of the correlation function.

Combining the lower and upper bounds gives the practical lesson about overlap.
Zero exact overlap is not a statistical boundary. It is a warning that a more
geometric diagnostic is needed.

\begin{theorem}[Zero-overlap dichotomy]
\label{thm:zero_overlap_dichotomy}
Assume Eq.~\eqref{eq:radial_kernel}. Consider a sequence of two-output
heterotopic designs \(\bigl(\mathcal{X}_p^{(n)},\mathcal{X}_q^{(n)}\bigr)\)
with design sizes \(N_p^{(n)}\) and \(N_q^{(n)}\), such that
\(\omega_{pq}^{(n)}=0\) for all \(n\). Then:
\begin{enumerate}[label=(\roman*),nosep]
\item \textbf{Interleaved zero-overlap.}
If \(\pi_{p\to q}^{(n)}\to 0\) and \(\pi_{q\to p}^{(n)}\to 0\), then for every fixed \(r>0\),
\[
\DC_{p\to q}^{(n)}(r)\to 1,
\qquad
\DC_{q\to p}^{(n)}(r)\to 1,
\]
and
\[
\liminf_{n\to\infty}\frac{W_{pq}^{(n)}}{N_p^{(n)}} \ge \psi(r)^2,
\qquad
\liminf_{n\to\infty}\frac{W_{pq}^{(n)}}{N_q^{(n)}} \ge \psi(r)^2.
\]
\item \textbf{Separated zero-overlap.}
If there exists \(d_0>0\) such that
\(\min_{\bmx\in\mathcal{X}_p^{(n)},\,\bmx'\in\mathcal{X}_q^{(n)}}\|\bmx-\bmx'\|\ge d_0\)
for all \(n\), then for every \(r<d_0\),
\[
\DC_{p\to q}^{(n)}(r)=0,
\qquad
\DC_{q\to p}^{(n)}(r)=0,
\]
so the lower bounds in Proposition~\ref{prop:coverage_information_bound} and Eq.~\eqref{eq:proximity_certificate} are zero for every \(r<d_0\). Moreover,
\(
W_{pq}^{(n)}
\le
N_p^{(n)}N_q^{(n)}\psi(d_0)^2.
\)
\end{enumerate}
\end{theorem}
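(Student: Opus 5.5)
The plan is to handle the two regimes separately, reducing each to results already established in Section~\ref{subsec:geometric_information_bounds}. Throughout, the zero-overlap hypothesis \(\omega_{pq}^{(n)}=0\) plays no active role. It only fixes the setting and shows that both regimes are compatible with zero exact overlap, which is the point of the dichotomy.

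For part (i), fix \(r>0\). Proposition~\ref{prop:proximity_controls_coverage}, applied to each design pair in the sequence, gives
\[
1\ge \DC_{p\to q}^{(n)}(r)\ge 1-\frac{\pi_{p\to q}^{(n)}}{r}.
\]
The right-hand side tends to \(1\) because \(\pi_{p\to q}^{(n)}\to 0\), so \(\DC_{p\to q}^{(n)}(r)\to 1\). The symmetric argument with \(\pi_{q\to p}^{(n)}\to 0\) gives \(\DC_{q\to p}^{(n)}(r)\to 1\).

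Next, Proposition~\ref{prop:coverage_information_bound} yields \(W_{pq}^{(n)}/N_p^{(n)}\ge \DC_{p\to q}^{(n)}(r)\,\psi(r)^2\). Taking \(\liminf\) gives the first bound. For the second bound, note that \(W_{pq}\) is a double sum of \(k(\bmx_a^{(p)},\bmx_b^{(q)})^2\), and \(k\) is symmetric, so \(W_{pq}=W_{qp}\). Applying Proposition~\ref{prop:coverage_information_bound} with the roles of \(p\) and \(q\) exchanged then gives \(W_{pq}^{(n)}\ge N_q^{(n)}\DC_{q\to p}^{(n)}(r)\psi(r)^2\), and the same limit argument finishes part (i).

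For part (ii), the minimal pairwise distance is at least \(d_0\), so \(\delta_{p\to q}(\bmx)\ge d_0>r\) for every \(\bmx\in\mathcal{X}_p^{(n)}\). Every indicator in Definition~\ref{def:coverage_function} therefore vanishes, and \(\DC_{p\to q}^{(n)}(r)=0\); the same holds for \(\DC_{q\to p}^{(n)}(r)\) by symmetry of the pairwise distance condition. As a result, the right-hand side of Proposition~\ref{prop:coverage_information_bound} is zero.

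The proximity certificate \eqref{eq:proximity_certificate} also vanishes. Since \(\pi_{p\to q}^{(n)}\) averages quantities that are each at least \(d_0\), we have \(\pi_{p\to q}^{(n)}\ge d_0>r\), so the positive part \((1-\pi_{p\to q}^{(n)}/r)_+\) is zero.

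Finally, the hypothesis implies \(\delta_{p\to q}(\bmx)\ge d_0\) on \(\mathcal{X}_p^{(n)}\), which is exactly the assumption of Proposition~\ref{prop:separation_upper_bound}. That proposition gives \(W_{pq}^{(n)}\le N_p^{(n)}N_q^{(n)}\psi(d_0)^2\). If one prefers a direct argument, each term satisfies \(k=\psi(\|\bmx-\bmx'\|)\le\psi(d_0)\) because \(\psi\) is non-increasing, and \(k\ge 0\), so squaring preserves the inequality; summing over the \(N_pN_q\) pairs gives the bound.

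No step is a genuine obstacle. The only points needing care are two pieces of bookkeeping. The first is the use of the symmetry \(W_{pq}=W_{qp}\) to obtain the \(N_q\)-normalised bound in part (i). The second is to note explicitly in part (ii) that \(\psi\ge 0\), so that \(\psi(\cdot)^2\) inherits the monotonicity of \(\psi\).
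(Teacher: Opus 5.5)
Your proposal is correct and follows the paper's proof essentially line for line. For part (i) you use Proposition~\ref{prop:proximity_controls_coverage} and then Proposition~\ref{prop:coverage_information_bound} in both directions; for part (ii) you use the vanishing of the coverage indicators, \(\pi_{p\to q}^{(n)}\ge d_0>r\), and Proposition~\ref{prop:separation_upper_bound}. Your remarks on the symmetry \(W_{pq}=W_{qp}\) and on \(\psi\ge 0\) make explicit two points the paper leaves implicit; the second is already built into \(\psi:[0,\infty)\to\mathbb{R}_+\).
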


In the interleaved regime, in the sense of Remark~\ref{rem:interleaved_separated}, zero overlap coexists with widespread local support and a lower bound on interaction mass proportional to the design size at every radius relevant to the kernel. In the separated regime, all local support below the gap \(d_0\) disappears and the full interaction mass is bounded by the kernel tail at that gap.
Realised prediction gain still depends on the residual auxiliary channel, but overlap alone
cannot distinguish the two regimes. Experiment~1 gives a finite numerical version of this contrast. Zero-overlap designs that are interleaved, moderately displaced or separated produce very different interaction masses and oracle gains.
\section{Net Value of Borrowing Under Joint Estimation}
\label{sec:net_benefit}

Sections~\ref{sec:oracle} and~\ref{sec:estimation} show that heterotopic geometry enters joint modelling in two connected ways. On the prediction side, it controls the oracle gain available through the residual auxiliary channel. On the estimation side, it controls the information available for learning cross-output dependence. This section combines these two effects through squared-error risk for the latent target \(f_j(\bmx_\star)\), yielding a local diagnostic for when the gain from borrowing is large enough to offset the additional estimation variability of a fitted joint model. 
For observed responses, the same logic transfers to criteria based on predictive density \citep{gneiting2007strictly}, which the experiments in Section~\ref{sec:experiments} report through the mean log predictive density.

Fix an output \(j\in\{1,\ldots,D\}\) and a prediction point \(\bmx_\star\in\mathcal{X}\). Let
$
g_j^{\mathrm{mv}}(\bmx_\star;\bmtheta)
$
denote the multi-output Gaussian process posterior mean for \(f_j(\bmx_\star)\) as a function of the full multivariate parameter vector
\(\bmtheta\), partitioned as in Eq.~\eqref{eq:theta_partition}. Let
$
g_j^{\mathrm{ind}}(\bmx_\star;\bmvartheta_j)
$
denote the corresponding posterior mean under the independent GP fitted to output \(j\), with parameter vector \(\bmvartheta_j\). Write \(\bmtheta_0\) and \(\bmvartheta_{j,0}\) for the true parameter values, and let \(\widehat{\bmtheta}\) and \(\widehat{\bmvartheta}_j\) be the corresponding estimators. Define the prediction risks
\begin{equation*}
R_{\mathrm{mv}}^{(j)}(\bmx_\star):=
\mathbb{E}\!\left[
\bigl(
f_j(\bmx_\star)
-
g_j^{\mathrm{mv}}(\bmx_\star;\widehat{\bmtheta})
\bigr)^2
\right], \quad
R_{\mathrm{ind}}^{(j)}(\bmx_\star):=
\mathbb{E}\!\left[
\bigl(
f_j(\bmx_\star)
-
g_j^{\mathrm{ind}}(\bmx_\star;\widehat{\bmvartheta}_j)
\bigr)^2
\right].
\end{equation*}

\begin{proposition}[Risk decomposition]
\label{prop:risk_decomp}
Under the correctly specified Gaussian model,
for every \(j\) and \(\bmx_\star\),
\(R_{\mathrm{mv}}^{(j)}(\bmx_\star)=V_{\mathrm{joint}}^{(j)}(\bmx_\star)+H_{\mathrm{mv}}^{(j)}(\bmx_\star)\)
and
\(R_{\mathrm{ind}}^{(j)}(\bmx_\star)=V_{\mathrm{ind}}^{(j)}(\bmx_\star)+H_{\mathrm{ind}}^{(j)}(\bmx_\star)\),
where
\(H_{\mathrm{mv}}^{(j)}(\bmx_\star):=\mathbb{E}[(g_j^{\mathrm{mv}}(\bmx_\star;\widehat{\bmtheta})-g_j^{\mathrm{mv}}(\bmx_\star;\bmtheta_0))^2]\)
and
\(H_{\mathrm{ind}}^{(j)}(\bmx_\star):=\mathbb{E}[(g_j^{\mathrm{ind}}(\bmx_\star;\widehat{\bmvartheta}_j)-g_j^{\mathrm{ind}}(\bmx_\star;\bmvartheta_{j,0}))^2]\)
are the estimation penalties.
\end{proposition}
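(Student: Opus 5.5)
The plan is the usual bias–variance split around the oracle posterior mean. The one non-routine point is to justify why the cross term vanishes even though the plug-in estimator is data-dependent.

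\textbf{Joint predictor.} Write
\[
f_j(\bmx_\star)-g_j^{\mathrm{mv}}(\bmx_\star;\widehat{\bmtheta})
=\bigl\{f_j(\bmx_\star)-g_j^{\mathrm{mv}}(\bmx_\star;\bmtheta_0)\bigr\}
+\bigl\{g_j^{\mathrm{mv}}(\bmx_\star;\bmtheta_0)-g_j^{\mathrm{mv}}(\bmx_\star;\widehat{\bmtheta})\bigr\}.
\]
Squaring and taking expectations gives three terms.
\begin{itemize}
\item The first square has expectation $V_{\mathrm{joint}}^{(j)}(\bmx_\star)$. Under correct specification, $g_j^{\mathrm{mv}}(\bmx_\star;\bmtheta_0)=\mathbb{E}[f_j(\bmx_\star)\mid\bmy]$. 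In the Gaussian model the conditional variance $\operatorname{Var}(f_j(\bmx_\star)\mid\bmy)$ is nonrandom, so $\mathbb{E}\{(f_j-\mathbb{E}[f_j\mid\bmy])^2\}$ equals that conditional variance.
\item The second square is $H_{\mathrm{mv}}^{(j)}(\bmx_\star)$ by definition.
\item Everything therefore hinges on the cross term
\[
2\,\mathbb{E}\bigl[\{f_j(\bmx_\star)-\mathbb{E}[f_j(\bmx_\star)\mid\bmy]\}\,\{g_j^{\mathrm{mv}}(\bmx_\star;\bmtheta_0)-g_j^{\mathrm{mv}}(\bmx_\star;\widehat{\bmtheta})\}\bigr].
\]
\end{itemize}

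\textbf{Vanishing cross term.} I will use the fact that $\widehat{\bmtheta}$ is a measurable function of $\bmy$ only. Then the second factor, $Z:=g_j^{\mathrm{mv}}(\bmx_\star;\bmtheta_0)-g_j^{\mathrm{mv}}(\bmx_\star;\widehat{\bmtheta})$, is $\sigma(\bmy)$-measurable. By the tower property, the cross term equals $2\,\mathbb{E}\bigl[Z\,\mathbb{E}\{f_j(\bmx_\star)-\mathbb{E}[f_j(\bmx_\star)\mid\bmy]\mid\bmy\}\bigr]=0$.

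Equivalently, in the jointly Gaussian model the residual $f_j(\bmx_\star)-\mathbb{E}[f_j(\bmx_\star)\mid\bmy]$ is uncorrelated with, hence independent of, $\bmy$. This makes the conclusion hold whatever nonlinear form the estimator takes.

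The main technical caveat is integrability: the expectations must be finite, so that the tower property and the splitting of expectations are legitimate. I will state this as the standing assumption that $H_{\mathrm{mv}}^{(j)}(\bmx_\star)<\infty$. Cauchy--Schwarz then makes the cross term well defined, since the residual has finite variance.

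\textbf{Independent predictor.} The same argument applies with $\bmy$ replaced by $\bmy^{(j)}$ and $\bmtheta$ replaced by $\bmvartheta_j$:
\begin{itemize}
\item $g_j^{\mathrm{ind}}(\bmx_\star;\bmvartheta_{j,0})=\mathbb{E}[f_j(\bmx_\star)\mid\bmy^{(j)}]$, since the marginal law of $(f_j,\bmy^{(j)})$ is the univariate GP with the true parameters;
\item the residual has second moment $V_{\mathrm{ind}}^{(j)}(\bmx_\star)$;
\item the cross term vanishes because $\widehat{\bmvartheta}_j$ depends on $\bmy^{(j)}$ alone.
\end{itemize}

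I will remark that this last requirement is essential. If the univariate parameters were tuned using other outputs' data, the decomposition would need conditioning on the larger $\sigma$-field, and the variance term would no longer be $V_{\mathrm{ind}}^{(j)}$.
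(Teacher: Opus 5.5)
Your proposal is correct and follows essentially the same route as the paper. Both identify the oracle predictors with \(\mathbb{E}[f_j(\bmx_\star)\mid\bmy]\) and \(\mathbb{E}[f_j(\bmx_\star)\mid\bmy^{(j)}]\), expand the square, and kill the cross term because the residual is orthogonal to every \(\bmy\)-measurable (respectively \(\bmy^{(j)}\)-measurable) variable. The paper uses your explicit conditions only implicitly: that \(\widehat{\bmtheta}\) is a function of \(\bmy\), that \(\widehat{\bmvartheta}_j\) depends on \(\bmy^{(j)}\) alone, and that second moments are finite, so stating them sharpens the argument rather than changing it.
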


Consequently,
\begin{equation}
R_{\mathrm{ind}}^{(j)}(\bmx_\star)-R_{\mathrm{mv}}^{(j)}(\bmx_\star)
=
\underbrace{
V_{\mathrm{ind}}^{(j)}(\bmx_\star)-V_{\mathrm{joint}}^{(j)}(\bmx_\star)
}_{\Delta_j(\bmx_\star)}
-
\underbrace{
\Bigl(
H_{\mathrm{mv}}^{(j)}(\bmx_\star)-H_{\mathrm{ind}}^{(j)}(\bmx_\star)
\Bigr)
}_{\text{excess estimation cost}}.
\label{eq:net_gain_decomp}
\end{equation}
Equation~\eqref{eq:net_gain_decomp} is the accounting identity behind the net benefit comparison: the fitted multivariate model has lower local risk when the oracle gain \(\Delta_j(\bmx_\star)\) exceeds the excess estimation cost induced by joint modelling.
The penalties in Proposition~\ref{prop:risk_decomp} admit first-order approximations by the multivariate delta method \citep{vandervaart1998}. Let
$
\nabla g_j^{\mathrm{mv}}(\bmx_\star;\bmtheta_0)
\text{ and }
\nabla g_j^{\mathrm{ind}}(\bmx_\star;\bmvartheta_{j,0})
$
denote the gradients of the two prediction maps at the true parameter values.

\begin{theorem}[First-order estimation penalties]
\label{thm:est_penalty}
Suppose \(g_j^{\mathrm{mv}}(\bmx_\star;\bmtheta)\) and
\(g_j^{\mathrm{ind}}(\bmx_\star;\bmvartheta_j)\) are continuously
differentiable at \(\bmtheta_0\) and \(\bmvartheta_{j,0}\), respectively.
Assume also that \(\widehat{\bmtheta}\) and \(\widehat{\bmvartheta}_j\) are
regular estimators with asymptotic covariance matrices
\(\mathcal{I}_{\mathrm{mv}}(\bmtheta_0)^{-1}\) and
\(\mathcal{I}_{\mathrm{ind},j}(\bmvartheta_{j,0})^{-1}\), respectively.
Then, to first order,
\begin{align*}
H_{\mathrm{mv}}^{(j)}(\bmx_\star)
&\approx
\nabla g_j^{\mathrm{mv}}(\bmx_\star;\bmtheta_0)^\top
\mathcal{I}_{\mathrm{mv}}(\bmtheta_0)^{-1}
\nabla g_j^{\mathrm{mv}}(\bmx_\star;\bmtheta_0),
\\
H_{\mathrm{ind}}^{(j)}(\bmx_\star)
&\approx
\nabla g_j^{\mathrm{ind}}(\bmx_\star;\bmvartheta_{j,0})^\top
\mathcal{I}_{\mathrm{ind},j}(\bmvartheta_{j,0})^{-1}
\nabla g_j^{\mathrm{ind}}(\bmx_\star;\bmvartheta_{j,0}).
\end{align*}
If, in addition, \(\bmtheta\) is partitioned as in
Eq.~\eqref{eq:theta_partition} and
\(
\nabla_{\eta} g_j^{\mathrm{mv}}(\bmx_\star;\bmtheta_0)
:=
\partial g_j^{\mathrm{mv}}/{\partial \bmeta}(\bmx_\star;\bmtheta_0),
\)
then the dependence component of the multivariate penalty, namely the
contribution to \(H_{\mathrm{mv}}^{(j)}(\bmx_\star)\) from estimating
\(\bmeta\) alone, satisfies
\[
H_{\mathrm{mv},\eta}^{(j)}(\bmx_\star)
\approx
\nabla_{\eta} g_j^{\mathrm{mv}}(\bmx_\star;\bmtheta_0)^\top
\mathcal{I}_{\eta\eta\cdot \nu}(\bmtheta_0)^{-1}
\nabla_{\eta} g_j^{\mathrm{mv}}(\bmx_\star;\bmtheta_0).
\]
\end{theorem}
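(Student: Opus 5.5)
The plan is a standard delta-method argument, applied separately to each prediction map. The third display then follows from a block-inverse identity for the Fisher information.

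\textbf{Step 1: linearise each map.} For the multivariate map I would Taylor-expand \(g_j^{\mathrm{mv}}(\bmx_\star;\cdot)\) around \(\bmtheta_0\):
\[
g_j^{\mathrm{mv}}(\bmx_\star;\widehat{\bmtheta})-g_j^{\mathrm{mv}}(\bmx_\star;\bmtheta_0)
=\nabla g_j^{\mathrm{mv}}(\bmx_\star;\bmtheta_0)^\top(\widehat{\bmtheta}-\bmtheta_0)+o(\|\widehat{\bmtheta}-\bmtheta_0\|).
\]
Continuous differentiability at \(\bmtheta_0\) makes the remainder \(o_p\) of the leading term.

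\textbf{Step 2: square and take expectations.} Regularity gives \(\widehat{\bmtheta}-\bmtheta_0\approx \mathcal{I}_{\mathrm{mv}}(\bmtheta_0)^{-1}\times\)score, with mean approximately zero and covariance approximately \(\mathcal{I}_{\mathrm{mv}}(\bmtheta_0)^{-1}\). Squaring the linearisation and taking expectations then yields \(H_{\mathrm{mv}}^{(j)}\approx \nabla g^\top \mathcal{I}_{\mathrm{mv}}^{-1}\nabla g\). The same argument with \(\bmvartheta_j\) and \(\mathcal{I}_{\mathrm{ind},j}\) gives the independent penalty.

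\textbf{Main obstacle: what ``to first order'' means.} The expectation in \(H\) is a true second moment, and convergence in distribution does not by itself give convergence of second moments. The predictor also uses the same data \(\bmy\) as the estimator, and under increasing-domain versus infill asymptotics the information need not grow. I would therefore interpret ``\(\approx\)'' as equality of the leading term in the asymptotic mean-squared error. Concretely, I would state the result for \(n\,\cdot\) rescaled quantities and appeal to uniform integrability of \(\sqrt{n}(\widehat{\bmtheta}-\bmtheta_0)\), which I would add as an explicit or implicit regularity assumption. I would also ignore the cross term between estimation error and the posterior-mean fluctuation, consistent with how Proposition~\ref{prop:risk_decomp} defines the penalties.

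\textbf{Step 3: isolate the dependence component.} Define the \(\bmeta\)-component as the penalty incurred when \(\bmnu\) is estimated jointly but only the \(\bmeta\)-directions of the gradient are retained, that is, by applying the quadratic form with \(\nabla g\) restricted to \((\nabla_\eta g,\mathbf 0)\). By the partitioned-inverse formula, the \(\eta\eta\) block of \(\mathcal{I}_{\mathrm{mv}}(\bmtheta_0)^{-1}\) equals \((\mathcal{I}_{\eta\eta}-\mathcal{I}_{\eta\nu}\mathcal{I}_{\nu\nu}^{-1}\mathcal{I}_{\nu\eta})^{-1}=\mathcal{I}_{\eta\eta\cdot\nu}^{-1}\) from Eq.~\eqref{eq:efficient_information}. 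This gives the displayed expression. Since the asymptotic covariance of \(\widehat{\bmeta}\) is exactly this block, no further approximation is needed.

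As a remark, at an independence point Theorem~\ref{thm:orthogonality} shows this block reduces to \(\mathcal{I}_{\eta\eta}^{-1}\). That links the penalty to \(W_{pq}\), although this link is not required for the proof.
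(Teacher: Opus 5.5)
Your proposal is correct and follows essentially the same route as the paper's proof. Both arguments use a first-order Taylor expansion of each prediction map with an \(o_p(\|\widehat{\bmtheta}-\bmtheta_0\|)\) remainder, evaluate second moments using the assumed asymptotic covariances, and identify the \(\bmeta\)-block of \(\mathcal{I}_{\mathrm{mv}}(\bmtheta_0)^{-1}\) with \(\mathcal{I}_{\eta\eta\cdot\nu}(\bmtheta_0)^{-1}\); you justify that last step explicitly with the partitioned-inverse formula, whereas the paper simply asserts it, and your uniform-integrability caveat for passing to second moments is extra rigour the paper does not attempt, since it presents the result only as a first-order asymptotic identity rather than a finite-sample bound.
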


The approximations hold up to the remainder terms of the delta-method expansion recorded in Appendix~\ref{sec:delta_derivation}.
Theorem~\ref{thm:est_penalty} links the estimation analysis directly to the local risk comparison. The full penalty terms quantify the variability of the fitted predictors, while the \(\bmeta\)-component identifies the part tied specifically to learning cross-output dependence. 
This is where the efficient information results of Section~\ref{sec:estimation} enter. 
Weak interaction mass inflates the dependence penalty, whereas interleaved designs make the dependence parameters well identified and keep this component small.

\paragraph*{Local net benefit diagnostic}
Combining Eq.~\eqref{eq:net_gain_decomp} with Theorem~\ref{thm:est_penalty}, joint modelling is favoured to first order at
\(\bmx_\star\) when
\begin{equation}
\begin{aligned}
\Delta_j(\bmx_\star)
\;>\;&
\nabla g_j^{\mathrm{mv}}(\bmx_\star;\bmtheta_0)^\top
\mathcal{I}_{\mathrm{mv}}(\bmtheta_0)^{-1}
\nabla g_j^{\mathrm{mv}}(\bmx_\star;\bmtheta_0)
\\
&-
\nabla g_j^{\mathrm{ind}}(\bmx_\star;\bmvartheta_{j,0})^\top
\mathcal{I}_{\mathrm{ind},j}(\bmvartheta_{j,0})^{-1}
\nabla g_j^{\mathrm{ind}}(\bmx_\star;\bmvartheta_{j,0}).
\end{aligned}
\label{eq:local_net_benefit}
\end{equation}
The left-hand side is the oracle prediction gain. The right-hand side is the
first-order excess estimation cost. Averaging over a finite evaluation set
\(\mathcal{A}\subseteq\mathcal{X}\) gives the corresponding aggregate
diagnostic
\begin{equation}
\frac{1}{|\mathcal{A}|}
\sum_{\bmx_\star\in\mathcal{A}}
\Delta_j(\bmx_\star)
>
\frac{1}{|\mathcal{A}|}
\sum_{\bmx_\star\in\mathcal{A}}
\Bigl(
H_{\mathrm{mv}}^{(j)}(\bmx_\star)
-
H_{\mathrm{ind}}^{(j)}(\bmx_\star)
\Bigr).
\label{eq:average_net_benefit}
\end{equation}

Here geometry enters twice. It raises oracle prediction gain when auxiliary designs provide
residual support, and it reduces estimation cost when the same designs provide
enough interaction mass to identify cross-output dependence. Under geometric
separation, both channels deteriorate, so joint fitting has less opportunity to
offset its additional estimation cost.
In practice, Eq.~\eqref{eq:local_net_benefit} involves the true parameter value \(\bmtheta_0\), but one can substitute the fitted estimates \(\widehat{\bmtheta}\) and \(\widehat{\bmvartheta}_j\). The oracle gain \(\Delta_j(\bmx_\star;\widehat{\bmtheta})\) is then computable from the fitted multivariate posterior, and the estimation penalties can be approximated by the observed Fisher information at \(\widehat{\bmtheta}\). This yields a post-fit diagnostic that compares estimated prediction gain with estimated excess estimation cost at each target location. Such plug-in evaluation is standard in asymptotic model comparison \citep{vandervaart1998} and should be read as a first-order screen, not as a finite-sample decision rule.
\section{Operational Screening Procedure}
\label{sec:screening}

The theoretical results translate into an operational screening procedure for deciding when a joint metamodel is worth fitting. The procedure is meant for simulation analysts, multi-fidelity modellers and network operators who already have output-specific designs, possibly inherited from separate experiments or monitoring programmes. It is a triage tool rather than a replacement for validation. 
Steps~1--4 use only the output-specific designs and a plausible kernel scale, and Step~5 fits a joint model restricted to the promising outputs or output pairs and checks whether the fitted gain justifies the estimation cost.
Here screening means design and model screening for joint metamodelling. It is distinct from factor screening in simulation experiments, although it plays a similar operational role.

\begin{enumerate}[label=\textbf{Step \arabic*.},leftmargin=*,nosep]
\item \emph{Fix the prediction target and relevant scale.} Specify the target outputs, the prediction set \(\mathcal{X}_\star\) and a kernel lengthscale or a plausible lengthscale range. 
This anchors the word ``near'', since a cross-design neighbour is useful only at a scale on which the kernel still carries material correlation.
\item \emph{Map cross-output geometry.} Compute the directed proximities \(\pi_{p\to q}\), their normalised versions \(\tilde{\pi}_{p\to q}\) and the design proximity matrix \(\bmPi\) for all ordered output pairs, together with directed coverage curves \(\DC_{p\to q}(\cdot)\) for pairs of interest. Values of \(\tilde{\pi}_{p\to q}\) near or below one indicate that output \(q\) covers output \(p\) at the resolution of its own spacing. Large values indicate separation. This step identifies whether zero or low overlap is an interleaved design opportunity or a genuine geometric gap.
\item \emph{Prioritise outputs for borrowing.} Compute \(B_j\) for each output over \(\mathcal{X}_\star\). Outputs with large \(B_j\) have frequent local gaps that some auxiliary design fills and are natural candidates for joint modelling. Outputs with \(B_j\approx 0\) are already locally self-supported, and Remark~\ref{rem:autokrigeability} cautions that little gain should be expected for them unless observation noise creates a denoising channel.
\item \emph{Check whether dependence can be learned.} For candidate pairs, compute the interaction mass \(W_{pq}\), either at the reference lengthscale or across the plausible lengthscale range, and compare \(W_{pq}/\min(N_p,N_q)\) with the order-one benchmark suggested by the proximity certificate in Eq.~\eqref{eq:proximity_certificate}. If \(W_{pq}\) collapses at every plausible lengthscale, the design provides little information for estimating \(\Lambda_{pq}\), even if strong dependence is scientifically plausible. Such pairs should be treated as weakly supported modelling links rather than as reliable borrowing channels.
\item \emph{Fit selectively and check net benefit.} Fit a joint model only for the outputs or output blocks that pass the opportunity and estimability checks, rather than defaulting to a fully coupled \(D\)-output model. After fitting, evaluate the plug-in version of Eq.~\eqref{eq:local_net_benefit} over the evaluation set \(\mathcal{X}_\star\), estimating the oracle gain from the fitted posterior and the estimation penalties from the observed Fisher information.
\end{enumerate}

\paragraph*{Computational cost of the diagnostics}
The pre-fit parts of the screen are computationally light. The geometry in Step~2 is built from nearest-neighbour distances and can be computed with standard spatial data structures. Using \(k\)-d trees, the distances \(\{\delta_{p\to q}(\bmx):\bmx\in\mathcal{X}_p\}\) for one ordered pair cost \(\mathcal{O}\bigl((N_p+N_q)\log N_q\bigr)\) in fixed dimension, so the full design proximity matrix costs \(\mathcal{O}\bigl(D\,N_{\mathrm{tot}}\log N_{\mathrm{tot}}\bigr)\). The indices \(b_j(\bmx_\star)\) and \(B_j\) over an evaluation set \(\mathcal{X}_\star\) add \(\mathcal{O}\bigl(D\,|\mathcal{X}_\star|\log N_{\mathrm{tot}}\bigr)\).

The exact interaction mass \(W_{pq}\) costs \(\mathcal{O}(N_pN_q)\) kernel evaluations for one pair and can be truncated to pairs within a few lengthscales for rapidly decaying kernels, with Proposition~\ref{prop:separation_upper_bound} bounding the omitted tail. In large systems, \(\bmPi\) or a truncated \(W_{pq}\) calculation can first build a sparse candidate graph, after which exact interaction masses are needed only on that graph. These costs are negligible relative to a dense GP fit, so the screen prevents geometrically unsupported joint models from being fitted simply because multivariate software is available.
The numerical experiments use a sparse variational separable multi-output GP approximation where needed; implementation details are given in Appendix~\ref{sec:si_inference}.
\section{Numerical Experiments}
\label{sec:experiments}

The experiments show how the geometry developed above can be used before committing to a joint fit. Controlled synthetic studies isolate the main mechanisms, a queueing simulation illustration connects the diagnostics to an operational metamodel, and a case study of a monitoring network checks interpretability in real heterotopic data. Experiments~1 and~3, together with the case study, use the sparse variational fit described in Appendix~\ref{sec:si_inference}. Experiment~2 uses oracle calculations, while Experiments~4--5 and the queueing illustration use exact Gaussian likelihood to examine estimation effects directly. Although several experiments use low-dimensional inputs for visual clarity, the diagnostics apply in any dimension and depend on the designs only through Euclidean distances. Detailed designs, generation steps and preprocessing are reported in Appendix~\ref{sec:si_experiments}.

\subsection{Simulation Studies}
\label{subsec:exp_simulation}

The five synthetic experiments follow the screening workflow, moving from design geometry and prediction gain to dependence estimability, net benefit and the comparison of univariate versus multivariate kriging.

\subsubsection{Zero-overlap geometry and overlap insufficiency}
\label{subsec:exp_zero_overlap}

\begin{figure}[htbp]
\centering
\includegraphics[width=0.32\textwidth]{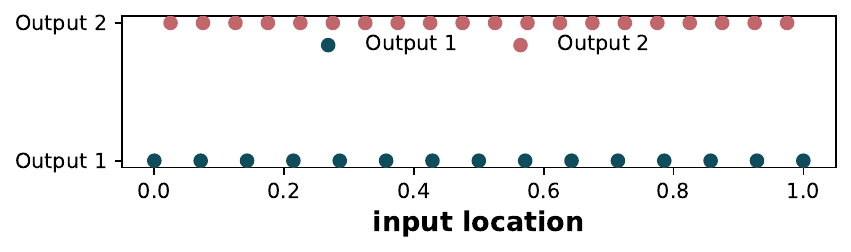}\hfill
\includegraphics[width=0.32\textwidth]{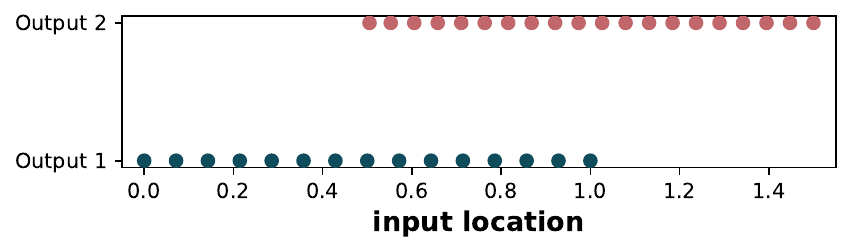}\hfill
\includegraphics[width=0.32\textwidth]{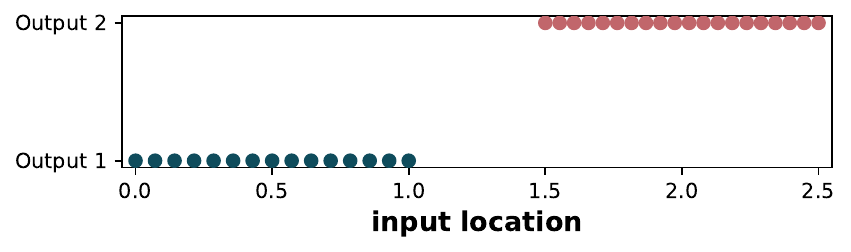}
\caption{Zero-overlap designs for Experiment~1. Left: interleaved. Middle: moderate. Right: separated. Output~1 is fixed on \([0,1]\), output~2 varies across the three regimes, and all three designs have \(\omega_{12}=0\).}
\label{fig:s1_designs}
\end{figure}

Output~1 is held fixed on \([0,1]\) while output~2 is placed in one of three zero-overlap (\(\omega_{12}=0\)) configurations, namely \emph{interleaved} (filling the gaps on \([0,1]\)), \emph{moderate} (partially overlapping the evaluation domain) and \emph{separated} (on a distant region). This experiment asks whether designs with the same zero-overlap label can nevertheless create different borrowing opportunities, giving a finite-sample illustration of Theorem~\ref{thm:zero_overlap_dichotomy}.

Figure~\ref{fig:s1_designs} and Table~\ref{tab:s1_summary} show that the three zero-overlap regimes are not equivalent. Interleaving yields the smallest directed proximity, the largest \(W_{12}\) and the largest oracle gain. The moderate regime is intermediate. Separation drives the interaction mass and oracle gain close to zero. The corresponding directed coverage curves are reported in Fig.~\ref{fig:s1_coverage} (Appendix~\ref{subsec:si_simulation}).

\begin{table}[htbp]
\caption{Geometric and oracle summaries for Experiment~1, target output~1.}
\label{tab:s1_summary}
\centering
\begin{tabular}{lrrrr}
\hline
Regime & $\omega$ & $\tilde{\pi}_{1\to2}$ & $W_{12}$ & Oracle gain\\
\hline
Interleaved & 0.00 & 0.27 & 57.9 & 0.0053\\
Moderate & 0.00 & 2.74 & 32.4 & 0.0028\\
Separated & 0.00 & 19.00 & 0.0 & 0.0000\\
\hline
\end{tabular}
\end{table}

\subsubsection{Residual mechanism and output-level screening}
\label{subsec:exp_residual_screening}

Experiments~2 and~3 isolate mechanisms used by the screening workflow. Experiment~2, built from local support motifs, confirms that auxiliary proximity matters through the residual channel, and Experiment~3, a three-output design, shows that larger \(B_j\) tracks larger average oracle gain and positive fitted \(\Delta\)MLPD for the outputs with greater borrowing potential. Full designs, figures and tables are reported in Appendix~\ref{subsec:si_synth_designs}.

\subsubsection{Dependence recovery and net benefit trade-off}
\label{subsec:exp_information_tradeoff}

Experiment~4 uses a three-output design in which the geometry between outputs~1 and~3 is held fixed while auxiliary output~2 moves from a supported regime to a remote regime, and only the \(3\times 3\) output correlation matrix is estimated by exact Gaussian likelihood. This experiment targets the last two screening steps, namely whether the cross-output dependence is learnable and whether the resulting joint fit has enough oracle gain to absorb estimation cost. Figure~\ref{fig:s4_tradeoff} shows that deteriorating \(1\text{--}2\) geometry simultaneously reduces \(W_{12}\), worsens recovery of \(R_{12}\), lowers the oracle gain, and turns the plug-in net benefit margin from positive to negative. In all three regimes the realised estimation cost remains slightly below the oracle gain, even where the plug-in margin is negative, so the plug-in criterion errs on the conservative side in this experiment.

\begin{figure}[htbp]
\centering
\includegraphics[width=0.70\textwidth]{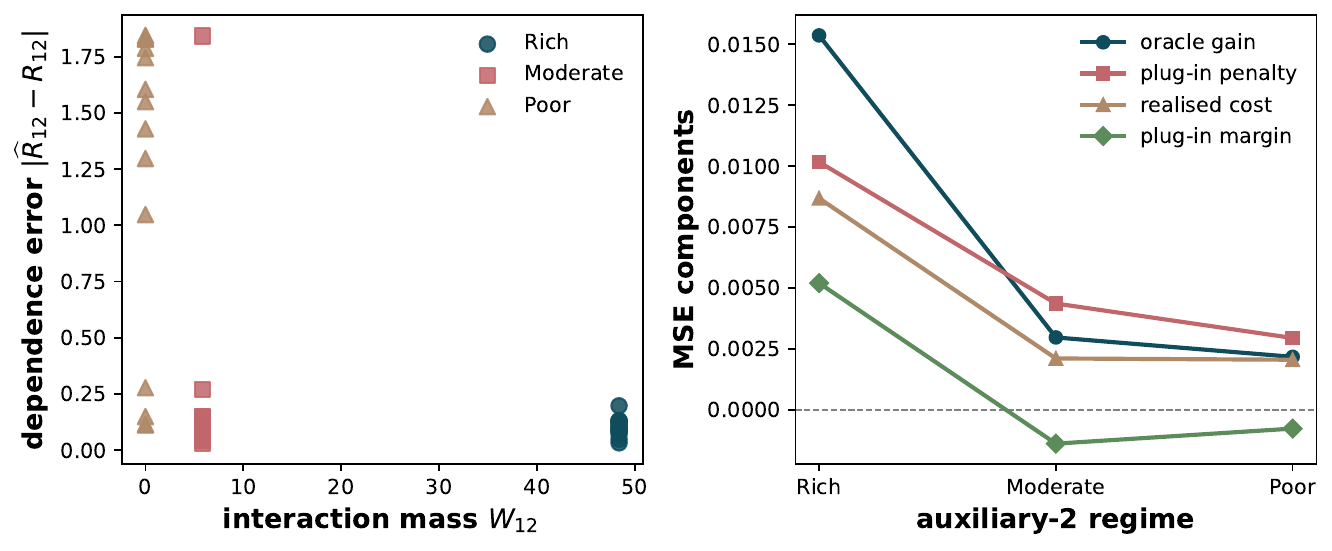}
\caption{Dependence recovery and net benefit trade-off in a three-output design for Experiment~4. Left: \(W_{12}\) and the recovery error for \(R_{12}\) across the three \(1\text{--}2\) regimes, with the \(1\text{--}3\) geometry held fixed. Right: regime averages of the oracle gain, plug-in penalty, realised estimation cost and plug-in margin.}
\label{fig:s4_tradeoff}
\end{figure}

\subsubsection{Univariate versus multivariate kriging revisited}
\label{subsec:exp_kleijnen}

Experiment~5 revisits the negative finding on multivariate kriging reported by \citet{kleijnen2014multivariate} through the design regimes highlighted above. Two outputs are generated from a separable Mat\'ern~\(3/2\) multi-output GP, and only the cross-correlation is estimated by exact Gaussian likelihood. The design for the target output is fixed, while the design for the auxiliary output is isotopic, interleaved with zero overlap, or geometrically separated. Table~\ref{tab:s5_kriging} summarises the results for target output~1.

\begin{table}[htbp]
\caption{Revisiting univariate versus multivariate kriging with exact likelihood for Experiment~5. \(\Delta\)RMSE denotes RMSE(Ind.) minus RMSE(MV), and \(\Delta\)MLPD is MLPD(MV) minus MLPD(Ind.). Positive values favour the multivariate predictor. Entries in parentheses are standard errors.}
\label{tab:s5_kriging}
\centering
\small
\setlength{\tabcolsep}{3pt}
\begin{tabular}{lrrrrrrr}
\hline
Regime & $\omega$ & $\tilde{\pi}_{1\to2}$ & $B_1$ & $W_{12}$ & Oracle gain & $\Delta$RMSE & $\Delta$MLPD\\
\hline
Isotopic & 1.00 & 0.00 & 0.000 & 51.4 & 0.0003 & 0.001 (0.000) & 0.006 (0.002)\\
Interleaved & 0.00 & 0.27 & 0.264 & 52.6 & 0.0012 & 0.003 (0.001) & 0.017 (0.004)\\
Separated & 0.00 & 11.25 & 0.000 & 0.1 & 0.0000 & 0.000 (0.000) & 0.000 (0.000)\\
\hline
\end{tabular}

\end{table}

The isotopic regime has \(B_1=0\) and an oracle gain close to zero, consistent with autokrigeability once observation noise is small. Its interaction mass is nonetheless large, so the strong cross-correlation is easy to estimate, but on a common design this well-estimated dependence is redundant for prediction. Interleaving keeps exact overlap at zero while increasing \(B_1\), oracle gain and held-out \(\Delta\)MLPD. Separation makes \(W_{12}\), oracle gain and realised improvements essentially zero.

Thus a neutral comparison between multivariate and univariate kriging on a common design does not imply that joint metamodelling is generally unhelpful. It identifies a geometry in which the main borrowing channel is mostly closed. Because the marginal kernel and noise are fixed here and only the cross-correlation is estimated, the isotopic comparison is neutral rather than negative. The genuinely negative case, in which estimation cost exceeds oracle gain, is the subject of Experiment~4. An additional misspecification arm, reported in Appendix~\ref{subsec:si_s5_details}, generates the data from a two-component linear model of coregionalisation and preserves the same ranking, suggesting that the diagnostics retain screening value beyond the exact separable data-generating model (Remark~\ref{rem:lmc_extension}).

\subsection{Queueing Simulation Illustration}
\label{subsec:exp_queue}

To connect the diagnostics with a standard operational simulation setting, we simulate an \(M/M/1\) queue across traffic intensities \(\rho\in[0.35,0.85]\). The target output is mean waiting time in queue and the auxiliary output is server utilisation, both estimated from finite simulation runs and standardised. The utilisation design is isotopic, interleaved between target intensities, or separated at low traffic intensities. Table~\ref{tab:q1_queue} shows the same pattern. Interleaving has positive \(B_1\) and improves held-out RMSE, whereas separation has negligible interaction mass and no useful gain. The isotopic design gives a small co-located denoising benefit but no latent interpolation channel (\(B_1=0\)), consistent with Remark~\ref{rem:autokrigeability}.

\begin{table}[htbp]
\caption{Queueing simulation illustration. The target output is mean waiting time in an \(M/M/1\) queue and the auxiliary output is utilisation. \(\Delta\)RMSE =RMSE(Ind.) - RMSE(MV), and \(\Delta\)MLPD = MLPD(MV) - MLPD(Ind.). Positive values favour the multivariate predictor. Entries in parentheses are standard errors.}
\label{tab:q1_queue}
\centering
\small
\setlength{\tabcolsep}{3pt}
\begin{tabular}{lrrrrrr}
\hline
Regime & $\omega$ & $\tilde{\pi}_{1\to2}$ & $B_1$ & $W_{12}$ & $\Delta$RMSE & $\Delta$MLPD\\
\hline
Isotopic & 1.00 & 0.00 & 0.000 & 51.9 & 0.006 (0.001) & 0.038 (0.004)\\
Interleaved & 0.00 & 0.27 & 0.263 & 53.3 & 0.007 (0.001) & 0.034 (0.007)\\
Separated & 0.00 & 22.75 & 0.000 & 1.7 & -0.000 (0.000) & 0.000 (0.000)\\
\hline
\end{tabular}
\end{table}

\subsection{Case Study: A Multi-Pollutant Monitoring Network}
\label{subsec:exp_real}

Monitoring networks are a recurring instance of heterotopic sampling, in which fusing related networks can improve coverage without deploying new sensors, but only where the geometry supports it. The case study uses 2024 annual summaries from the EPA Air Quality System (AQS) for PM\(_{2.5}\), ozone and NO\(_2\) in Texas. With a mean pairwise overlap of \(0.60\), the network still has nontrivial heterotopic structure. Figure~\ref{fig:aqs_geometry} and Table~\ref{tab:aqs_summary} show that \(\bmPi\) reveals directional asymmetry and that held-out predictive differences are nonnegative for all outputs, with the clearest improvements for PM\(_{2.5}\) and NO\(_2\). For PM\(_{2.5}\), the joint fit reduces held-out RMSE from \(0.826\) to \(0.752\), about \(9\%\) on the standardised scale. Additional California and Colorado summaries in Appendix~\ref{subsec:si_real} reinforce that \(B_j\) flags borrowing potential rather than guaranteeing realised gain.

\begin{figure}[htbp]
\centering
\includegraphics[width=0.40\textwidth]{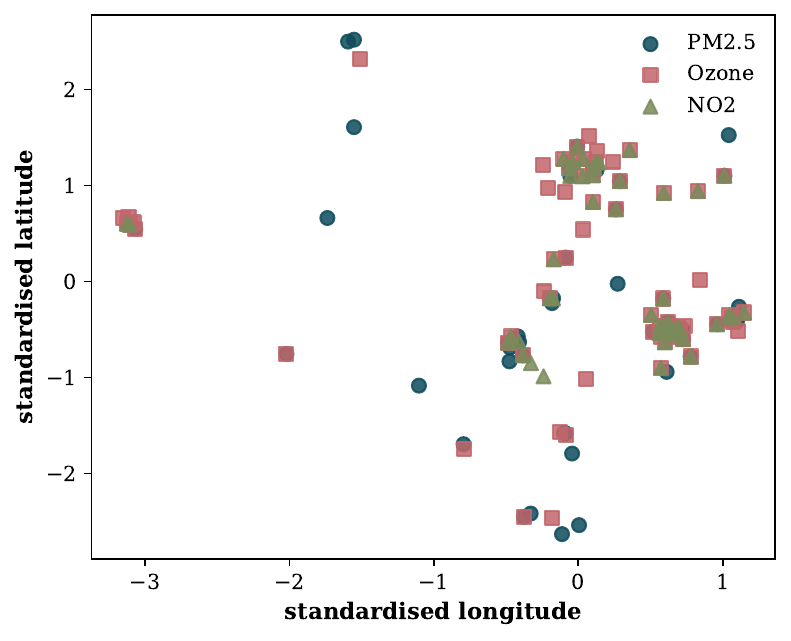} 
\includegraphics[width=0.40\textwidth]{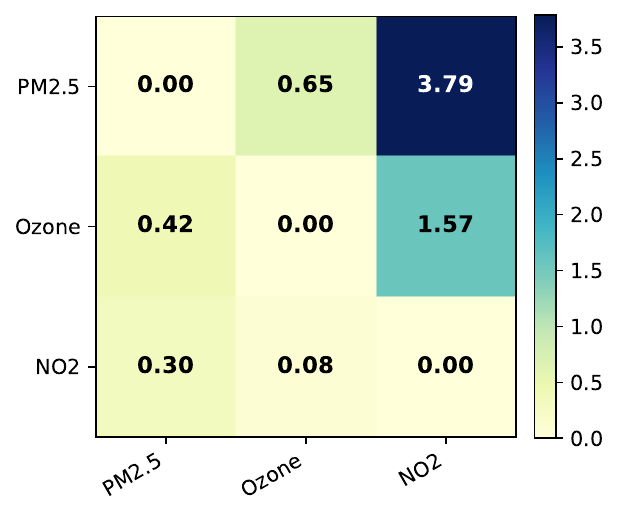}
\caption{Texas AQS case study. Left: monitoring geometry for PM\(_{2.5}\), ozone and NO\(_2\). Right: the corresponding design proximity matrix \(\bmPi\).}
\label{fig:aqs_geometry}
\end{figure}

\begin{table}[htbp]
\caption{Output-level held-out predictive summary for AQS case study, averaged over site-level holdout splits. The table reports \(n_j\), \(B_j\), RMSE for the independent and multivariate fits, and \(\Delta\)MLPD for each pollutant.}
\label{tab:aqs_summary}
\centering
\begin{tabular}{lrrrrr}
\hline
Output & $n_j$ & $B_j$ & RMSE (Ind.) & RMSE (MV) & $\Delta$MLPD\\
\hline
PM2.5 & 54 & 0.517 & 0.826 & 0.752 & 0.238\\
Ozone & 73 & 0.253 & 0.709 & 0.699 & 0.002\\
NO2 & 51 & 0.388 & 0.789 & 0.727 & 0.083\\
\hline
\end{tabular}
\end{table}
\section{Conclusions}
\label{sec:conclusions}

This paper has addressed a concrete metamodelling question, namely when an analyst should fit a joint, multivariate kriging model rather than independent univariate models. Within separable multi-output GP models, the answer is governed by the geometry of the output-specific designs, and exact overlap is too coarse a summary of that geometry. Two zero-overlap designs can differ statistically when one is locally interleaved and the other is geometrically separated.

Geometry enters twice. On the prediction side, oracle gain is driven by a residual auxiliary channel, so borrowing helps where auxiliary outputs supply local support not already supplied by the target design. On the estimation side, the learnability of cross-output dependence scales with the kernel-weighted interaction mass \(W_{pq}\). Interleaving opens both channels, whereas separation closes both. Classical autokrigeability is the boundary case, in which separable, noise-free isotopic sampling shuts the main borrowing channel (Remark~\ref{rem:autokrigeability}).

The practical implication is an inexpensive screening procedure. The diagnostics \(\bmPi\), \(B_j\) and \(W_{pq}\) are computed before fitting and can rule out geometrically unsupported joint models, and the first-order net benefit criterion then serves as a post-fit triage diagnostic rather than a finite-sample guarantee. The experiments support this reading. The diagnostics track borrowing opportunity in controlled designs and in the Texas monitoring network, but the additional AQS results in Appendix~\ref{subsec:si_real} also show that \(B_j\) can misrank realised gains in small networks.

Two limitations are important. First, the clean reconciliation above is a result for the separable model. Nonseparable or autoregressive multi-fidelity models can create an additional borrowing mechanism through the cross-covariance structure itself, even when designs are nested or highly overlapping. Second, the net benefit comparison is first-order and should be sharpened by finite-sample or higher-order analysis. A natural operational extension is sequential and budgeted design, where the same proximity and borrowing potential diagnostics can guide which output, fidelity level or monitoring location should be sampled next.

\section*{Data}

The EPA AQS annual summary data used in the case study are publicly available from \url{https://aqs.epa.gov/aqsweb/airdata/download_files.html}.

\clearpage
\appendix
\section*{Appendices}
\addcontentsline{toc}{section}{Appendices}

\section{Model Equivalence}
\label{sec:cokriging}

The multi-output GP posterior mean coincides with the classical co-kriging predictor from multivariate geostatistics \citep{wackernagel2013multivariate,myers1982matrix,verhoef1993multivariable,cressie2015statistics}.

\begin{proposition}
\label{prop:si_cokriging_equivalence}
Under the separable multi-output GP with known covariance kernel, \(\bmLambda\) and noise variances, and a Gaussian likelihood, the posterior mean \(\widehat{f}_{\mathrm{mv}}^{(j)}(\bmx_\star)\) is the best linear unbiased predictor of \(f_j(\bmx_\star)\) given all observations.
\end{proposition}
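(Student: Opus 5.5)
The plan is to compute the posterior mean explicitly from the Gaussian conditioning formula, and then show that it solves the co-kriging normal equations, that is, the BLUP problem. The model has zero mean, so "unbiased" is automatic for any linear predictor \(\bmlambda^\top\bmy\): both \(f_j(\bmx_\star)\) and \(\bmlambda^\top\bmy\) have mean zero. This is simple co-kriging, and BLUP reduces to minimising mean squared error over linear predictors. If one instead wants ordinary co-kriging with unknown constant means, the same argument goes through with Lagrange multipliers for the unbiasedness constraints. The posterior mean would then be read as the generalised least squares plug-in, and I would note that as a remark rather than the main case.

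The first step uses joint Gaussianity of \((f_j(\bmx_\star),\bmy)\) with covariance blocks \(\Lambda_{jj}k(\bmx_\star,\bmx_\star)\), \(\mathbf{c}_j(\bmx_\star)\) and \(\bmSigma\) from Eq.~\eqref{eq:full_obs_cov}. This gives
\[
\widehat{f}_{\mathrm{mv}}^{(j)}(\bmx_\star)=\mathbf{c}_j(\bmx_\star)\bmSigma^{-1}\bmy .
\]
Here \(\bmSigma\) is positive definite, because either \(\tau_i^2>0\) or \(\bmLambda\otimes k\) is positive definite on distinct points. The second step takes an arbitrary linear predictor \(\bmlambda^\top\bmy\) and writes
\[
\mathbb{E}\bigl[(f_j(\bmx_\star)-\bmlambda^\top\bmy)^2\bigr]=\Lambda_{jj}k(\bmx_\star,\bmx_\star)-2\bmlambda^\top\mathbf{c}_j(\bmx_\star)^\top+\bmlambda^\top\bmSigma\bmlambda .
\]
This is a strictly convex quadratic in \(\bmlambda\). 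Its unique minimiser satisfies the co-kriging system \(\bmSigma\bmlambda=\mathbf{c}_j(\bmx_\star)^\top\), whose solution is \(\bmlambda=\bmSigma^{-1}\mathbf{c}_j(\bmx_\star)^\top\). This is exactly the posterior mean weight vector.

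As an alternative check, I would note the orthogonality argument. The residual \(f_j(\bmx_\star)-\mathbf{c}_j\bmSigma^{-1}\bmy\) is uncorrelated with \(\bmy\). Hence, for any other linear predictor, the MSE equals the posterior variance plus \((\bmlambda-\bmSigma^{-1}\mathbf{c}_j^\top)^\top\bmSigma(\bmlambda-\bmSigma^{-1}\mathbf{c}_j^\top)\ge 0\). This directly gives optimality and uniqueness, and it also shows that the minimal risk equals \(V_{\mathrm{joint}}^{(j)}(\bmx_\star)\).

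The only real obstacle is definitional: matching "best linear unbiased predictor" to the zero-mean (simple co-kriging) convention, and handling invertibility of \(\bmSigma\) in the noise-free case. For the latter I would assume distinct locations within each design and a strictly positive-definite \(k\), or else replace \(\bmSigma^{-1}\) by a generalised inverse. In that case the minimiser is no longer unique, but the predictor \(\bmlambda^\top\bmy\) is still almost surely unique.
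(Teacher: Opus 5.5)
Your proposal is correct and follows essentially the same route as the paper: write the posterior mean as \(\mathbf{c}_j(\bmx_\star)\bmSigma^{-1}\bmy\) and note that its weight vector solves the co-kriging system \(\bmSigma\mathbf{w}=\mathbf{c}_j(\bmx_\star)^\top\). The only difference is in how optimality is obtained: the paper appeals to the conditional expectation being MSE-optimal over all measurable predictors, whereas you minimise the quadratic risk directly over linear predictors, which uses only second moments. You also make explicit the zero-mean reading of unbiasedness and the singular-\(\bmSigma\) case, both of which the paper passes over.
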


\begin{proof}
Under the Gaussian model, the posterior mean equals the conditional expectation \(\mathbb{E}[f_j(\bmx_\star)\mid \bmy]\), which is linear in \(\bmy\) and minimises mean squared prediction error among all measurable predictors. It is therefore also the best linear unbiased predictor. To make the equivalence explicit, write \(\bmSigma:=\operatorname{Cov}(\bmy)\) for the full observation covariance and \(\mathbf{c}_{j}(\bmx_\star):=\operatorname{Cov}\!\bigl(f_j(\bmx_\star),\bmy\bigr)\) for the cross-covariance between the target and all observations. Because \(\bmSigma\) is nonsingular under the assumed noise variances, the posterior mean is
\[
\widehat{f}_{\mathrm{mv}}^{(j)}(\bmx_\star)
=
\mathbf{c}_{j}(\bmx_\star)\,\bmSigma^{-1}\bmy .
\]
The classical co-kriging equations determine weights \(\mathbf{w}\in\mathbb{R}^{N_{\mathrm{tot}}}\) by solving the same block linear system \(\bmSigma\,\mathbf{w}=\mathbf{c}_{j}(\bmx_\star)^\top\) \citep[see][Ch.~3]{wackernagel2013multivariate}, so the co-kriging predictor \(\mathbf{w}^\top\bmy\) coincides with \(\widehat{f}_{\mathrm{mv}}^{(j)}(\bmx_\star)\).
\end{proof}
\section{Proofs of main text results}
\label{sec:aux_proofs}

This section collects the proofs of the main text results.

\subsection{Prediction-gain results}

\phantomsection\label{proof:exact_oracle_gain}
\begin{proof}[Proof of Theorem~\ref{thm:exact_oracle_gain}]
Apply Gaussian conditioning to the joint distribution of \(f_j(\bmx_\star)\), \(\bmy^{(j)}\), and \(\bmy^{(-j)}\). Conditioning only on \(\bmy^{(j)}\) gives
\[
V_{\mathrm{ind}}^{(j)}(\bmx_\star)
=
\Lambda_{jj}k(\bmx_\star,\bmx_\star)
-
\mathbf{c}_{j,j}(\bmx_\star)\,
\bmSigma_{jj}^{-1}\,
\mathbf{c}_{j,j}(\bmx_\star)^\top.
\]
Conditioning on the full observation vector gives
\[
V_{\mathrm{joint}}^{(j)}(\bmx_\star)
=
V_{\mathrm{ind}}^{(j)}(\bmx_\star)
-
\mathbf{c}_{j,-j\mid j}(\bmx_\star)\,
\bmSigma_{-j,-j\mid j}^{-1}\,
\mathbf{c}_{j,-j\mid j}(\bmx_\star)^\top.
\]
Subtracting the two variances yields the identity in
Theorem~\ref{thm:exact_oracle_gain}. Since
\(\bmSigma_{-j,-j\mid j}\) is positive definite whenever the full observation
covariance is nonsingular, the quadratic form is nonnegative.
\end{proof}

\phantomsection\label{proof:local_geometric_upper_bound}
\begin{proof}[Proof of Theorem~\ref{thm:local_geometric_upper_bound}]
By Theorem~\ref{thm:exact_oracle_gain},
\[
\Delta_j(\bmx_\star)
=
\mathbf{c}_{j,-j\mid j}(\bmx_\star)\,
\bmSigma_{-j,-j\mid j}^{-1}\,
\mathbf{c}_{j,-j\mid j}(\bmx_\star)^\top
\le
\lambda_{\max}\!\bigl(\bmSigma_{-j,-j\mid j}^{-1}\bigr)
\left\|
\mathbf{c}_{j,-j\mid j}(\bmx_\star)
\right\|_2^2.
\]
Using the definition of \(\mathbf{c}_{j,-j\mid j}(\bmx_\star)\) from the
main text and the triangle inequality,
\[
\left\|
\mathbf{c}_{j,-j\mid j}(\bmx_\star)
\right\|_2
\le
\left\|
\mathbf{c}_{j,-j}(\bmx_\star)
\right\|_2
+
\left\|
\mathbf{c}_{j,j}(\bmx_\star)\,
\bmSigma_{jj}^{-1}\bmSigma_{j,-j}
\right\|_2.
\]
For the first term, each entry of \(\mathbf{c}_{j,-j}(\bmx_\star)\) equals \(\Lambda_{jq}\psi(\|\bmx_\star-\bmx_a^{(q)}\|)\), and \(\|\bmx_\star-\bmx_a^{(q)}\|\ge \delta_{-j}(\bmx_\star)\). Because \(\psi\) is non-increasing,
\[
\left\|
\mathbf{c}_{j,-j}(\bmx_\star)
\right\|_2
\le
\left(
\sum_{q\neq j}\Lambda_{jq}^2N_q
\right)^{1/2}
\psi\!\bigl(\delta_{-j}(\bmx_\star)\bigr)
=
C_{j,\mathrm{aux}}\psi\!\bigl(\delta_{-j}(\bmx_\star)\bigr).
\]
For the second term, each entry of \(\mathbf{c}_{j,j}(\bmx_\star)\) is bounded by \(|\Lambda_{jj}|\psi(\delta_j(\bmx_\star))\), so
\[
\left\|
\mathbf{c}_{j,j}(\bmx_\star)
\right\|_2
\le
|\Lambda_{jj}|\sqrt{N_j}\,\psi\!\bigl(\delta_j(\bmx_\star)\bigr).
\]
Hence
\[
\left\|
\mathbf{c}_{j,j}(\bmx_\star)\,
\bmSigma_{jj}^{-1}\bmSigma_{j,-j}
\right\|_2
\le
|\Lambda_{jj}|\sqrt{N_j}
\left\|
\bmSigma_{jj}^{-1}\bmSigma_{j,-j}
\right\|_2
\psi\!\bigl(\delta_j(\bmx_\star)\bigr)
=
C_{j,\mathrm{proj}}\psi\!\bigl(\delta_j(\bmx_\star)\bigr).
\]
Combining the bounds gives Eq.~\eqref{eq:local_geometric_upper_bound}.
\end{proof}

\subsection{Design-geometry and estimability results}
\label{sec:aux_design_estimation_risk_proofs}

\phantomsection\label{proof:coverage_kernel_support}
\begin{proof}[Proof of Proposition~\ref{prop:coverage_kernel_support}]
If \(\delta_{p\to q}(\bmx_a^{(p)})\le r\), then there exists \(\bmx_b^{(q)}\in\mathcal{X}_q\) such that \(\|\bmx_a^{(p)}-\bmx_b^{(q)}\|\le r\). Since \(\psi\) is non-increasing,
\[
k\!\bigl(\bmx_a^{(p)},\bmx_b^{(q)}\bigr)
=
\psi\!\bigl(\|\bmx_a^{(p)}-\bmx_b^{(q)}\|\bigr)
\ge
\psi(r).
\]
This proves Proposition~\ref{prop:coverage_kernel_support}. The
final statement follows from the definition of \(\DC_{p\to q}(r)\).
\end{proof}

\phantomsection\label{proof:proximity_controls_coverage}
\begin{proof}[Proof of Proposition~\ref{prop:proximity_controls_coverage}]
Let
\[
Z_a:=\delta_{p\to q}\!\bigl(\bmx_a^{(p)}\bigr),
\qquad
a=1,\ldots,N_p.
\]
Then \(Z_a\ge 0\) and
\[
\pi_{p\to q}
=
\frac{1}{N_p}\sum_{a=1}^{N_p} Z_a.
\]
Applying Markov's inequality to the empirical distribution of \(\{Z_a\}_{a=1}^{N_p}\) gives
\[
\frac{1}{N_p}\sum_{a=1}^{N_p}\mathbf{1}\{Z_a>r\}
\le
\frac{\pi_{p\to q}}{r}.
\]
Since
\[
\DC_{p\to q}(r)
=
1-\frac{1}{N_p}\sum_{a=1}^{N_p}\mathbf{1}\{Z_a>r\},
\]
the result follows.
\end{proof}

\phantomsection\label{proof:information_scaling}
\begin{proof}[Proof of Theorem~\ref{thm:information_scaling}]
Fix \(u=(p,q)\). The standard covariance information formula for a centred Gaussian model gives
\[
\mathcal{I}_{u,u}(\bmeta)
=
\frac{1}{2}
\left\|
\bmSigma^{-1/2}
\dot{\bmSigma}_{u}
\bmSigma^{-1/2}
\right\|_F^2.
\]
Since \(\dot{\bmSigma}_{u}\) has only the \((p,q)\) and \((q,p)\) blocks nonzero, with values \(\bmK^{(0)}_{pq}\) and \(\bmK_{pq}^{(0)\top}\),
\[
\left\|
\dot{\bmSigma}_{u}
\right\|_F^2
=
2\left\|
\bmK^{(0)}_{pq}
\right\|_F^2
=
2W_{pq}.
\]
For any positive-definite \(\mathbf{A}\) and symmetric \(\mathbf{B}\),
\[
\lambda_{\min}(\mathbf{A})^2\|\mathbf{B}\|_F^2
\le
\|\mathbf{A}^{1/2}\mathbf{B}\mathbf{A}^{1/2}\|_F^2
\le
\lambda_{\max}(\mathbf{A})^2\|\mathbf{B}\|_F^2.
\]
Apply this with \(\mathbf{A}=\bmSigma^{-1}\) and
\(\mathbf{B}=\dot{\bmSigma}_{u}\), then multiply by \(1/2\), to obtain
the conclusion of Theorem~\ref{thm:information_scaling}.
\end{proof}

\paragraph*{Component-wise LMC extension}
\phantomsection\label{proof:lmc_information_scaling}
Fix a component \(\ell\) and a pair \(u=(p,q)\), \(p<q\). Under the linear model of coregionalisation, differentiating the full observation covariance matrix with respect to \([\mathbf{B}_\ell]_{pq}\) gives a symmetric derivative matrix \(\dot{\bmSigma}_{\ell,u}\) whose only nonzero blocks are \((p,q)\) and \((q,p)\), equal to \(\bmK^{(\ell)}_{pq}\) and \(\bmK^{(\ell)\top}_{pq}\). Therefore
\[
\left\|
\dot{\bmSigma}_{\ell,u}
\right\|_F^2
=
2
\left\|
\bmK^{(\ell)}_{pq}
\right\|_F^2
=
2W^{(\ell)}_{pq}.
\]
The standard covariance information formula for a centred Gaussian model gives
\[
\mathcal{I}_{(\ell,p,q),(\ell,p,q)}
=
\frac{1}{2}
\left\|
\bmSigma^{-1/2}
\dot{\bmSigma}_{\ell,u}
\bmSigma^{-1/2}
\right\|_F^2.
\]
Applying the same spectral inequality used in the proof of Theorem~\ref{thm:information_scaling}, with \(\mathbf{A}=\bmSigma^{-1}\) and \(\mathbf{B}=\dot{\bmSigma}_{\ell,u}\), gives the two-sided bound stated in Remark~\ref{rem:lmc_extension}.

If \(k_\ell(\bmx,\bmx')=\psi_\ell(\|\bmx-\bmx'\|)\), then
\[
W^{(\ell)}_{pq}
=
\sum_{a=1}^{N_p}
\sum_{b=1}^{N_q}
\psi_\ell\!\left(
\left\|\bmx_a^{(p)}-\bmx_b^{(q)}\right\|
\right)^2.
\]
The proof of Proposition~\ref{prop:coverage_information_bound}, the subsequent bounds based on proximity, Theorem~\ref{thm:zero_overlap_dichotomy} and Proposition~\ref{prop:separation_upper_bound} use only this radial expression and monotonicity. Replacing \(W_{pq}\) and \(\psi\) by \(W^{(\ell)}_{pq}\) and \(\psi_\ell\) therefore gives the component-wise geometric conclusions claimed in Remark~\ref{rem:lmc_extension}.

\phantomsection\label{proof:orthogonality}
\begin{proof}[Proof of Theorem~\ref{thm:orthogonality}]
At an independence point \(\bmeta=\bmzero\), the observation covariance matrix is block diagonal,
\[
\bmSigma
=
\operatorname{blockdiag}(\bmSigma_{11},\ldots,\bmSigma_{DD}),
\qquad
\bmSigma_{ii}
=
\Lambda_{ii}\bmK^{(0)}_{ii}+\tau_i^2\bmI_{N_i}.
\]
For \(u=(p,q)\in\mathcal{P}\), the derivative
\(\dot{\bmSigma}_{u}=\partial\bmSigma/\partial\Lambda_{pq}\) has only two nonzero off-diagonal blocks, \((p,q)\) and \((q,p)\), equal to \(\bmK^{(0)}_{pq}\) and \(\bmK^{(0)\top}_{pq}\). By contrast, at \(\bmeta=\bmzero\), the derivative with respect to any nuisance parameter in \(\bmnu\) is block diagonal: diagonal entries of \(\bmLambda\), noise variances and kernel hyperparameters alter only the within-output covariance blocks. Because each within-output block \(\bmSigma_{ii}=\Lambda_{ii}\bmK^{(0)}_{ii}+\tau_i^2\bmI_{N_i}\) is invertible (using \(\tau_i^2>0\)), the inverse \(\bmSigma^{-1}\) is also block diagonal, so the product
\[
\bmSigma^{-1}\dot{\bmSigma}_{u}\bmSigma^{-1}\dot{\bmSigma}_{\nu}
\]
is off-diagonal for every nuisance derivative \(\dot{\bmSigma}_{\nu}\), and hence has trace zero. The standard covariance information formula for a centred Gaussian model therefore gives \(\mathcal{I}_{\eta\nu}=\bmzero\), so the Schur complement in Eq.~\eqref{eq:efficient_information} reduces to \(\mathcal{I}_{\eta\eta}\).

The same block calculation shows that distinct cross-output parameters are orthogonal at independence. If \(u=(p,q)\) and \(v=(r,s)\) are distinct, then
\(\bmSigma^{-1}\dot{\bmSigma}_{u}\bmSigma^{-1}\dot{\bmSigma}_{v}\) has zero diagonal blocks, regardless of whether the pairs share an output or are disjoint. Hence \(\mathcal{I}_{u,v}=0\). For \(u=(p,q)\), the same information formula gives
\[
\begin{aligned}
\mathcal{I}_{u,u}
&=
\frac{1}{2}
\tr\!\left(
\bmSigma^{-1}
\dot{\bmSigma}_{u}
\bmSigma^{-1}
\dot{\bmSigma}_{u}
\right)\\
&=
\frac{1}{2}
\left\{
\tr\!\left(
\bmSigma_{pp}^{-1}\bmK^{(0)}_{pq}
\bmSigma_{qq}^{-1}\bmK^{(0)\top}_{pq}
\right)
+
\tr\!\left(
\bmSigma_{qq}^{-1}\bmK^{(0)\top}_{pq}
\bmSigma_{pp}^{-1}\bmK^{(0)}_{pq}
\right)
\right\}\\
&=
\tr\!\left(
\bmSigma_{pp}^{-1}\bmK^{(0)}_{pq}
\bmSigma_{qq}^{-1}\bmK^{(0)\top}_{pq}
\right)
=
\left\|
\bmSigma_{pp}^{-1/2}
\bmK^{(0)}_{pq}
\bmSigma_{qq}^{-1/2}
\right\|_F^2.
\end{aligned}
\]
This proves the diagonal and whitened-mass claims.

For the explicit bounds, use positive semidefiniteness of \(\bmK^{(0)}_{ii}\) and \(\tau_i^2>0\) to obtain
\[
\tau_i^2\bmI_{N_i}
\preceq
\bmSigma_{ii}
\preceq
\left\{
\Lambda_{ii}\lambda_{\max}(\bmK^{(0)}_{ii})+\tau_i^2
\right\}\bmI_{N_i}.
\]
Taking inverses reverses the Loewner order. Therefore
\[
\frac{1}{
\Lambda_{ii}\lambda_{\max}(\bmK^{(0)}_{ii})+\tau_i^2
}\bmI_{N_i}
\preceq
\bmSigma_{ii}^{-1}
\preceq
\frac{1}{\tau_i^2}\bmI_{N_i}.
\]
Substituting these two bounds into
\(\tr(\bmSigma_{pp}^{-1}\bmK^{(0)}_{pq}\bmSigma_{qq}^{-1}\bmK^{(0)\top}_{pq})\)
gives the lower and upper bounds in Eq.~\eqref{eq:orthogonality_explicit_bounds}, because
\(\|\bmK^{(0)}_{pq}\|_F^2=W_{pq}\).
\end{proof}

\phantomsection\label{proof:coverage_information_bound}
\begin{proof}[Proof of Proposition~\ref{prop:coverage_information_bound}]
If \(\delta_{p\to q}(\bmx_a^{(p)})\le r\), then there exists \(\bmx_b^{(q)}\in\mathcal{X}_q\) such that \(\|\bmx_a^{(p)}-\bmx_b^{(q)}\|\le r\). Since \(\psi\) is non-increasing,
\[
k\!\bigl(\bmx_a^{(p)},\bmx_b^{(q)}\bigr)^2
=
\psi\!\bigl(\|\bmx_a^{(p)}-\bmx_b^{(q)}\|\bigr)^2
\ge
\psi(r)^2.
\]
Summing over all \(a\) with \(\delta_{p\to q}(\bmx_a^{(p)})\le r\) yields
\[
W_{pq}
\ge
\#\{a:\delta_{p\to q}(\bmx_a^{(p)})\le r\}\,\psi(r)^2
=
N_p\,\DC_{p\to q}(r)\,\psi(r)^2.
\]
\end{proof}

\phantomsection\label{proof:separation_upper_bound}
\begin{proof}[Proof of Proposition~\ref{prop:separation_upper_bound}]
If \(\delta_{p\to q}(\bmx)\ge d_0\) for all \(\bmx\in\mathcal{X}_p\), then \(\|\bmx_a^{(p)}-\bmx_b^{(q)}\|\ge d_0\) for every cross-design pair \((a,b)\). Since \(k(\bmx,\bmx')=\psi(\|\bmx-\bmx'\|)\) with \(\psi\) non-increasing, \(k(\bmx_a^{(p)},\bmx_b^{(q)})\le \psi(d_0)\) for all \(a\) and \(b\). Therefore
\[
W_{pq}
=
\sum_{a=1}^{N_p}\sum_{b=1}^{N_q}
k\!\bigl(\bmx_a^{(p)},\bmx_b^{(q)}\bigr)^2
\le
N_pN_q\,\psi(d_0)^2.
\]
The Fisher information bound then follows from the upper bound in Theorem~\ref{thm:information_scaling}.
\end{proof}

\phantomsection\label{proof:zero_overlap_dichotomy}
\begin{proof}[Proof of Theorem~\ref{thm:zero_overlap_dichotomy}]
We prove the two parts in turn.

\medskip
\noindent
\textbf{(i) Interleaved zero-overlap regime.}
Assume
\[
\pi_{p\to q}^{(n)}\to 0
\qquad\text{and}\qquad
\pi_{q\to p}^{(n)}\to 0.
\]

Fix \(r>0\). By Proposition~\ref{prop:proximity_controls_coverage},
\[
\DC_{p\to q}^{(n)}(r)
\ge
1-\frac{\pi_{p\to q}^{(n)}}{r},
\qquad
\DC_{q\to p}^{(n)}(r)
\ge
1-\frac{\pi_{q\to p}^{(n)}}{r}.
\]
Since both directed proximities converge to zero, it follows that
\[
\DC_{p\to q}^{(n)}(r)\to 1,
\qquad
\DC_{q\to p}^{(n)}(r)\to 1,
\]
which proves the coverage convergence asserted in part (i).

Next, Proposition~\ref{prop:coverage_information_bound} gives
\[
W_{pq}^{(n)}
\ge
N_p^{(n)}\,\DC_{p\to q}^{(n)}(r)\,\psi(r)^2,
\]
hence
\[
\frac{W_{pq}^{(n)}}{N_p^{(n)}}
\ge
\DC_{p\to q}^{(n)}(r)\,\psi(r)^2.
\]
Taking \(\liminf\) and using \(\DC_{p\to q}^{(n)}(r)\to 1\), we obtain
\[
\liminf_{n\to\infty}
\frac{W_{pq}^{(n)}}{N_p^{(n)}}
\ge
\psi(r)^2.
\]
The same argument with \(p\) and \(q\) interchanged gives
\[
W_{pq}^{(n)}
\ge
N_q^{(n)}\,\DC_{q\to p}^{(n)}(r)\,\psi(r)^2,
\]
and therefore
\[
\liminf_{n\to\infty}
\frac{W_{pq}^{(n)}}{N_q^{(n)}}
\ge
\psi(r)^2.
\]
This proves the lower bounds in part (i).

\medskip
\noindent
\textbf{(ii) Separated zero-overlap regime.}
Assume there exists \(d_0>0\) such that
\[
\min_{\bmx\in\mathcal{X}_p^{(n)},\,\bmx'\in\mathcal{X}_q^{(n)}}
\|\bmx-\bmx'\|
\ge d_0
\quad\text{for all }n.
\]
Equivalently, \(\delta_{p\to q}^{(n)}(\bmx)\ge d_0\) for all \(\bmx\in\mathcal{X}_p^{(n)}\) and \(\delta_{q\to p}^{(n)}(\bmx)\ge d_0\) for all \(\bmx\in\mathcal{X}_q^{(n)}\).
Fix any \(r<d_0\). Then no point of \(\mathcal{X}_p^{(n)}\) lies within distance \(r\) of \(\mathcal{X}_q^{(n)}\), so by definition
\[
\DC_{p\to q}^{(n)}(r)=0.
\]
Similarly,
\[
\DC_{q\to p}^{(n)}(r)=0.
\]
The lower bound in Proposition~\ref{prop:coverage_information_bound} is therefore zero. Moreover, every directed nearest-neighbour distance in the corresponding averages is at least \(d_0\), so \(\pi_{p\to q}^{(n)}\ge d_0\) and \(\pi_{q\to p}^{(n)}\ge d_0\). Since \(r<d_0\), the terms involving the positive-part operator in the proximity certificate in Eq.~\eqref{eq:proximity_certificate} are also zero. Finally, the assumed separation implies \(\delta_{p\to q}^{(n)}(\bmx)\ge d_0\) for all \(\bmx\in\mathcal{X}_p^{(n)}\), so Proposition~\ref{prop:separation_upper_bound} gives
\[
W_{pq}^{(n)}
\le
N_p^{(n)}N_q^{(n)}\psi(d_0)^2.
\]
\end{proof}

\subsection{Net benefit results and delta-method expansion}
\label{sec:delta_derivation}

\phantomsection\label{proof:risk_decomp}
\begin{proof}[Proof of Proposition~\ref{prop:risk_decomp}]
Under correct specification,
\[
g_j^{\mathrm{mv}}(\bmx_\star;\bmtheta_0)
=
\mathbb{E}\!\left[
f_j(\bmx_\star)\mid \bmy
\right],
\qquad
g_j^{\mathrm{ind}}(\bmx_\star;\bmvartheta_{j,0})
=
\mathbb{E}\!\left[
f_j(\bmx_\star)\mid \bmy^{(j)}
\right].
\]
Hence the residuals
\[
f_j(\bmx_\star)-g_j^{\mathrm{mv}}(\bmx_\star;\bmtheta_0)
\quad\text{and}\quad
f_j(\bmx_\star)-g_j^{\mathrm{ind}}(\bmx_\star;\bmvartheta_{j,0})
\]
are orthogonal to all \(\bmy\)-measurable and
\(\bmy^{(j)}\)-measurable random variables, respectively. Expanding the
squared errors in the definitions of
\(R_{\mathrm{mv}}^{(j)}(\bmx_\star)\) and
\(R_{\mathrm{ind}}^{(j)}(\bmx_\star)\), the cross-terms vanish, which
yields Proposition~\ref{prop:risk_decomp}. The oracle terms are
precisely the predictive variances from Section~\ref{sec:oracle}.
\end{proof}

This subsection also records the first-order delta-method derivation underlying the approximation to the estimation penalty in Theorem~\ref{thm:est_penalty}.

Let
\[
\bma_{\mathrm{mv}}
:=
\nabla g_j^{\mathrm{mv}}(\bmx_\star;\bmtheta_0),
\qquad
\bma_{\mathrm{ind}}
:=
\nabla g_j^{\mathrm{ind}}(\bmx_\star;\bmvartheta_{j,0}).
\]
Under the differentiability assumptions of Theorem~\ref{thm:est_penalty}, first-order Taylor expansion gives
\begin{align}
g_j^{\mathrm{mv}}(\bmx_\star;\widehat{\bmtheta})
-g_j^{\mathrm{mv}}(\bmx_\star;\bmtheta_0)
&=
\bma_{\mathrm{mv}}^\top(\widehat{\bmtheta}-\bmtheta_0)
+o_p\!\left(
\|\widehat{\bmtheta}-\bmtheta_0\|
\right),
\label{eq:si_delta_mv}
\\
g_j^{\mathrm{ind}}(\bmx_\star;\widehat{\bmvartheta}_j)
-g_j^{\mathrm{ind}}(\bmx_\star;\bmvartheta_{j,0})
&=
\bma_{\mathrm{ind}}^\top(\widehat{\bmvartheta}_j-\bmvartheta_{j,0})
+o_p\!\left(
\|\widehat{\bmvartheta}_j-\bmvartheta_{j,0}\|
\right).
\label{eq:si_delta_ind}
\end{align}
Taking second moments and using the regular asymptotic covariance approximations assumed in the theorem yields
\begin{align}
H_{\mathrm{mv}}^{(j)}(\bmx_\star)
&\approx
\bma_{\mathrm{mv}}^\top
\operatorname{Cov}(\widehat{\bmtheta})
\bma_{\mathrm{mv}}
\approx
\bma_{\mathrm{mv}}^\top
\mathcal{I}_{\mathrm{mv}}(\bmtheta_0)^{-1}
\bma_{\mathrm{mv}},
\label{eq:si_H_mv}
\\
H_{\mathrm{ind}}^{(j)}(\bmx_\star)
&\approx
\bma_{\mathrm{ind}}^\top
\operatorname{Cov}(\widehat{\bmvartheta}_j)
\bma_{\mathrm{ind}}
\approx
\bma_{\mathrm{ind}}^\top
\mathcal{I}_{\mathrm{ind},j}(\bmvartheta_{j,0})^{-1}
\bma_{\mathrm{ind}},
\label{eq:si_H_ind}
\end{align}
which are exactly the first-order approximations stated in
Theorem~\ref{thm:est_penalty}.

For the dependence component, partition \(\bmtheta\) as in
Eq.~\eqref{eq:theta_partition}. The adjusted asymptotic covariance
for the \(\bmeta\)-block is the inverse efficient information,
\[
\operatorname{Cov}(\widehat{\bmeta})
\approx
\mathcal{I}_{\eta\eta\cdot \nu}(\bmtheta_0)^{-1}.
\]
Therefore
\[
H_{\mathrm{mv},\eta}^{(j)}(\bmx_\star)
\approx
\nabla_{\eta} g_j^{\mathrm{mv}}(\bmx_\star;\bmtheta_0)^\top
\mathcal{I}_{\eta\eta\cdot \nu}(\bmtheta_0)^{-1}
\nabla_{\eta} g_j^{\mathrm{mv}}(\bmx_\star;\bmtheta_0),
\]
which is the dependence approximation stated in
Theorem~\ref{thm:est_penalty}. This is a first-order asymptotic identity. It does not by itself imply a non-asymptotic finite-sample bound, but it makes the leading dependence of the estimation penalty on predictor sensitivity and information explicit.
\section{Additional Prediction Results}
\label{sec:si_prediction_results}

This section records the two-output residual-channel result and two auxiliary consequences used in the main text and experiments.

\subsection{Two-output residual-channel characterisation}

\begin{theorem}[Two-sided characterisation of the residual channel]
\label{thm:si_two_output_residual_bounds}
Assume \(D=2\) with target output \(j=1\), suppose Eq.~\eqref{eq:radial_kernel} holds with \(\psi(0)=1\), and assume \(\tau_1^2>0\) and \(\tau_2^2>0\). Let
$
\bmK_{ab}^{(0)}
:=
\left[
k\!\bigl(\bmx_u^{(a)},\bmx_v^{(b)}\bigr)
\right]_{1\le u\le N_a,\ 1\le v\le N_b},
a,b\in\{1,2\},
$
and define
$
\bmk_p(\bmx_\star)
:=
\bigl(
k(\bmx_\star,\bmx_1^{(p)}),\ldots,
k(\bmx_\star,\bmx_{N_p}^{(p)})
\bigr)^\top, p\in\{1,2\}.
$
Set \(\sigma_1^2:=\tau_1^2/\Lambda_{11}\), and define
$
\bmk_2^{\mathrm{res}}(\bmx_\star)
:=
\bmk_2(\bmx_\star)
-
\bmK_{21}^{(0)}
\bigl(\bmK_{11}^{(0)}+\sigma_1^2\bmI\bigr)^{-1}
\bmk_1(\bmx_\star).
$
Define
\[
S_2
:=
\max_{1\le b\le N_2}
\sum_{b'\neq b}
\psi\!\left(\left\|\bmx_b^{(2)}-\bmx_{b'}^{(2)}\right\|\right),
\]
with the convention that an empty sum is zero. Then:
\begin{enumerate}[label=(\roman*),nosep]
\item
\[
\frac{\Lambda_{12}^2}
{\Lambda_{22}(1+S_2)+\tau_2^2}
\bigl\|\bmk_2^{\mathrm{res}}(\bmx_\star)\bigr\|_2^2
\le
\Delta_1(\bmx_\star)
\le
\frac{\Lambda_{12}^2}{\tau_2^2}
\bigl\|\bmk_2^{\mathrm{res}}(\bmx_\star)\bigr\|_2^2.
\]
\item If \(b^\star\in\arg\min_{1\le b\le N_2}
\|\bmx_\star-\bmx_b^{(2)}\|\), then
\(\Delta_1(\bmx_\star)\ge \dfrac{\Lambda_{12}^2}{\Lambda_{22}+\tau_2^2}\,
\bigl[k_{2,b^\star}^{\mathrm{res}}(\bmx_\star)\bigr]^2\),
where \(k_{2,b^\star}^{\mathrm{res}}(\bmx_\star)\) denotes the
\(b^\star\)-th entry of \(\bmk_2^{\mathrm{res}}(\bmx_\star)\).
\end{enumerate}
\end{theorem}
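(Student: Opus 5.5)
Start from Theorem~\ref{thm:exact_oracle_gain} with \(D=2\) and \(j=1\), and rewrite both the residual cross-covariance and the residual auxiliary covariance in terms of the kernel blocks. Since \(\bmSigma_{11}=\Lambda_{11}(\bmK^{(0)}_{11}+\sigma_1^2\bmI)\), \(\bmSigma_{12}=\Lambda_{12}\bmK^{(0)}_{12}\), \(\mathbf{c}_{1,1}(\bmx_\star)=\Lambda_{11}\bmk_1(\bmx_\star)^\top\) and \(\mathbf{c}_{1,-1}(\bmx_\star)=\Lambda_{12}\bmk_2(\bmx_\star)^\top\), the factors of \(\Lambda_{11}\) cancel in \(\mathbf{c}_{1,1}\bmSigma_{11}^{-1}\bmSigma_{12}\). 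This gives
\[
\mathbf{c}_{1,-1\mid 1}(\bmx_\star)=\Lambda_{12}\,\bmk_2^{\mathrm{res}}(\bmx_\star)^\top ,
\]
using the symmetry \(\bmK^{(0)\top}_{12}=\bmK^{(0)}_{21}\). Hence
\[
\Delta_1(\bmx_\star)=\Lambda_{12}^2\,\bmk_2^{\mathrm{res}\top}\,\bmSigma_{22\mid 1}^{-1}\,\bmk_2^{\mathrm{res}}.
\]
Both parts of the theorem therefore reduce to spectral control of \(\bmSigma_{22\mid 1}=\bmSigma_{22}-\bmSigma_{21}\bmSigma_{11}^{-1}\bmSigma_{12}\).

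For the upper bound in (i), I will bound \(\bmSigma_{22\mid1}\) from below by \(\tau_2^2\bmI\). Write \(\bmSigma_{22\mid1}=\tau_2^2\bmI+\mathbf{M}\), where
\[
\mathbf{M}=\Lambda_{22}\bmK^{(0)}_{22}-\Lambda_{12}^2\bmK^{(0)}_{21}\bigl(\Lambda_{11}\bmK^{(0)}_{11}+\tau_1^2\bmI\bigr)^{-1}\bmK^{(0)}_{12}.
\]
The matrix \(\mathbf{M}\) is the conditional covariance of the latent vector \(f_2(\mathcal{X}_2)\) given \(\bmy^{(1)}\), because the noise \(\varepsilon^{(2)}\) is independent of everything else and its covariance \(\tau_2^2\bmI\) simply adds on. A conditional covariance is positive semidefinite, so \(\mathbf{M}\succeq 0\). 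Therefore \(\bmSigma_{22\mid1}\succeq\tau_2^2\bmI\), which gives \(\bmSigma_{22\mid1}^{-1}\preceq\tau_2^{-2}\bmI\) and the upper bound.

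For the lower bound in (i), I will bound \(\bmSigma_{22\mid1}\) from above. Conditioning cannot increase covariance, so \(\bmSigma_{22\mid1}\preceq\bmSigma_{22}=\Lambda_{22}\bmK^{(0)}_{22}+\tau_2^2\bmI\). Gershgorin's theorem bounds \(\lambda_{\max}(\bmK^{(0)}_{22})\) by the largest absolute row sum. Each diagonal entry is \(\psi(0)=1\), and the off-diagonal entries are nonnegative because \(\psi\ge0\), so the largest row sum is at most \(1+S_2\). Hence \(\bmSigma_{22\mid1}^{-1}\succeq\{\Lambda_{22}(1+S_2)+\tau_2^2\}^{-1}\bmI\), which gives the lower bound.

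For (ii), I will use a sharper inequality than the eigenvalue bound. For a positive definite \(\mathbf{A}\) and any vector \(\mathbf{v}\), Cauchy–Schwarz gives \(\mathbf{v}^\top\mathbf{A}^{-1}\mathbf{v}\ge (\mathbf{e}^\top\mathbf{v})^2/(\mathbf{e}^\top\mathbf{A}\mathbf{e})\) for any unit vector \(\mathbf{e}\). Take \(\mathbf{e}=\mathbf{e}_{b^\star}\) and \(\mathbf{v}=\bmk_2^{\mathrm{res}}\). It remains to bound the diagonal entry \([\bmSigma_{22\mid1}]_{b^\star b^\star}\le[\bmSigma_{22}]_{b^\star b^\star}=\Lambda_{22}+\tau_2^2\). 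This uses the same Loewner inequality as in (i). The choice of \(b^\star\) as the nearest auxiliary point plays no role in the inequality itself; it only matters for interpreting the bound.

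I do not expect a real obstacle; the proof is mainly careful bookkeeping. The step most likely to go wrong is the cancellation of \(\Lambda_{11}\) when deriving \(\bmk_2^{\mathrm{res}}\), and I will check it by expanding the products explicitly. A second point needing care is justifying that \(\mathbf{M}\succeq0\): I will argue this probabilistically, as a conditional covariance, rather than through a matrix Schur-complement argument.
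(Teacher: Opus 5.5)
Your proposal is correct and follows essentially the same route as the paper: the same reduction to $\Lambda_{12}^2\,\bmk_2^{\mathrm{res}\top}\bmSigma_{22\mid1}^{-1}\bmk_2^{\mathrm{res}}$ via the $\Lambda_{11}$ cancellation, the same conditional-covariance argument for $\bmSigma_{22\mid1}\succeq\tau_2^2\bmI$, and the same bound $\bmSigma_{22\mid1}\preceq\bmSigma_{22}$ combined with a row-sum bound on $\lambda_{\max}(\bmK^{(0)}_{22})$. For (ii), your Cauchy--Schwarz inequality is equivalent to the paper's variational characterisation $\bmv^\top\bmA^{-1}\bmv=\max_{\bmu}\{2\bmv^\top\bmu-\bmu^\top\bmA\bmu\}$, and your diagonal Loewner comparison is the paper's ``conditional variance is at most unconditional variance'' step.
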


\phantomsection\label{proof:two_output_residual_bounds}
\begin{proof}[Proof of Theorem~\ref{thm:si_two_output_residual_bounds}]
Under \(D=2\), Theorem~\ref{thm:exact_oracle_gain} gives
\begin{equation}
\Delta_1(\bmx_\star)
=
\mathbf{c}_{1,2\mid 1}(\bmx_\star)\,
\bmSigma_{22\mid 1}^{-1}\,
\mathbf{c}_{1,2\mid 1}(\bmx_\star)^\top,
\label{eq:si_two_output_gain_identity}
\end{equation}
where
\[
\bmSigma_{22\mid 1}
:=
\bmSigma_{22}
-
\bmSigma_{21}\bmSigma_{11}^{-1}\bmSigma_{12}.
\]
By the separable model,
\[
\bmSigma_{11}
=
\Lambda_{11}\bmK_{11}^{(0)}+\tau_1^2\bmI
=
\Lambda_{11}\bigl(\bmK_{11}^{(0)}+\sigma_1^2\bmI\bigr),
\]
\[
\bmSigma_{12}
=
\Lambda_{12}\bmK_{12}^{(0)},
\qquad
\bmSigma_{21}
=
\Lambda_{12}\bmK_{21}^{(0)},
\]
and
\[
\mathbf{c}_{1,1}(\bmx_\star)
=
\Lambda_{11}\bmk_1(\bmx_\star)^\top,
\qquad
\mathbf{c}_{1,2}(\bmx_\star)
=
\Lambda_{12}\bmk_2(\bmx_\star)^\top.
\]
Hence
\begin{align}
\mathbf{c}_{1,2\mid 1}(\bmx_\star)
&=
\mathbf{c}_{1,2}(\bmx_\star)
-
\mathbf{c}_{1,1}(\bmx_\star)\,
\bmSigma_{11}^{-1}\,
\bmSigma_{12}
\notag\\
&=
\Lambda_{12}\Bigl[
\bmk_2(\bmx_\star)^\top
-
\bmk_1(\bmx_\star)^\top
\bigl(\bmK_{11}^{(0)}+\sigma_1^2\bmI\bigr)^{-1}
\bmK_{12}^{(0)}
\Bigr].
\label{eq:si_two_output_cond_cov_expand}
\end{align}
Because the kernel is symmetric, \(\bmK_{21}^{(0)}=\bmK_{12}^{(0)\top}\), and
\(\bmK_{11}^{(0)}+\sigma_1^2\bmI\) is symmetric. Therefore the bracketed row vector in
Eq.~\eqref{eq:si_two_output_cond_cov_expand} is exactly
\(\bmk_2^{\mathrm{res}}(\bmx_\star)^\top\), so
\begin{equation}
\mathbf{c}_{1,2\mid 1}(\bmx_\star)
=
\Lambda_{12}\bmk_2^{\mathrm{res}}(\bmx_\star)^\top.
\label{eq:si_two_output_cond_cov_residual}
\end{equation}

Let
\[
\bmA:=\bmSigma_{22\mid 1}.
\]
For the upper bound in part (i), note that
\[
\bmSigma_{22\mid 1}
=
\operatorname{Cov}\!\bigl(\bmy^{(2)}\mid \bmy^{(1)}\bigr)
=
\operatorname{Cov}\!\bigl(\bmf^{(2)}\mid \bmy^{(1)}\bigr)+\tau_2^2\bmI
\succeq
\tau_2^2\bmI.
\]
Because \(\tau_2^2>0\), the matrix \(\bmSigma_{22\mid 1}\) is positive definite and hence invertible, and \(\bmSigma_{22\mid 1}^{-1}\preceq \tau_2^{-2}\bmI\). Combining this with
Eq.~\eqref{eq:si_two_output_gain_identity} and
Eq.~\eqref{eq:si_two_output_cond_cov_residual} yields
\[
\Delta_1(\bmx_\star)
=
\Lambda_{12}^2
\bmk_2^{\mathrm{res}}(\bmx_\star)^\top
\bmSigma_{22\mid 1}^{-1}
\bmk_2^{\mathrm{res}}(\bmx_\star)
\le
\frac{\Lambda_{12}^2}{\tau_2^2}
\bigl\|
\bmk_2^{\mathrm{res}}(\bmx_\star)
\bigr\|_2^2,
\]
which proves the upper bound in part (i).

For the lower bound in part (i), the Schur complement satisfies
\[
\bmSigma_{22\mid 1}
\preceq
\bmSigma_{22}
=
\Lambda_{22}\bmK_{22}^{(0)}+\tau_2^2\bmI.
\]
Because \(k(\bmx,\bmx')=\psi(\|\bmx-\bmx'\|)\), \(\psi(0)=1\), and \(\psi\ge0\), the maximum absolute row sum of \(\bmK_{22}^{(0)}\) is bounded by
\[
\max_{1\le b\le N_2}
\sum_{b'=1}^{N_2}
\left|
\left[\bmK_{22}^{(0)}\right]_{bb'}
\right|
\le
1+S_2.
\]
Since \(\bmK_{22}^{(0)}\) is symmetric positive semidefinite,
\(\lambda_{\max}(\bmK_{22}^{(0)})\le 1+S_2\). Therefore
\[
\bmA
\preceq
\left\{\Lambda_{22}(1+S_2)+\tau_2^2\right\}\bmI,
\qquad
\bmA^{-1}
\succeq
\frac{1}{\Lambda_{22}(1+S_2)+\tau_2^2}\bmI.
\]
Substituting this lower inverse bound into
\[
\Delta_1(\bmx_\star)
=
\Lambda_{12}^2
\bmk_2^{\mathrm{res}}(\bmx_\star)^\top
\bmA^{-1}
\bmk_2^{\mathrm{res}}(\bmx_\star)
\]
gives the lower bound in part (i).

For part (ii), let
\[
\bmv
:=
\Lambda_{12}\bmk_2^{\mathrm{res}}(\bmx_\star).
\]
By Eq.~\eqref{eq:si_two_output_gain_identity} and
Eq.~\eqref{eq:si_two_output_cond_cov_residual},
\[
\Delta_1(\bmx_\star)=\bmv^\top \bmA^{-1}\bmv.
\]
For any index \(b\in\{1,\ldots,N_2\}\), the variational characterization of a quadratic form gives
\[
\bmv^\top \bmA^{-1}\bmv
=
\max_{\bmu\in\mathbb{R}^{N_2}}
\left\{
2\bmv^\top\bmu-\bmu^\top\bmA\bmu
\right\}
\ge
\max_{t\in\mathbb{R}}
\left\{
2t\,v_b-t^2A_{bb}
\right\}
=
\frac{v_b^2}{A_{bb}}.
\]
Also,
\[
A_{bb}
=
\Var\!\bigl(y_b^{(2)}\mid \bmy^{(1)}\bigr)
\le
\Var\!\bigl(y_b^{(2)}\bigr)
=
\Lambda_{22}k\!\bigl(\bmx_b^{(2)},\bmx_b^{(2)}\bigr)+\tau_2^2
=
\Lambda_{22}+\tau_2^2,
\]
where the final equality uses \(\psi(0)=1\). Therefore
\[
\Delta_1(\bmx_\star)
\ge
\frac{\Lambda_{12}^2}{\Lambda_{22}+\tau_2^2}
\Bigl[
k_{2,b}^{\mathrm{res}}(\bmx_\star)
\Bigr]^2
\qquad
\text{for every } b\in\{1,\ldots,N_2\}.
\]
Choosing \(b=b^\star\) proves part (ii).

\end{proof}

\subsection{Residual-channel scaling under separation}

\begin{corollary}[Scaling of the residual channel under uniform separation]
\label{cor:si_residual_packing_scaling}
In the setting of Theorem~\ref{thm:si_two_output_residual_bounds}, suppose that, along a sequence of designs, the auxiliary designs are uniformly separated and \(\psi\) is radially integrable in dimension \(T\), that is, \(\int_0^\infty r^{T-1}\psi(r)\,dr<\infty\). Then \(S_2\) is bounded by a constant depending only on the separation scale, \(\psi\) and \(T\). Hence
\[
\Delta_1(\bmx_\star)
\asymp
\Lambda_{12}^2
\bigl\|\bmk_2^{\mathrm{res}}(\bmx_\star)\bigr\|_2^2
\]
up to constants independent of \(N_1\) and \(N_2\).
\end{corollary}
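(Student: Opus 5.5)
The plan is to combine part (i) of Theorem~\ref{thm:si_two_output_residual_bounds} with a uniform bound on the packing constant \(S_2\). Part (i) already gives
\[
\frac{\Lambda_{12}^2}{\Lambda_{22}(1+S_2)+\tau_2^2}\bigl\|\bmk_2^{\mathrm{res}}(\bmx_\star)\bigr\|_2^2
\le\Delta_1(\bmx_\star)\le
\frac{\Lambda_{12}^2}{\tau_2^2}\bigl\|\bmk_2^{\mathrm{res}}(\bmx_\star)\bigr\|_2^2 .
\]
The upper constant \(1/\tau_2^2\) does not depend on \(N_1,N_2\). So the only task is to show \(S_2\le \bar S\) for some \(\bar S\) depending only on the separation scale \(h>0\), on \(\psi\) and on \(T\). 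The two-sided equivalence then holds with constants \((\Lambda_{22}(1+\bar S)+\tau_2^2)^{-1}\) and \(\tau_2^{-2}\).

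\emph{Bounding \(S_2\).} I read \emph{uniformly separated} as \(\|\bmx_b^{(2)}-\bmx_{b'}^{(2)}\|\ge h\) for all \(b\neq b'\) along the sequence. Fix \(b\) and partition the other auxiliary points into annuli \(A_m:=\{\bmx:\ mh/2\le \|\bmx-\bmx_b^{(2)}\|<(m+1)h/2\}\) for \(m\ge1\). Points at distance below \(h\) lie in \(A_1\); this is nonempty only because of the lower edge at \(h/2\), and separation keeps distances at least \(h\) in any case.

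The balls of radius \(h/2\) around the separated points are disjoint. A volume comparison therefore bounds the number of points in \(A_m\) by
\[
\frac{((m+2)h/2)^T-((m-1)h/2)^T}{(h/2)^T}\le c_T\,m^{T-1}.
\]
Each such point contributes at most \(\psi(mh/2)\), since \(\psi\) is non-increasing. Hence
\[
S_2\le c_T\sum_{m\ge1} m^{T-1}\psi(mh/2).
\]

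The remaining step is to compare this sum with the integral. Monotonicity of \(\psi\) gives
\[
m^{T-1}\psi(mh/2)\le C_T\,(2/h)^T\int_{(m-1)h/2}^{mh/2} r^{T-1}\psi(r)\,dr\quad (m\ge2),
\]
using \(m^{T-1}\le 2^{T-1}(m-1)^{T-1}\) and \(r\ge (m-1)h/2\) on the interval. I would treat \(m=1\) separately; it is bounded by \(\psi(h/2)\le\psi(0)=1\). The sum is then at most \(1+C_T(2/h)^T\int_0^\infty r^{T-1}\psi(r)\,dr\). This is finite by the radial integrability assumption and is independent of \(b\), \(N_2\) and \(N_1\).

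I expect the main obstacle to be the counting and sum-to-integral step. The shift between \(m^{T-1}\) and \((m-1)^{T-1}\) and the annulus volume estimate must be handled with care, because \(\psi\) is only assumed monotone and integrable. The appeal to Theorem~\ref{thm:si_two_output_residual_bounds} itself is immediate. One point should be stated explicitly: separation is needed only within \(\mathcal{X}_2\); the geometry of \(\mathcal{X}_1\) never enters the constants.
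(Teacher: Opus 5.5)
This is correct and is essentially the paper's argument: disjoint balls of radius half the separation bound the number of auxiliary points in each annulus by a constant times $m^{T-1}$, monotonicity of $\psi$ bounds each annulus's contribution, radial integrability makes the series finite, and part~(i) of Theorem~\ref{thm:si_two_output_residual_bounds} then gives the two-sided equivalence. Your annuli of width $h/2$ and your explicit monotone sum-to-integral comparison are cosmetic refinements that spell out the finiteness step the paper only asserts; also, your annulus $A_1$ is in fact empty under $h$-separation, so the $m=1$ term simply vanishes, and your remark that it is nonempty should be dropped.
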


\begin{proof}
Suppose the auxiliary designs are \(q_0\)-separated, with \(q_0>0\). Around each auxiliary point draw the closed ball of radius \(q_0/2\). These balls are disjoint. Standard packing in \(\mathbb{R}^T\) implies that, for any fixed centre \(\bmx_b^{(2)}\), the number of auxiliary points in the annulus
	\[
	\left\{
	\bmx:
m q_0
\le
\left\|\bmx-\bmx_b^{(2)}\right\|
<
(m+1)q_0
\right\}
\]
is bounded by \(C_T(m+1)^{T-1}\), where \(C_T\) depends only on dimension. Since \(\psi\) is non-increasing,
\[
S_2
\le
C_T
\sum_{m=1}^{\infty}
(m+1)^{T-1}\psi(mq_0).
	\]
	Radial integrability of \(\psi\), equivalently
	\(\int_0^\infty r^{T-1}\psi(r)\,dr<\infty\), makes the series finite. Thus \(S_2\) is bounded uniformly in the sample size. Combining the boundedness of \(S_2\) with the two-sided bound in part~(i) yields the stated order equivalence.
	\end{proof}

\subsection{BPI certificate for positive borrowing}

\begin{proposition}[BPI certificate for positive borrowing]
\label{prop:si_bpi_certificate}
In the setting of Theorem~\ref{thm:si_two_output_residual_bounds}, let \(b^\star\in\arg\min_{1\le b\le N_2}\|\bmx_\star-\bmx_b^{(2)}\|\). For \(z\in\mathcal{X}\), write
\[
\bmk_1(z)
:=
\bigl(
k(z,\bmx_1^{(1)}),\ldots,
k(z,\bmx_{N_1}^{(1)})
\bigr)^\top
\]
and define the leverage of the target design for the unit-variance GP
\[
H_1(z)
:=
\bmk_1(z)^\top
\bigl(\bmK_{11}^{(0)}+\sigma_1^2\bmI\bigr)^{-1}
\bmk_1(z).
\]
Then
\[
k_{2,b^\star}^{\mathrm{res}}(\bmx_\star)
=
\operatorname{Cov}\!\left\{
f(\bmx_\star),f(\bmx_{b^\star}^{(2)})
\mid
f(\bmx_a^{(1)})+\epsilon_a,\ a=1,\ldots,N_1
\right\},
\]
where \(f\) is a unit-variance GP with kernel \(k\) and \(\epsilon_a\stackrel{\mathrm{ind}}{\sim}N(0,\sigma_1^2)\). Consequently,
\[
\Delta_1(\bmx_\star)
\ge
\frac{\Lambda_{12}^2}{\Lambda_{22}+\tau_2^2}
\left[
\psi\!\bigl(\delta_{-1}(\bmx_\star)\bigr)
-
\sqrt{H_1(\bmx_\star)H_1(\bmx_{b^\star}^{(2)})}
\right]_+^2.
\]
Moreover,
\begin{equation}
\Delta_1(\bmx_\star)
\ge
\frac{\Lambda_{12}^2}{\Lambda_{22}+\tau_2^2}
\left[
\psi\!\bigl(\delta_{-1}(\bmx_\star)\bigr)
-
\frac{N_1\Lambda_{11}}{\tau_1^2}
\psi\!\bigl(\delta_1(\bmx_\star)\bigr)
\psi\!\bigl(\delta_{2\to 1}(\bmx_{b^\star}^{(2)})\bigr)
\right]_+^2.
\label{eq:si_bpi_certificate_distance}
\end{equation}
If \(b_1(\bmx_\star)>0\), then \(\delta_{-1}(\bmx_\star)=\{1-b_1(\bmx_\star)\}\delta_1(\bmx_\star)\) and
\(\delta_{2\to 1}(\bmx_{b^\star}^{(2)})\ge b_1(\bmx_\star)\delta_1(\bmx_\star)\), so the right-hand side of Eq.~\eqref{eq:si_bpi_certificate_distance} is bounded below by
\[
\frac{\Lambda_{12}^2}{\Lambda_{22}+\tau_2^2}
\left[
\psi\!\bigl(\{1-b_1(\bmx_\star)\}\delta_1(\bmx_\star)\bigr)
-
\frac{N_1\Lambda_{11}}{\tau_1^2}
\psi\!\bigl(\delta_1(\bmx_\star)\bigr)
\psi\!\bigl(b_1(\bmx_\star)\delta_1(\bmx_\star)\bigr)
\right]_+^2.
\]
Thus positive BPI certifies positive oracle gain whenever the displayed bracket is positive.
\end{proposition}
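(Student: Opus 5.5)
The plan has four steps: identify the residual kernel as a conditional covariance, bound that covariance with Cauchy--Schwarz, turn the leverage into distances, and then substitute the BPI geometry.

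\textbf{Conditional-covariance identity.} Let \(f\) be a unit-variance GP with kernel \(k\), and let \(\tilde{\bmy}=(f(\bmx_a^{(1)})+\epsilon_a)_a\). Then \(\operatorname{Cov}(\tilde{\bmy})=\bmK_{11}^{(0)}+\sigma_1^2\bmI\) and \(\operatorname{Cov}(f(z),\tilde{\bmy})=\bmk_1(z)^\top\). Gaussian conditioning gives
\(\operatorname{Cov}\{f(\bmx_\star),f(\bmx_{b^\star}^{(2)})\mid\tilde{\bmy}\}=k(\bmx_\star,\bmx_{b^\star}^{(2)})-\bmk_1(\bmx_{b^\star}^{(2)})^\top(\bmK_{11}^{(0)}+\sigma_1^2\bmI)^{-1}\bmk_1(\bmx_\star)\).
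This is exactly the \(b^\star\)-th entry of \(\bmk_2^{\mathrm{res}}(\bmx_\star)\), because row \(b^\star\) of \(\bmK_{21}^{(0)}\) is \(\bmk_1(\bmx_{b^\star}^{(2)})^\top\). The first term equals \(\psi(\delta_{-1}(\bmx_\star))\), since with \(D=2\) the nearest auxiliary point is \(\bmx_{b^\star}^{(2)}\).

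\textbf{First bound.} Write \(\bmM=(\bmK_{11}^{(0)}+\sigma_1^2\bmI)^{-1}\succ0\). Cauchy--Schwarz in the \(\bmM\)-inner product gives \(|\bmk_1(\bmx_{b^\star}^{(2)})^\top\bmM\bmk_1(\bmx_\star)|\le\sqrt{H_1(\bmx_\star)H_1(\bmx_{b^\star}^{(2)})}\). Hence \(k^{\mathrm{res}}_{2,b^\star}(\bmx_\star)\ge\psi(\delta_{-1}(\bmx_\star))-\sqrt{H_1H_1}\). Squaring gives \([k^{\mathrm{res}}_{2,b^\star}]^2\ge[\cdot]_+^2\): if the bracket is nonpositive the bound is trivial, and otherwise the residual entry dominates a positive number. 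Inserting this into Theorem~\ref{thm:si_two_output_residual_bounds}(ii) yields the first displayed inequality.

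\textbf{Distance version \eqref{eq:si_bpi_certificate_distance}.} I bound the cross term directly rather than via the product of leverages. Since \(\bmK_{11}^{(0)}\succeq0\), we have \(\lambda_{\max}(\bmM)\le\sigma_1^{-2}=\Lambda_{11}/\tau_1^2\). Each entry of \(\bmk_1(z)\) is at most \(\psi(\delta_1(z))\), so \(\|\bmk_1(z)\|_2\le\sqrt{N_1}\,\psi(\delta_1(z))\). Therefore
\(|\bmk_1(\bmx_{b^\star}^{(2)})^\top\bmM\bmk_1(\bmx_\star)|\le(\Lambda_{11}/\tau_1^2)\,N_1\,\psi(\delta_1(\bmx_\star))\,\psi(\delta_{2\to1}(\bmx_{b^\star}^{(2)}))\),
and the same \([\cdot]_+^2\) argument applies. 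This gives a sharper statement than going through \(\sqrt{H_1H_1}\).

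\textbf{BPI substitution.} When \(b_1(\bmx_\star)>0\), Definition~\ref{def:bpi} gives \(\delta_{-1}=(1-b_1)\delta_1\). By the triangle inequality, every target point \(\bmx\) satisfies \(\|\bmx_{b^\star}^{(2)}-\bmx\|\ge\|\bmx_\star-\bmx\|-\|\bmx_\star-\bmx_{b^\star}^{(2)}\|\ge\delta_1-\delta_{-1}=b_1\delta_1\), so \(\delta_{2\to1}(\bmx_{b^\star}^{(2)})\ge b_1\delta_1\). Because \(\psi\) is non-increasing, this gives \(\psi(\delta_{2\to1}(\bmx_{b^\star}^{(2)}))\le\psi(b_1\delta_1)\). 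The bracket therefore decreases when this substitution is made. The map \(t\mapsto[t]_+^2\) is monotone, so the stated lower bound follows.

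\textbf{Main obstacle.} The only delicate point is the sign handling. The Cauchy--Schwarz step bounds an absolute value, and the positive part is needed so that squaring preserves the inequality. Everything else is routine Gaussian conditioning.
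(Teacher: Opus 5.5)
Your proof is correct and follows the paper's argument step for step: the Gaussian-conditioning identity for $k^{\mathrm{res}}_{2,b^\star}(\bmx_\star)$, then Cauchy--Schwarz in the $(\bmK_{11}^{(0)}+\sigma_1^2\bmI)^{-1}$-inner product fed into Theorem~\ref{thm:si_two_output_residual_bounds}(ii), then the bound $\lambda_{\max}\bigl((\bmK_{11}^{(0)}+\sigma_1^2\bmI)^{-1}\bigr)\le\Lambda_{11}/\tau_1^2$ with entrywise kernel bounds, and finally the triangle-inequality BPI substitution. One aside is wrong, though it does not affect validity: the direct distance bound is not ``sharper'' than going through $\sqrt{H_1H_1}$, because $H_1(z)\le(\Lambda_{11}/\tau_1^2)N_1\psi(\delta_1(z))^2$ shows that bounding the leverages gives exactly the same constant, and the leverage certificate is in fact the stronger of the two.
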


	\phantomsection\label{proof:bpi_certificate}
	\begin{proof}[Proof of Proposition~\ref{prop:si_bpi_certificate}]
For a unit-variance GP \(f\) with kernel \(k\), observed with independent noise variance \(\sigma_1^2\) on \(\mathcal{X}_1\), the standard Gaussian conditioning formula gives
\[
\operatorname{Cov}\!\left\{
f(\bmx_\star),f(\bmx_b^{(2)})
\mid
f(\bmx_a^{(1)})+\epsilon_a,\ a=1,\ldots,N_1
\right\}
=
k(\bmx_\star,\bmx_b^{(2)})
-
\bmk_1(\bmx_\star)^\top
\bigl(\bmK_{11}^{(0)}+\sigma_1^2\bmI\bigr)^{-1}
\bmk_1(\bmx_b^{(2)}).
\]
The right-hand side is exactly \(k_{2,b}^{\mathrm{res}}(\bmx_\star)\), the \(b\)-th entry of the residual vector in Theorem~\ref{thm:si_two_output_residual_bounds}. Taking \(b=b^\star\) proves the identity for the conditional covariance.

Let
\[
\bmA_1
:=
\bigl(\bmK_{11}^{(0)}+\sigma_1^2\bmI\bigr)^{-1}.
\]
By Cauchy--Schwarz in the \(\bmA_1\)-inner product,
\[
\left|
\bmk_1(\bmx_\star)^\top
\bmA_1
\bmk_1(\bmx_{b^\star}^{(2)})
\right|
\le
\sqrt{
H_1(\bmx_\star)
H_1(\bmx_{b^\star}^{(2)})
}.
\]
Since \(b^\star\) is a nearest auxiliary point to \(\bmx_\star\),
\[
k(\bmx_\star,\bmx_{b^\star}^{(2)})
=
\psi\!\bigl(\delta_{-1}(\bmx_\star)\bigr).
\]
Therefore
\[
k_{2,b^\star}^{\mathrm{res}}(\bmx_\star)
\ge
\psi\!\bigl(\delta_{-1}(\bmx_\star)\bigr)
-
\sqrt{
H_1(\bmx_\star)
H_1(\bmx_{b^\star}^{(2)})
}.
\]
Combining this with the one-point lower bound in Theorem~\ref{thm:si_two_output_residual_bounds} gives the first displayed lower bound. The positive part is valid because, if the bracket is non-positive, the claimed lower bound is zero; if it is positive, then \(k_{2,b^\star}^{\mathrm{res}}(\bmx_\star)\) is at least the bracket and hence its square is at least the squared bracket.

For the certificate based only on distance, note that
\[
\bmK_{11}^{(0)}+\sigma_1^2\bmI
\succeq
\sigma_1^2\bmI,
\qquad
\bmA_1
\preceq
\sigma_1^{-2}\bmI
=
\frac{\Lambda_{11}}{\tau_1^2}\bmI.
\]
Hence
\[
\begin{aligned}
\left|
\bmk_1(\bmx_\star)^\top
\bmA_1
\bmk_1(\bmx_{b^\star}^{(2)})
\right|
&\le
\frac{\Lambda_{11}}{\tau_1^2}
\left\|\bmk_1(\bmx_\star)\right\|_2
\left\|\bmk_1(\bmx_{b^\star}^{(2)})\right\|_2\\
&\le
\frac{N_1\Lambda_{11}}{\tau_1^2}
\psi\!\bigl(\delta_1(\bmx_\star)\bigr)
\psi\!\bigl(\delta_{2\to1}(\bmx_{b^\star}^{(2)})\bigr).
\end{aligned}
\]
	Substituting this upper bound for the projection term gives Eq.~\eqref{eq:si_bpi_certificate_distance}.

Finally, if \(b_1(\bmx_\star)>0\), Definition~\ref{def:bpi} gives
\[
\delta_{-1}(\bmx_\star)
=
\{1-b_1(\bmx_\star)\}\delta_1(\bmx_\star).
\]
For any \(\bmx_a^{(1)}\in\mathcal{X}_1\), the triangle inequality gives
\[
\left\|
\bmx_{b^\star}^{(2)}-\bmx_a^{(1)}
\right\|
\ge
\left\|
\bmx_\star-\bmx_a^{(1)}
\right\|
-
\left\|
\bmx_\star-\bmx_{b^\star}^{(2)}
\right\|
\ge
\delta_1(\bmx_\star)-\delta_{-1}(\bmx_\star)
=
b_1(\bmx_\star)\delta_1(\bmx_\star).
\]
Taking the minimum over \(a\) gives
\(\delta_{2\to1}(\bmx_{b^\star}^{(2)})\ge b_1(\bmx_\star)\delta_1(\bmx_\star)\). Since \(\psi\) is non-increasing, substituting
\[
\psi\!\bigl(\delta_{2\to1}(\bmx_{b^\star}^{(2)})\bigr)
\le
\psi\!\bigl(b_1(\bmx_\star)\delta_1(\bmx_\star)\bigr)
\]
	into Eq.~\eqref{eq:si_bpi_certificate_distance} gives the final certificate based only on BPI.
\end{proof}
\section{Sparse Variational Fitting}
\label{sec:si_inference}

This section records the computational details underlying the sparse variational implementation summarised in Section~\ref{sec:experiments} of the main text. Its role is computational rather than theoretical: it provides a scalable approximation to the heterotopic multi-output GP model used in the experiments, while preserving the dependence structure across outputs through \(\bmLambda\).
Throughout this section, \(\mathrm{p}\) and \(\mathrm{q}\) denote probability densities, to distinguish them from the output indices \(p\) and \(q\) used in the main text.

\subsection{Augmented prior with inducing variables}
\label{subsec:si_augmented_prior}

Let
\[
\bmZ
=
\{\bmz_1,\ldots,\bmz_M\}\subset\mathcal{X},
\qquad
M\ll N_{\mathrm{tot}},
\]
be inducing inputs, and let
\[
\bmU
=
\bigl(
u_1(\bmZ),\ldots,u_D(\bmZ)
\bigr)\in\mathbb{R}^{M\times D}
\]
denote the corresponding inducing variables. Under the separable multi-output GP model of Section~\ref{sec:oracle},
\begin{equation}
\mathrm{p}(\bmU\mid \bmZ,\bmphi,\bmLambda)
=
\mn\!\bigl(
\bmzero,\,\bmK_{\bmZ\bmZ},\,\bmLambda
\bigr),
\label{eq:si_prior_U}
\end{equation}
where \(\bmphi\) denotes the kernel hyperparameters and \(\bmK_{\bmZ\bmZ}\in\mathbb{R}^{M\times M}\) is the kernel matrix on the inducing set.

For any finite evaluation set \(\bmX=\{\bmx_1,\ldots,\bmx_n\}\subset\mathcal{X}\), define
\[
\bmF_{\bmX}
=
\bigl(
f_1(\bmX),\ldots,f_D(\bmX)
\bigr)\in\mathbb{R}^{n\times D}.
\]
This is a latent \(n\times D\) output matrix on a common input set \(\bmX\). In the heterotopic setting, the observed data do not usually form such a full matrix on a shared design; instead, they correspond to a ragged subset of output--location entries, which we index explicitly below.
Gaussian conditioning yields
\begin{equation}
\mathrm{p}(\bmF_{\bmX}\mid \bmU)
=
\mn\!\bigl(
\bmA_{\bmX}\bmU,\,\bmR_{\bmX},\,\bmLambda
\bigr),
\label{eq:si_conditional_F}
\end{equation}
where
\begin{equation}
\bmA_{\bmX}
:=
\bmK_{\bmX\bmZ}\bmK_{\bmZ\bmZ}^{-1},
\qquad
\bmR_{\bmX}
:=
\bmK_{\bmX\bmX}
-\bmK_{\bmX\bmZ}\bmK_{\bmZ\bmZ}^{-1}\bmK_{\bmZ\bmX}.
\label{eq:si_A_R}
\end{equation}
Thus the inducing layer preserves separable output dependence while replacing exact dependence on the full latent process over \(\bmX\) by dependence on the reduced inducing set.

\subsection{Variational posterior and ELBO}
\label{subsec:si_variational}

We use the standard sparse variational factorisation
\begin{equation}
\mathrm{q}(\bmF,\bmU)
=
\mathrm{p}(\bmF\mid \bmU)\,\mathrm{q}(\bmU),
\label{eq:si_var_factorisation}
\end{equation}
so that the exact conditional prior is retained. The variational posterior on the inducing variables is taken to be matrix-normal,
\begin{equation}
\mathrm{q}(\bmU)
=
\mn\!\bigl(
\bmM,\,\bmSx,\,\bmSy
\bigr),
\label{eq:si_var_posterior}
\end{equation}
with variational parameters
\[
\bmM\in\mathbb{R}^{M\times D},
\qquad
\bmSx\in\mathbb{R}^{M\times M},
\qquad
\bmSy\in\mathbb{R}^{D\times D}.
\]
This approximation uses \(O(MD+M^2+D^2)\) free parameters rather than \(O((MD)^2)\) for an unrestricted covariance on \(\vecop(\bmU)\).

The evidence lower bound is
\begin{equation}
\mathcal{L}_{\mathrm{ELBO}}
=
\mathbb{E}_{\mathrm{q}(\bmF)}
\bigl[
\log \mathrm{p}(\bmY\mid \bmF)
\bigr]
-
\mathrm{KL}\!\left(
\mathrm{q}(\bmU)\,\|\,\mathrm{p}(\bmU)
\right).
\label{eq:si_elbo}
\end{equation}

\paragraph*{KL term}
Under Eq.~\eqref{eq:si_prior_U} and Eq.~\eqref{eq:si_var_posterior}, the Kullback--Leibler divergence has the closed form
\begin{align}
\mathrm{KL}\!\left(
\mathrm{q}(\bmU)\,\|\,\mathrm{p}(\bmU)
\right)
&=
\frac{1}{2}\Bigl[
\tr\!\bigl(
\bmK_{\bmZ\bmZ}^{-1}\bmSx
\bigr)
\tr\!\bigl(
\bmLambda^{-1}\bmSy
\bigr)
+
\tr\!\bigl(
\bmLambda^{-1}\bmM^\top
\bmK_{\bmZ\bmZ}^{-1}
\bmM
\bigr)
\nonumber\\
&\qquad
-MD
+
D\log\frac{\det(\bmK_{\bmZ\bmZ})}{\det(\bmSx)}
+
M\log\frac{\det(\bmLambda)}{\det(\bmSy)}
\Bigr].
\label{eq:si_kl}
\end{align}
This term regularises the variational posterior toward the inducing prior structure and preserves the role of \(\bmLambda\) as the dependence parameter across outputs.

\subsection{Expected log-likelihood and univariate reduction}
\label{subsec:si_ell}

Let
\[
\obsset
:=
\{(a,j): \text{output } j \text{ is observed at input } \bmx_a^{(j)}\}
\]
be the set of observed output--location pairs. Suppose the likelihood factorises over \(\obsset\), so that
\begin{equation}
\log \mathrm{p}(\bmY\mid \bmF)
=
\sum_{(a,j)\in\obsset}
\log \mathrm{p}_j(y_{aj}\mid f_{aj}),
\label{eq:si_lik_factorisation}
\end{equation}
where \(f_{aj}:=f_j(\bmx_a^{(j)})\).
This is the point at which heterotopy enters the likelihood explicitly: only the entries indexed by \(\obsset\) are observed, rather than the full matrix \(\bmF_{\bmX}\) on a common design.

\begin{lemma}[Reduction to univariate marginals]
\label{lem:si_univariate_reduction}
Under Eq.~\eqref{eq:si_var_factorisation},
\begin{equation}
\mathbb{E}_{\mathrm{q}(\bmF)}
\bigl[
\log \mathrm{p}(\bmY\mid \bmF)
\bigr]
=
\sum_{(a,j)\in\obsset}
\mathbb{E}_{\mathrm{q}(f_{aj})}
\bigl[
\log \mathrm{p}_j(y_{aj}\mid f_{aj})
\bigr],
\label{eq:si_ell_decomp}
\end{equation}
where each marginal \(\mathrm{q}(f_{aj})\) is univariate Gaussian with mean
\begin{equation}
m_{aj}
=
\bma(\bmx_a^{(j)})^\top \bmM_{:,j},
\label{eq:si_univ_mean}
\end{equation}
and variance
\begin{equation}
v_{aj}
=
\Lambda_{jj}\,r(\bmx_a^{(j)})
+
\bigl(
\bma(\bmx_a^{(j)})^\top
\bmSx
\bma(\bmx_a^{(j)})
\bigr)
(\bmSy)_{jj},
\label{eq:si_univ_var}
\end{equation}
with
\begin{equation}
\bma(\bmx)
:=
\bmK_{\bmx\bmZ}\bmK_{\bmZ\bmZ}^{-1},
\qquad
r(\bmx)
:=
k(\bmx,\bmx)-\bmK_{\bmx\bmZ}\bmK_{\bmZ\bmZ}^{-1}\bmK_{\bmZ\bmx}.
\label{eq:si_a_r}
\end{equation}
\end{lemma}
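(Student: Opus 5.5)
The argument has two steps: factorise the expectation using the likelihood, then compute the univariate marginals of $\mathrm{q}(\bmF)$ from the matrix-normal structure.

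\emph{Factorisation.} By Eq.~\eqref{eq:si_lik_factorisation}, $\log \mathrm{p}(\bmY\mid\bmF)$ is a finite sum over $\obsset$. Term $(a,j)$ depends on $\bmF$ only through the scalar $f_{aj}$. Linearity of expectation under $\mathrm{q}(\bmF)=\int \mathrm{p}(\bmF\mid\bmU)\mathrm{q}(\bmU)\,d\bmU$ then turns the expectation of the sum into a sum of expectations. Each term $\mathbb{E}_{\mathrm{q}(\bmF)}[\log \mathrm{p}_j(y_{aj}\mid f_{aj})]$ equals the expectation under the one-dimensional marginal $\mathrm{q}(f_{aj})$. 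This gives Eq.~\eqref{eq:si_ell_decomp}, assuming only integrability of each log-likelihood term.

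\emph{Marginal mean and variance.} Fix $(a,j)$ and write $\bmx=\bmx_a^{(j)}$. Apply Eq.~\eqref{eq:si_conditional_F} with the single-point evaluation set $\bmX=\{\bmx\}$; it suffices to treat one location, because only one entry of $\bmF$ is needed. This gives $\mathrm{p}(f_1(\bmx),\ldots,f_D(\bmx)\mid\bmU)=\mn(\bma(\bmx)^\top\bmU, r(\bmx),\bmLambda)$, with $\bma$ and $r$ as in Eq.~\eqref{eq:si_a_r}. Its $j$th coordinate is therefore Gaussian given $\bmU$, with mean $\bma(\bmx)^\top\bmU_{:,j}$ and variance $r(\bmx)\Lambda_{jj}$. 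Marginalising over $\mathrm{q}(\bmU)=\mn(\bmM,\bmSx,\bmSy)$ keeps $f_{aj}$ Gaussian, since it is a linear function of a Gaussian plus independent Gaussian noise. By the tower property, the mean is $\bma(\bmx)^\top\bmM_{:,j}$. By the law of total variance, the variance is $\Lambda_{jj}r(\bmx)+\Var_{\mathrm{q}}(\bma(\bmx)^\top\bmU_{:,j})$.

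\emph{Main obstacle.} The only real work is the last variance term. It needs the column covariance of a matrix normal: under $\mn(\bmM,\bmSx,\bmSy)$ we have $\cov(\vecop\bmU)=\bmSy\otimes\bmSx$, so $\cov(\bmU_{:,j})=(\bmSy)_{jj}\bmSx$. Hence $\Var(\bma^\top\bmU_{:,j})=\bma^\top\bmSx\bma\,(\bmSy)_{jj}$, which is Eq.~\eqref{eq:si_univ_var}. The point to check carefully is the convention that $\bmSx$ is the row (input) covariance and $\bmSy$ the column (output) covariance. This must match its use in the prior Eq.~\eqref{eq:si_prior_U} and in the KL formula Eq.~\eqref{eq:si_kl}.
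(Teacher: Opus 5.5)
Your proposal is correct and follows the paper's proof. The paper likewise splits the expectation by linearity over the factorised likelihood, then takes the Gaussian marginal of $f_{aj}$ under $\mathrm{p}(\bmF\mid\bmU)\,\mathrm{q}(\bmU)$ as the residual term $\Lambda_{jj}r(\bmx)$ plus the $\mathrm{q}(\bmU)$-induced term; you just spell out the tower property, the law of total variance and the column covariance $(\bmSy)_{jj}\bmSx$ from $\cov(\vecop\bmU)=\bmSy\otimes\bmSx$, which the paper leaves implicit.
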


\begin{proof}
By Eq.~\eqref{eq:si_lik_factorisation} and linearity of expectation,
\[
\mathbb{E}_{\mathrm{q}(\bmF)}
\bigl[
\log \mathrm{p}(\bmY\mid \bmF)
\bigr]
=
\sum_{(a,j)\in\obsset}
\mathbb{E}_{\mathrm{q}(\bmF)}
\bigl[
\log \mathrm{p}_j(y_{aj}\mid f_{aj})
\bigr].
\]
Each summand depends on \(\mathrm{q}(\bmF)\) only through the marginal distribution of the scalar \(f_{aj}\), which proves Eq.~\eqref{eq:si_ell_decomp}. Under Eq.~\eqref{eq:si_var_factorisation}, the marginal \(\mathrm{q}(f_{aj})\) is Gaussian because the matrix-normal family is closed under affine maps. Its mean is
\[
m_{aj}
=
\mathbb{E}[f_{aj}]
=
\bma(\bmx_a^{(j)})^\top \bmM_{:,j},
\]
and its variance is the sum of the conditional residual variance of the Gaussian process and the uncertainty induced by \(\mathrm{q}(\bmU)\), yielding Eq.~\eqref{eq:si_univ_var}.
\end{proof}

Lemma~\ref{lem:si_univariate_reduction} is computationally important because it reduces the expected log-likelihood to a sum of one-dimensional terms even under heterotopic multi-output sampling.

\paragraph*{Gaussian likelihood}
If
\[
y_{aj}\mid f_{aj}\sim N(f_{aj},\tau_j^2),
\]
then the corresponding expected log-likelihood term is available in closed form:
\begin{equation}
\mathbb{E}_{\mathrm{q}(f_{aj})}
\bigl[
\log \mathrm{p}_j(y_{aj}\mid f_{aj})
\bigr]
=
-\frac{1}{2}\log(2\pi\tau_j^2)
-\frac{(y_{aj}-m_{aj})^2+v_{aj}}{2\tau_j^2}.
\label{eq:si_gaussian_ell}
\end{equation}

\paragraph*{Non-Gaussian likelihoods}
For likelihoods without closed-form expectations, the terms in Eq.~\eqref{eq:si_ell_decomp} can be estimated by reparameterisation-based Monte Carlo using the univariate Gaussian marginals \(\mathrm{q}(f_{aj})\).

\subsection{Parameterisation and optimisation}
\label{subsec:si_parameterisation}

All constrained covariance matrices are parameterised through their Cholesky factors. In particular,
\[
\bmLambda=\bmL_{\Lambda}\bmL_{\Lambda}^\top,
\qquad
\bmSx=\bmL_x\bmL_x^\top,
\qquad
\bmSy=\bmL_y\bmL_y^\top,
\]
with softplus transforms on the diagonals to ensure strict positivity. Kernel hyperparameters in \(\bmphi\) are stored on an unconstrained scale, typically via logarithmic parameterisations. Small diagonal jitter is added to \(\bmK_{\bmZ\bmZ}\) and \(\bmLambda\) before Cholesky factorisation to maintain numerical stability.

Optimisation is carried out by stochastic gradient ascent over minibatches \(\mathcal{B}\subset\obsset\) of observed pairs. With \(M\) inducing points and minibatch size \(n_{\mathrm{mb}}:=|\mathcal{B}|\), the dominant cost per iteration is
\[
\mathcal{O}(M^3+n_{\mathrm{mb}}M+D^3),
\]
where the three leading terms arise from factorising \(\bmK_{\bmZ\bmZ}\), evaluating batchwise kernel projections, and manipulating the covariance across outputs, respectively.

\subsection{Relation to heterotopic geometry}
\label{subsec:si_geometry_link}

Although the variational approximation is computational rather than theoretical, it remains aligned with the geometry developed in the main text. The heterotopic design enters the ELBO through the observed pairs \(\obsset\) and through the kernel interactions between observation locations and inducing inputs. When outputs are geometrically complementary, shared inducing representations support several outputs simultaneously and can provide stronger information for learning \(\bmLambda\). Under strong geometric separation, these shared representations weaken, and the effective information for learning off-diagonal dependence correspondingly deteriorates.

\section{Experimental and Dataset Details}
\label{sec:si_experiments}

This section provides additional experimental specifications and summaries for the studies reported in Section~\ref{sec:experiments} of the main text. It records the synthetic data-generating mechanisms, design configurations, evaluation settings, and preprocessing steps used for the AQS case study.

\subsection{Simulation Studies}
\label{subsec:si_simulation}

\subsubsection{Common synthetic data-generating process}
\label{subsec:si_synth_setup}

Except for the misspecification arm of Experiment~5, all synthetic experiments use a separable multi-output GP on \([0,1]\) with unit kernel variance and Mat\'ern \(3/2\) kernel
\begin{equation}
k(x,x')
=
\left(
1+\frac{\sqrt{3}\,|x-x'|}{\ell}
\right)
\exp\!\left(
-\frac{\sqrt{3}\,|x-x'|}{\ell}
\right).
\label{eq:si_synth_matern32}
\end{equation}
For output-specific design sets \(\mathcal{X}_1,\ldots,\mathcal{X}_D\), the stacked latent vector
\begin{equation}
\bmf
:=
\bigl(
f_1(\mathcal{X}_1)^\top,\ldots,f_D(\mathcal{X}_D)^\top
\bigr)^\top
\sim
N(\bmzero,\bmSigma_f),
\label{eq:si_synth_latent}
\end{equation}
with block covariance
\begin{equation}
\bigl[\bmSigma_f\bigr]_{pq}
=
\Lambda_{pq}\bmK_{pq}^{(0)},
\qquad
\bigl[\bmK_{pq}^{(0)}\bigr]_{ab}
=
k\!\bigl(x_a^{(p)},x_b^{(q)}\bigr).
\label{eq:si_synth_cov}
\end{equation}
Observed responses are then generated as
\begin{equation}
y_a^{(j)}
=
f_j\!\bigl(x_a^{(j)}\bigr)+\epsilon_a^{(j)},
\qquad
\epsilon_a^{(j)}
\stackrel{\mathrm{ind}}{\sim}
N(0,\tau_j^2).
\label{eq:si_synth_obs}
\end{equation}

For Experiments~1, 3, 4, and 5, training and evaluation outputs are sampled jointly: if \(\mathcal{X}_j^{\mathrm{tr}}\) and \(\mathcal{X}_j^{\mathrm{te}}\) denote the training and evaluation sets, we first draw the latent process on \(\mathcal{X}_j^{\mathrm{tr}}\cup \mathcal{X}_j^{\mathrm{te}}\) jointly across outputs and then split the sample back into training and evaluation components. This preserves the exact cross-covariances between observed and held-out locations.

The synthetic generation pipeline is:
\begin{enumerate}[label=(\arabic*)]
\item choose the output-specific design sets, lengthscale \(\ell\), dependence matrix \(\bmLambda\), and noise variances \((\tau_1^2,\ldots,\tau_D^2)\);
\item draw the latent vector from Eq.~\eqref{eq:si_synth_latent} with covariance blocks from Eq.~\eqref{eq:si_synth_cov};
\item add independent Gaussian noise according to Eq.~\eqref{eq:si_synth_obs};
\item compute oracle quantities under the true parameters; fit the sparse variational multi-output GP and independent GP baselines in Experiments~1 and~3, use oracle calculations in Experiment~2, and use the correlation estimator based on exact likelihood in Experiments~4--5.
\end{enumerate}

\subsubsection{Experiment-specific synthetic designs}
\label{subsec:si_synth_designs}

\paragraph*{Experiment~1}
Experiment~1 gives a finite numerical version of the zero-overlap contrast in Theorem~\ref{thm:zero_overlap_dichotomy}. Output~1 is held fixed at \(N_1=15\) equispaced points on \([0,1]\), providing a moderately sparse target design with room for borrowing. Output~2 has \(N_2=20\) points whose placement varies by regime:
\[
\mathcal{X}_1
=
\left\{
\frac{i}{14}: i=0,\ldots,14
\right\},
\]
\[
\mathcal{X}_{2,\mathrm{int}}
=
\left\{
\frac{i+1/2}{20}: i=0,\ldots,19
\right\}.
\]
\[
\mathcal{X}_{2,\mathrm{mod}}
=
\left\{
0.505
\right\}
\cup
\left\{
0.5+\frac{i}{19}: i=1,\ldots,19
\right\}.
\]
\[
\mathcal{X}_{2,\mathrm{sep}}
=
\left\{
1.5+\frac{i}{19}: i=0,\ldots,19
\right\}.
\]
The interleaved design fills the gaps of output~1 on \([0,1]\). The moderate design is centred on \([0.5,1.5]\), partially overlapping the evaluation domain, so a substantial fraction of output~1's locations lack nearby auxiliary support. The separated design places output~2 on \([1.5,2.5]\), where it provides no nearby auxiliary support over the evaluation domain.
The true parameters are \(\ell=0.15\),
\[
\bmLambda
=
\begin{pmatrix}
1 & 0.70\\
0.70 & 1
\end{pmatrix},
\qquad
(\tau_1^2,\tau_2^2)=(0.08,0.08),
\]
and evaluation is performed on a common 200-point grid on \([0,1]\). The reported summaries average over 15 replicates and focus on output~1 (the sparse target), which isolates the effect of auxiliary geometry on borrowing.

\begin{figure}[htbp]
\centering
\includegraphics[width=0.32\textwidth]{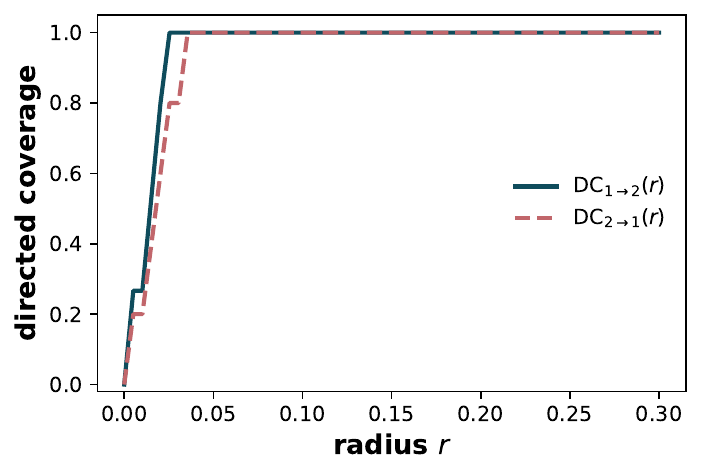}\hfill
\includegraphics[width=0.32\textwidth]{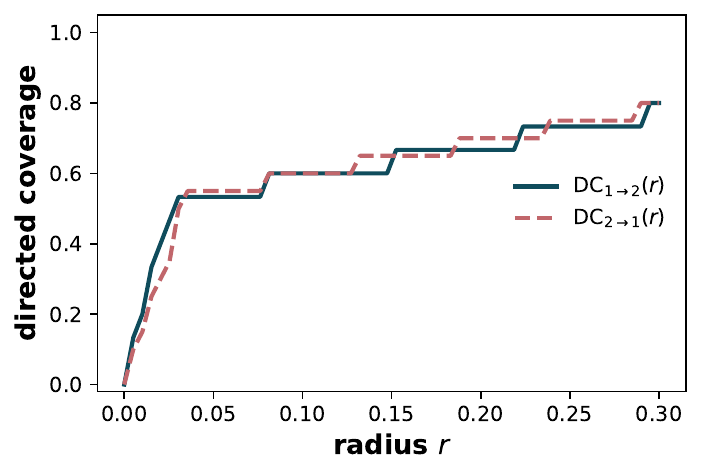}\hfill
\includegraphics[width=0.32\textwidth]{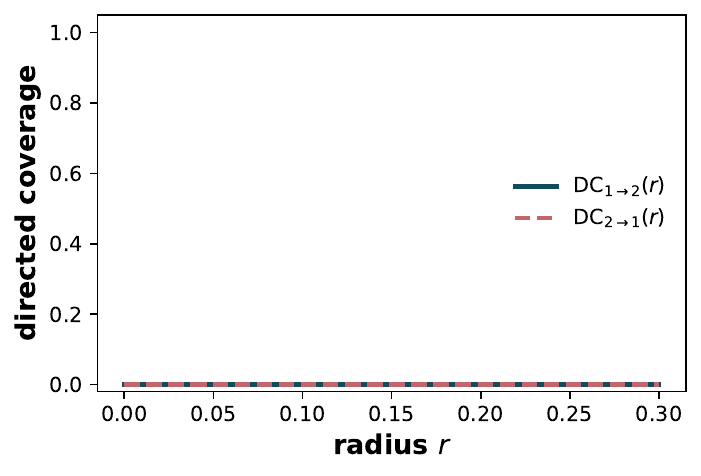}
\caption{Directed coverage curves for the three zero-overlap regimes in Experiment~1. Left: interleaved. Middle: moderate. Right: separated.}
\label{fig:s1_coverage}
\end{figure}

\paragraph*{Experiment~2}
The target location is fixed at \(x_\star=0.5\). These are stylised local motifs rather than realistic sampling networks: the point sets are kept deliberately small so that the role of residual auxiliary support is transparent. In the residual-support motif, output~2 places an observation at \(x_\star\) while output~1 leaves a local gap; in the redundant-support motif, both outputs place observations near \(x_\star\); and in the remote-support motif, output~2 is far from \(x_\star\). The three deterministic motifs are
\[
\mathcal{X}_{1,\mathrm{res}}
=
\{0.1,0.9\},
\qquad
\mathcal{X}_{2,\mathrm{res}}
=
\{0.5,0.85\},
\]
\[
\mathcal{X}_{1,\mathrm{red}}
=
\{0.45,0.50,0.55\},
\qquad
\mathcal{X}_{2,\mathrm{red}}
=
\{0.49,0.50,0.51\},
\]
\[
\mathcal{X}_{1,\mathrm{rem}}
=
\{0.45,0.55\},
\qquad
\mathcal{X}_{2,\mathrm{rem}}
=
\{0.05,0.15\}.
\]
The dependence and noise parameters are
\[
\bmLambda
=
\begin{pmatrix}
1 & 0.65\\
0.65 & 1
\end{pmatrix},
\qquad
(\tau_1^2,\tau_2^2)=(0.04,0.04),
\]
and the lengthscale varies over \(\ell\in\{0.06,0.12,0.20\}\).

Figure~\ref{fig:si_s2_residual} plots the exact oracle gain against the squared norm of the residual auxiliary channel. Auxiliary proximity alone is not sufficient: the residual-support motif yields substantially larger gain than the redundant-support motif, and the exact gain follows the residual-channel magnitude across the three lengthscales.

\begin{figure}[htbp]
\centering
\includegraphics[width=0.58\textwidth]{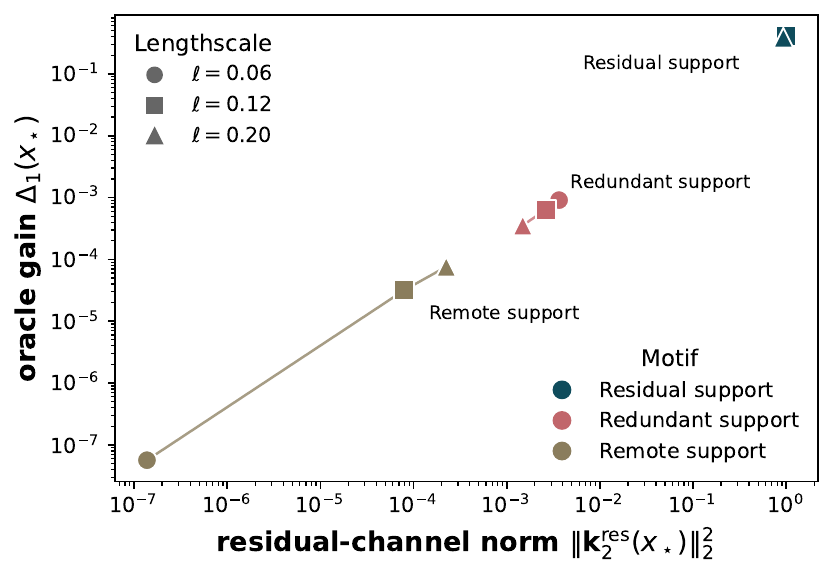}
\caption{Experiment~2: exact oracle gain versus the squared residual-channel norm for the residual-support, redundant-support and remote-support motifs. Colours indicate motifs, marker shapes indicate lengthscales, and both axes use logarithmic scales.}
\label{fig:si_s2_residual}
\end{figure}

Table~\ref{tab:si_s2_bounds} quantifies the comparison between the exact gain and the single-neighbour lower bound and upper bound from Theorem~\ref{thm:si_two_output_residual_bounds}. The single-neighbour lower bound is nearly exact in the residual-support motif and remains informative in the remote-support motif, while it is looser in the redundant-support motif, where multiple nearby auxiliary points largely reproduce information already available from the target design. The upper bound is intentionally conservative in all three motifs and should be read as an envelope rather than a tight approximation.

\begin{table}[htbp]
\caption{Bounds comparison for Experiment~2. The entries report the ratios of the single-neighbour lower bound and upper bound from Theorem~\ref{thm:si_two_output_residual_bounds} to the exact oracle gain \(\Delta_1(x_\star)\).}
\label{tab:si_s2_bounds}
\centering
\small
\setlength{\tabcolsep}{5pt}
\begin{tabular}{lrrr}
\hline
Motif & $\ell$ & Lower$/\Delta_1(x_\star)$ & Upper$/\Delta_1(x_\star)$\\
\hline
Residual & 0.06 & 1.000 & 26.0\\
Residual & 0.12 & 1.000 & 26.0\\
Residual & 0.20 & 0.978 & 25.6\\
Redundant & 0.06 & 0.607 & 42.4\\
Redundant & 0.12 & 0.605 & 43.6\\
Redundant & 0.20 & 0.598 & 44.2\\
Remote & 0.06 & 0.982 & 25.7\\
Remote & 0.12 & 0.923 & 26.4\\
Remote & 0.20 & 0.899 & 31.2\\
\hline
\end{tabular}
\end{table}

\paragraph*{Experiment~3}
The three-output design is
\[
\mathcal{X}_1
=
\left\{
\frac{i}{34}: i=0,\ldots,34
\right\},
\qquad
\mathcal{X}_2
=
\left\{
\frac{0.55\,i}{14}: i=0,\ldots,14
\right\},
\]
\[
\mathcal{X}_3
=
\left\{
0.82+\frac{0.18\,i}{11}: i=0,\ldots,11
\right\}.
\]
The true parameters are \(\ell=0.18\),
\[
\bmLambda
=
\begin{pmatrix}
1 & 0.85 & 0.65\\
0.85 & 1 & 0.80\\
0.65 & 0.80 & 1
\end{pmatrix},
\qquad
(\tau_1^2,\tau_2^2,\tau_3^2)=(0.02,0.03,0.03),
\]
with a common 200-point evaluation grid on \([0,1]\) and 15 replicates. Output~1 is dense over the full domain, output~2 covers the left half \([0,0.55]\), and output~3 is concentrated on the right edge \([0.82,1]\), so the three outputs are designed to induce different levels of borrowing potential.

For each output, we compute \(B_j\), average oracle gain and the change in mean log predictive density (\(\Delta\)MLPD) from the independent GP to the joint multi-output GP. Positive \(\Delta\)MLPD favours the joint model. Table~\ref{tab:si_s3_summary} also reports the latent partial coefficient of determination
\[
R_j^2
:=
1-\frac{1}{\Lambda_{jj}[\bmLambda^{-1}]_{jj}},
\]
which measures the fraction of variance in latent output \(j\) that is linearly explained by the remaining outputs through \(\bmLambda\).

\begin{figure}[htbp]
\centering
\includegraphics[width=0.45\textwidth]{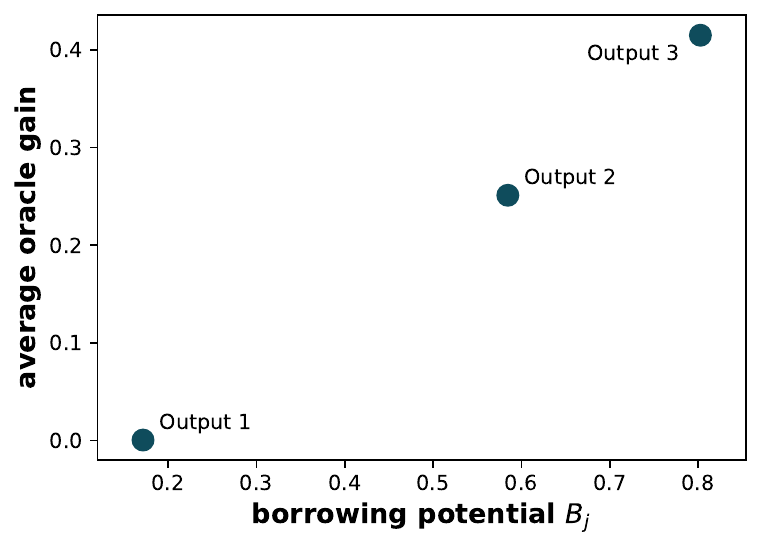}
\caption{Experiment~3: output-level borrowing potential \(B_j\) versus average oracle gain in the three-output design.}
\label{fig:si_s3_screening}
\end{figure}

Figure~\ref{fig:si_s3_screening} and Table~\ref{tab:si_s3_summary} show a clear screening pattern. The values of \(B_j\) increase from \(0.172\) through \(0.585\) to \(0.803\), and average oracle gain increases correspondingly from \(0.001\) through \(0.251\) to \(0.415\). The fitted \(\Delta\)MLPD follows the same ranking: outputs~2 and~3 have positive values of \(0.046\) and \(0.266\), respectively, while output~1 has a slightly negative value of \(-0.023\), consistent with its near-zero oracle gain leaving little room to offset estimation cost.

\begin{table}[htbp]
\caption{Summary of Experiment~3. The table reports \(B_j\), latent partial \(R_j^2\), average oracle gain and \(\Delta\)MLPD for each output.}
\label{tab:si_s3_summary}
\centering
\begin{tabular}{rrrrr}
\hline
Output & $B_j$ & $R_j^2$ & Avg. gain & $\Delta$MLPD\\
\hline
1 & 0.172 & 0.725 & 0.001 & -0.023\\
2 & 0.585 & 0.829 & 0.251 & 0.046\\
3 & 0.803 & 0.643 & 0.415 & 0.266\\
\hline
\end{tabular}
\end{table}

\paragraph*{Experiment~4}
The designs for outputs~1 and~3 are fixed at
\[
\mathcal{X}_1
=
\left\{
\frac{0.50\,i}{11}: i=0,\ldots,11
\right\},
\qquad
\mathcal{X}_3
=
\left\{
0.03+\frac{0.44\,i}{10}: i=0,\ldots,10
\right\},
\]
while the design for the auxiliary output varies by regime:
\[
\mathcal{X}_{2,\mathrm{rich}}
=
\left\{
\frac{0.48\,i}{13}: i=0,\ldots,13
\right\},
\]
\[
\mathcal{X}_{2,\mathrm{moderate}}
=
\left\{
0.50+\frac{0.40\,i}{11}: i=0,\ldots,11
\right\},
\]
\[
\mathcal{X}_{2,\mathrm{poor}}
=
\left\{
0.90+\frac{0.10\,i}{11}: i=0,\ldots,11
\right\}.
\]
The true parameters are
\[
\bmLambda
=
\begin{pmatrix}
1 & 0.85 & 0.45\\
0.85 & 1 & 0.25\\
0.45 & 0.25 & 1
\end{pmatrix},
\qquad
(\tau_1^2,\tau_2^2,\tau_3^2)=(0.10,0.03,0.03),
\]
and evaluation is on a 120-point grid over \([0,0.50]\). The reported summaries average over 15 replicates.

\subsubsection{Experiment~4 details}
\label{subsec:si_s4_details}

Experiment~4 keeps output~1 as the prediction target on the interval \([0,0.50]\), keeps output~3 fixed as a nearby auxiliary design, and moves output~2 from a rich auxiliary regime to a remote auxiliary regime. For each replicate, we estimate only the three pairwise output correlations by exact Gaussian likelihood, using the true kernel and noise specification. This setup is intended to isolate the mechanism governing estimation: \(W_{13}\) remains fixed, whereas \(W_{12}\) deteriorates sharply as auxiliary output~2 moves away from the design for the target output.

The correlation estimator is parameterised through unconstrained variables \((\theta_{12},\theta_{13},\theta_{23})\) mapped to a valid \(3\times 3\) correlation matrix by
\[
r_{12}=\tanh(\theta_{12}),
\qquad
r_{13}=\tanh(\theta_{13}),
\qquad
z_{23}=\tanh(\theta_{23}),
\]
\[
r_{23}
=
r_{12}r_{13}
+
\sqrt{(1-r_{12}^2)(1-r_{13}^2)}\,z_{23},
\]
followed by exact Gaussian likelihood maximisation over the resulting correlation matrix. The plug-in penalty is then computed from the numerical observed Fisher information in the \((\theta_{12},\theta_{13},\theta_{23})\)-parameterisation together with the numerical gradient of the posterior mean for the target output. For each prediction location, the comparison uses both this plug-in penalty and the realised estimation cost \((g_1^{\mathrm{mv}}(\bmx_\star;\widehat{\bmtheta})-g_1^{\mathrm{mv}}(\bmx_\star;\bmtheta_0))^2\), averaged over the evaluation grid. In Table~\ref{tab:si_s4_summary}, \emph{Avg.\ gain} denotes the average oracle gain over the evaluation grid, and all reported rows are then averaged again over simulation replicates within each regime.

Table~\ref{tab:si_s4_summary} summarises the resulting diagnostics for dependence recovery and the trade-off. The \emph{absolute error in correlation recovery} \(|\Delta R_{12}| = |\hat{R}_{12} - R_{12}|\) measures how well the main cross-output dependence parameter is recovered. The \emph{plug-in penalty} is the delta-method approximation in Eq.~\eqref{eq:local_net_benefit} averaged within each regime, the \emph{realised cost} is the corresponding squared deviation from the oracle joint predictor averaged within each regime, and the \emph{plug-in margin} is average oracle gain minus average plug-in penalty. The rich regime retains substantial \(1\text{--}2\) support and a clear positive plug-in margin. As output~2 moves away, the plug-in margin turns negative in both the moderate and poor regimes, indicating that the first-order estimation penalty exceeds the oracle gain. The realised cost remains slightly below the oracle gain in all regimes, consistent with the plug-in criterion being a conservative first-order approximation. Because the \(1\text{--}3\) channel is held fixed, the contribution of output~3 prevents complete collapse, but the contribution of the \(1\text{--}2\) channel to the net borrowing value becomes negligible.

\begin{table}[htbp]
\caption{Summary for Experiment~4. The \(1\text{--}3\) channel is held fixed while the \(1\text{--}2\) geometry varies across the rich, moderate, and poor regimes.}
\label{tab:si_s4_summary}
\centering
\small
\setlength{\tabcolsep}{4pt}
\begin{tabular}{lrrrrrr}
\hline
Regime & $W_{12}$ & $|\Delta R_{12}|$ & Avg. gain & Plug-in penalty & Realised cost & Plug-in margin\\
\hline
Rich & 48.37 & 0.103 & 0.0154 & 0.0102 & 0.0087 & 0.0052\\
Moderate & 5.83 & 0.227 & 0.0030 & 0.0044 & 0.0021 & -0.0014\\
Poor & 0.00 & 1.231 & 0.0022 & 0.0029 & 0.0021 & -0.0008\\
\hline
\end{tabular}

\end{table}

\subsubsection{Experiment~5 details}
\label{subsec:si_s5_details}

Experiment~5 revisits the comparison between univariate and multivariate kriging under a controlled change in design geometry. The data-generating model is a two-output separable multi-output GP on \([0,1]\) with Mat\'ern \(3/2\) kernel, lengthscale \(\ell=0.16\), output covariance
\[
\bmLambda
=
\begin{pmatrix}
1 & 0.70\\
0.70 & 1
\end{pmatrix},
\qquad
(\tau_1^2,\tau_2^2)=(0.01,0.01).
\]
The design for the target output has \(N_1=16\) equispaced points on \([0,1]\). The design for the auxiliary output also has \(N_2=16\) points and varies across three regimes:
\[
\mathcal{X}_{2,\mathrm{iso}}=\mathcal{X}_1,
\qquad
\mathcal{X}_{2,\mathrm{int}}
=
\left\{
\frac{i+1/2}{16}: i=0,\ldots,15
\right\},
\]
\[
\mathcal{X}_{2,\mathrm{sep}}
=
\left\{
1.25+\frac{i}{15}: i=0,\ldots,15
\right\}.
\]
Thus the isotopic regime has complete overlap, the interleaved regime has zero overlap but local auxiliary support across the target domain, and the separated regime has zero overlap with negligible auxiliary support on the target domain.

For each of 80 replicates, the latent process is drawn jointly on the training designs and on a 120-point evaluation grid over \([0,1]\). The multivariate predictor estimates only the cross-correlation \(\rho=\Lambda_{12}\) by exact Gaussian likelihood, using the true marginal kernel, marginal variance and noise variances. The one-dimensional optimisation maps an unconstrained scalar \(\theta\) to \(\rho=\tanh(\theta)\), and four starting values are used. The independent predictor uses the same marginal kernel and noise specification but sets \(\rho=0\). Predictive RMSE is computed against the held-out latent target \(f_1\), while MLPD is computed for held-out noisy responses \(y_1\), adding \(\tau_1^2\) to the latent predictive variance. The table in the main text reports \(\Delta\)RMSE \(=\mathrm{RMSE}(\mathrm{Ind.})-\mathrm{RMSE}(\mathrm{MV})\) and \(\Delta\)MLPD \(=\mathrm{MLPD}(\mathrm{MV})-\mathrm{MLPD}(\mathrm{Ind.})\).

The misspecification arm keeps the same three design regimes, fitted predictors, optimisation procedure and evaluation metrics, but changes the data-generating process. The latent outputs are generated from a two-component linear model of coregionalisation,
\[
f_j(x)
=
a_{j1}u_1(x)+a_{j2}u_2(x),
\qquad j=1,2,
\]
where \(u_1\) and \(u_2\) are independent zero-mean Gaussian processes with Mat\'ern~\(3/2\) kernels and lengthscales \(0.09\) and \(0.34\), respectively. The loading matrix is
\[
\mathbf{A}
=
\begin{pmatrix}
0.78 & 0.44\\
0.58 & 0.28
\end{pmatrix},
\]
and the observation noise variances remain \((0.01,0.01)\). Because the marginal and cross-covariance functions are different linear combinations of two kernels, no single separable covariance model is correctly specified. The fitted multivariate predictor nevertheless uses the same separable Mat\'ern~\(3/2\) working model as above and estimates only the working cross-correlation, while the independent comparator uses the same working marginal model with the cross-correlation set to zero. This arm therefore assesses whether the geometric diagnostics retain screening value under covariance misspecification, rather than whether the separable oracle bounds remain exact.

\begin{table}[htbp]
\caption{Experiment~5 misspecification arm. Data are generated from a two-component linear model of coregionalisation, while the fitted predictors use the same separable working model as the main Experiment~5 table. Entries in parentheses are standard errors.}
\label{tab:si_s5_lmc_misspec}
\centering
\begin{tabular}{lrrrr}
\hline
Regime & $B_1$ & $W_{12}$ & $\Delta$RMSE & $\Delta$MLPD\\
\hline
Isotopic & 0.000 & 51.4 & 0.003 (0.001) & 0.004 (0.006)\\
Interleaved & 0.264 & 52.6 & 0.024 (0.002) & 0.112 (0.013)\\
Separated & 0.000 & 0.1 & -0.000 (0.000) & -0.000 (0.000)\\
\hline
\end{tabular}
\end{table}

\subsubsection{Queueing simulation illustration details}
\label{subsec:si_queue_details}

The queueing illustration in Section~\ref{subsec:exp_queue} of the main text uses a stable \(M/M/1\) system with service rate \(\mu=1\) and traffic intensity \(\rho=\lambda/\mu\). The scalar simulation input is \(\rho\in[0.35,0.85]\). The two simulation outputs are the mean waiting time in queue and the server utilisation. Their steady-state means are
\[
W_q(\rho)=\frac{\rho}{1-\rho},
\qquad
U(\rho)=\rho.
\]
These analytic means are used only to define the held-out latent evaluation target and the output standardisation. Training observations are estimates from finite simulation runs.

For a run at input \(\rho\), interarrival times are sampled as exponential random variables with rate \(\lambda=\rho\) and service times as exponential random variables with rate \(1\). We simulate 4000 customers, discard the first 400 customers as warmup, and record the average queue waiting time and the fraction of time the server is busy over the retained horizon. Each output is centred and scaled using the analytic mean curve over the evaluation grid. The target output is \(W_q\), and the auxiliary output is \(U\).

The target design contains 14 traffic intensities equally spaced on \([0.35,0.85]\). The auxiliary utilisation design has three regimes:
\[
\mathcal{X}_{2,\mathrm{iso}}=\mathcal{X}_1,
\]
\[
\mathcal{X}_{2,\mathrm{int}}
=
\left\{
0.35+\left(i+\frac{1}{2}\right)\frac{0.50}{14}: i=0,\ldots,13
\right\},
\]
\[
\mathcal{X}_{2,\mathrm{sep}}
=
\left\{
0.05+\frac{0.20\,i}{13}: i=0,\ldots,13
\right\}.
\]
Thus the isotopic regime shares all input locations in the target design, the interleaved regime places utilisation runs between target runs in the operational traffic range, and the separated regime places utilisation runs at low traffic intensities outside the target range.

For each of 80 replications and each regime, we fit the two-output separable working model used in Experiment~5, estimating only the cross-output correlation by exact Gaussian likelihood. The working Mat\'ern~\(3/2\) lengthscale is \(0.11\), the working correlation start is centred around \(0.80\), and the observation noise variances are fixed at \(0.012\) for waiting time and \(0.004\) for utilisation on the standardised scale. The independent comparator uses the same marginal specification for the target output with cross-correlation fixed at zero. Predictive RMSE is computed against the analytic standardised \(W_q(\rho)\) curve on a 120-point grid over \([0.35,0.85]\); predictive density is evaluated after adding the target observation noise variance to the latent predictive variance.

\subsection{Dataset and Case Study}
\label{subsec:si_real}

\subsubsection{AQS dataset construction and preprocessing}
\label{subsec:si_aqs_details}

The case study uses the public 2024 national annual summary data from the EPA Air Quality System (AQS) and restricts attention to PM\(_{2.5}\) (parameter code 88101), ozone (44201), and NO\(_2\) (42602). We select state-level monitoring networks to ensure sufficient sample sizes for variational inference. The main text reports results for Texas (\(n_{\mathrm{PM}_{2.5}}=54\), \(n_{\mathrm{Ozone}}=73\), \(n_{\mathrm{NO}_2}=51\)), which exhibits nontrivial heterotopic structure with a mean pairwise overlap of \(0.60\) and PM\(_{2.5}\)/ozone and PM\(_{2.5}\)/NO\(_2\) overlaps below \(0.50\).

Within each state, repeated parameter occurrences at the same site are collapsed to one site-level observation by averaging latitude, longitude, and annual arithmetic mean over the constructed site identifier \((\texttt{state\_code},\texttt{county\_code},\texttt{site\_number})\). Responses are standardised separately within each pollutant to mean zero and unit variance, while coordinates are standardised jointly across all retained outputs using the pooled coordinate mean and standard deviation.

Each holdout split is generated at the site level. For every pollutant, approximately \(25\%\) of sites are assigned to test, subject to holding out at least two sites and leaving at least one training site. The predictive summaries in the main text average over 5 random site-level splits. Table~\ref{tab:si_aqs_geometry} records the pairwise overlap, directed proximity summaries, and fitted interaction masses \(W_{pq}\) for the Texas network. These values provide the quantitative geometry behind Figure~\ref{fig:aqs_geometry} in the main text.

\begin{table}[htbp]
\caption{Pairwise geometry summaries for the Texas case study. The table reports overlap, directed proximity summaries, and fitted interaction masses \(W_{pq}\) for the three pollutant pairs.}
\label{tab:si_aqs_geometry}
\centering
\begin{tabular}{lrrrrrr}
\hline
Pair & $\omega$ & $\pi_{p\to q}$ & $\pi_{q\to p}$ & $\tilde\pi_{p\to q}$ & $\tilde\pi_{q\to p}$ & $W_{pq}$\\
\hline
PM2.5-Ozone & 0.481 & 0.100 & 0.075 & 0.653 & 0.417 & 840.55\\
PM2.5-NO2 & 0.471 & 0.362 & 0.054 & 3.786 & 0.303 & 636.32\\
Ozone-NO2 & 0.843 & 0.151 & 0.012 & 1.575 & 0.076 & 957.19\\
\hline
\end{tabular}
\end{table}

To assess whether the same diagnostics remain interpretable beyond the Texas network, we extended the three-pollutant analysis to California and Colorado. These two states provide different complements to the example in the main text: California has the largest monitoring network (\(n_{\mathrm{PM}_{2.5}}=127\), \(n_{\mathrm{Ozone}}=160\), \(n_{\mathrm{NO}_2}=94\)) with high overlap, whereas Colorado has a smaller network (\(n_{\mathrm{PM}_{2.5}}=24\), \(n_{\mathrm{Ozone}}=51\), \(n_{\mathrm{NO}_2}=18\)) with lower overlap. Table~\ref{tab:si_aqs_compare_geometry} shows that the fitted interaction masses and normalised directed proximities continue to separate these regimes. Table~\ref{tab:si_aqs_compare_summary} shows that both \(\Delta\)RMSE and \(\Delta\)MLPD are positive for all outputs in California, while Colorado's results are mixed: the small NO\(_2\) network (\(n=18\)) benefits clearly from borrowing, ozone shows negative differences on both metrics, and PM\(_{2.5}\) is itself ambiguous, with a marginally positive \(\Delta\)MLPD but a slightly negative \(\Delta\)RMSE despite carrying the largest \(B_j\) in the state. This last case illustrates that \(B_j\) screens borrowing potential and need not order the realised gains in small, noisy networks. These additional runs reinforce the use of the AQS material as a diagnostic illustration rather than as validation of the asymptotic theory.

\begin{table}[htbp]
\caption{State comparison for pairwise geometry. The table reports overlap, normalised directed proximity summaries, and fitted interaction masses \(W_{pq}\) for California and Colorado.}
\label{tab:si_aqs_compare_geometry}
\centering
\small
\setlength{\tabcolsep}{4pt}
\begin{tabular}{llrrrr}
\hline
State & Pair & $\omega$ & $\tilde\pi_{p\to q}$ & $\tilde\pi_{q\to p}$ & $W_{pq}$\\
\hline
California & PM2.5-Ozone & 0.701 & 0.310 & 0.498 & 3683.61\\
California & PM2.5-NO2 & 0.681 & 1.115 & 0.229 & 2215.66\\
California & Ozone-NO2 & 0.862 & 1.031 & 0.036 & 2825.65\\
Colorado & PM2.5-Ozone & 0.333 & 0.648 & 1.122 & 116.39\\
Colorado & PM2.5-NO2 & 0.500 & 1.706 & 0.856 & 46.06\\
Colorado & Ozone-NO2 & 0.889 & 1.705 & 0.010 & 82.65\\
\hline
\end{tabular}

\end{table}

\begin{table}[htbp]
\caption{State comparison for output-level diagnostics and held-out prediction. The table reports \(n_j\), \(B_j\), \(\Delta\)MLPD, and \(\Delta\)RMSE for California and Colorado, where \(\Delta\)RMSE =RMSE(Ind.) - RMSE(MV).}
\label{tab:si_aqs_compare_summary}
\centering
\small
\setlength{\tabcolsep}{4pt}
\begin{tabular}{llrrrr}
\hline
State & Output & $n_j$ & $B_j$ & $\Delta$MLPD & $\Delta$RMSE\\
\hline
California & PM2.5 & 127 & 0.370 & 0.011 & 0.020\\
California & Ozone & 160 & 0.189 & 0.001 & 0.003\\
California & NO2 & 94 & 0.423 & 0.084 & 0.050\\
Colorado & PM2.5 & 24 & 0.602 & 0.007 & -0.010\\
Colorado & Ozone & 51 & 0.215 & -0.183 & -0.034\\
Colorado & NO2 & 18 & 0.559 & 0.116 & 0.041\\
\hline
\end{tabular}

\end{table}

\clearpage

\clearpage
\bibliographystyle{unsrtnat}
\bibliography{references}

\end{document}